\newtheorem{Assumption}{Assumption}
\newtheorem{Algorithm}{Algorithm} 
\newtheorem{Definition}{Definition}
\newtheorem{Lemma}{Lemma}
\newtheorem{Problem}{Problem}
\newtheorem{Remark}{Remark}
\newtheorem{Theorem}{Theorem}
\newtheorem{Corollary}{Corollary}
\newcommand{\bs}{\begin{small}}
\newcommand{\bsc}{\end{small}}
\newcommand{\txtblue}{\textcolor{black}}
\newcommand{\diag}{\text{diag}}
\newcommand{\bgbox}{\vspace{0.3cm} \begin{mdframed}}
\newcommand{\edbox}{\end{mdframed} \vspace{0.2cm}}
\begin{document}
\title{\txtblue{MIMO Amplify-and-Forward Precoding for Networked Control Systems}}

\author[1]{Fan Zhang}
\author[2]{Vincent K. N. Lau}
\author[1]{Gong Zhang}
\affil[1]{Hong Kong Research Center, Huawei Technologies Co. Ltd. (\{zhang.fan2, nicholas.zhang\}@huawei.com)}
\affil[2]{Department of ECE,  Hong Kong University of Science and Technology (eeknlau@ust.hk)}

\maketitle

\begin{abstract}
In this paper, we consider a MIMO networked control system (NCS) \txtblue{in which a sensor amplifies and forwards  the observed MIMO   plant state  to a remote controller} via a MIMO fading channel. We focus on the MIMO \txtblue{amplify-and-forward (AF)} precoding design at the sensor to minimize a weighted average state estimation error at the remote controller  subject to an average communication power gain constraint of the sensor. The MIMO  \txtblue{AF} precoding design is formulated as  an infinite horizon average cost Markov decision process (MDP). To deal with the curse of dimensionality associated with the MDP, we propose a novel continuous-time perturbation approach and derive an asymptotically optimal closed-form priority function for the MDP.  Based on this, we derive  a closed-form first-order optimal  dynamic MIMO \txtblue{AF} precoding solution, and the solution has an \emph{event-driven control} structure. Specifically, the sensor activates the strongest eigenchannel to deliver a dynamically weighted combination of the plant  states  to the controller when the accumulated state estimation error exceeds a dynamic threshold. We further establish technical conditions for  ensuring the stability of the MIMO NCS, and show that the mean square error  of the plant state estimation   is $\mathcal{O}\left(\frac{1}{\bar{F}}\right)$, where $\bar{F}$ is the  \txtblue{maximum AF gain} of the MIMO \txtblue{AF} precoding. 
\end{abstract}

\vspace{-0.4cm}

\section{Introduction}

\subsection{Background}
Networked control systems (NCSs) have drawn great attention in recent years due to their  growing applications in industrial  automation, smart transportation, remote robotic control, etc. \cite{coup1}.   In this paper, we consider an NCS consisting of a \emph{multiple-input multiple-output (MIMO) dynamic system} (a potentially unstable plant), a \emph{multiple-antenna sensor} and a  \emph{multiple-antenna controller}, and they form a closed-loop control   as illustrated in Fig. \ref{artNSC}. In the NCS, the sensor senses the  state of the MIMO plant and transmits  the plant state  to the remote controller over a MIMO fading channel so as to stabilize the MIMO plant. The performance of the NCS is closely related to  the communication  resource allocation (e.g., power and precoding control) at the sensor over the MIMO channel. There are many existing works on  MIMO precoding   for  multi-antenna communication  systems. In \cite{mimo1},  the authors analyze the achievable capacity region of a MIMO point-to-point system.   In \cite{selec1}, \cite{optMISO3}, antenna selection, dynamic link adaption and joint precoding are considered to increase the data rate  or minimize the  mean squared error (MSE) for MIMO systems. However, these solution frameworks focus on  optimizing physical layer  performance (e.g.,  throughput, MSE), and they are not directly related to the end-to-end performance of the NCS.  The optimization objective of the MIMO precoding problem in an NCS should be directly related to the NCS performance. Furthermore, the precoding  should be adaptive to both the channel fading matrix (which reveals good transmission opportunities) and the MIMO plant state information (which reveals the urgency of the information streams).

NCS is in fact a very challenging problem because it embraces both \emph{information theory} (for modeling the dynamics of the physical channel) and \emph{control theory} (for modeling the dynamic systems under imperfect state feedback control).  Most of the  existing works focus on the study of stabilization of an NCS under various information structures or communication scenarios (see the survey papers \cite{coup1},  \cite{survey2} and the references therein). \txtblue{For example,  in [6, Chap. 1--9],  the authors   focus on designing   encoding and decoding schemes  at the sensor  and  the controller under various types of information structures to achieve  plant stabilization.} In  \cite{dmc2}, the authors consider noiseless digital channel between the sensor and the controller and  give a lower bound on the channel rate for ensuring the NCS stability.   In  \cite{packetloss2}, the authors model the communication channel as a packet loss   erasure channel and give a lower bound on the successful transmission probability to ensure the NCS stability. In \cite{awgn1},  \txtblue{\cite{ref1}}, the authors study the stabilization of NCS over additive white Gaussian noise (AWGN) channels and obtains a minimum channel capacity requirement  for stabilization. In \cite{rate2}, the authors consider memoryless  Gaussian channel between the sensor and the controller and establish  a sequential rate distortion  framework to design encoder and decoder in order to achieve NCS stability. In \cite{Qli2}, the authors study  multi-input networked stabilization with a fading channel between the controller and plant. In all these works, the key focus is  on achieving the NCS  stability, which is only a weak form of control performance. \txtblue{In [6, Chap. 10-12], the authors focus on stochastic optimization problems for NCSs. However, the per-stage cost only depends on the plant state and plant control actions, which fails to capture the communication cost. Furthermore, the stochastic optimizations therein  are solved using numerical methods in dynamic programming theories \cite{mdpcref2}.} There are also some  works on communication resource optimization for NCSs.   In \cite{nodualeffect}, \cite{onoff2}, a  sensor scheduling scheme is proposed to minimize the linear quadratic Gaussian (LQG) cost (reflecting the plant performance) and the communication cost (penalizing the information exchange between the sensor and the controller). However, the communication channel  in \cite{nodualeffect} is a simplified on-off error-free  model, and the result therein cannot be extended to the MIMO fading channels.    In \cite{mdp1}, the authors consider sensor power control by solving a discrete-time MDP formulation which minimizes the average state estimation error and average power cost. \txtblue{In \cite{ref2}, the authors consider similar  control and communication optimization by solving a continuous-time infinite horizon discounted total cost problem.} The optimal solutions in \cite{mdp1} and \cite{ref2} are obtained  by solving the associated optimality equations, which is well-known to be very challenging \cite{mdpcref2}. The  optimality equations  therein are  solved using  the conventional  numerical value iteration algorithm (VIA)  \cite{mdpcref2}, \cite{hjbs}, which  suffers from slow convergence and lack of insights.  In addition, \cite{mdp1} assumes  single-antenna fading channels and the solution cannot be extended to the MIMO fading channels.  \txtblue{Moreover, the works that consider  MIMO transmission between the sensor and the controller in  NCSs propose  either (i) static precoding where the precoder is not adaptive to the system states (e.g., the encoder and decoder structure in  \cite{rate2} and \cite{ref3} only depends on the variance of the Gaussian source and variance of the measurement   noise),  (ii) dynamic precoding but solutions are based on numerical solutions (e.g., \cite{onoff2}, \cite{mdp1}), or (iii) dynamic precoding based on closed-form heuristic schemes (e.g., \cite{para1}, \cite{para2}). On the other hand, there are some papers  (e.g.,   \cite{ci2}--\cite{ci4}) that consider the optimization  in NCSs from the perspective of the team decision problem, where structural coding and encoding schemes are proposed and the schemes  are based on a sufficient statistic that is obtained by compressing the common information at the distributed decision makers. At one step of solving the team decision problem, the coordination strategy decision problem  (c.f., Chap. 12.3  of \cite{book1}) is formulated as a (partially observed) Markov decision process (POMDP), and the MDP/POMDP is solved using  numerical value iteration algorithms \cite{mdpcref2}, \cite{hjbs} with huge computational complexity.}

 \vspace{-0.4cm}
\subsection{Our Contribution}

In this paper, we consider a MIMO NCS where a sensor delivers the MIMO plant states to a remote controller over a MIMO wireless fading channel using the \txtblue{amplify-and-forward (AF)} precoding  as illustrated in Fig. \ref{artNSC}. Using the separation principle of control and communications \cite{rate2}, \cite{dualeffect}, the  MIMO \txtblue{AF} precoding is chosen to  minimize the  average weighted MIMO plant state estimation error at the remote controller   subject to the average communication  power gain constraint of the sensor. Specifically, the  MIMO \txtblue{AF}  precoding problem is formulated as  an infinite horizon average cost  MDP.  To address the challenge of curse of dimensionality and lack of design insights for numerical solutions to MDP,  we   propose a novel continuous-time perturbation approach and obtain an asymptotically optimal closed-form priority function for solving the associated optimality condition of the MDP. Based on the structural properties of the MIMO \txtblue{AF} precoding, we show that the solution has an  \emph{event-driven control} structure.  Specifically, the sensor  only needs to activate the \emph{strongest eigenchannel} to  transmit a dynamically weighted combination of the MIMO plant states  over the MIMO wireless channel  when the accumulated state estimation error exceeds a dynamic trigger threshold \txtblue{(which depends on the instantaneous  plant state estimation error and the instantaneous MIMO fading channel matrix as well as the state estimation error covariance). Furthermore,  we establish the   closed-form first-order optimal characterization of the dynamic trigger threshold via the closed-form priority function.} In addition, we  derive sufficient conditions regarding the communication resource needed to stabilize the MIMO plant in the MIMO NCS.   We show that the achievable MSE of the plant state estimation is  $\mathcal{O}\left(\frac{1}{\bar{F}}\right)$, where $\bar{F}$ is the \txtblue{maximum AF gain}  of the MIMO \txtblue{AF} precoding. Finally, we compare the  proposed scheme with various state-of-the-art baselines and show that significant performance gains can be achieved with low complexity.

\emph{Notations:}  Bold font is used to denote matrices and vectors. $\mathbf{A}^T$, $\mathbf{A}^\dagger$ and $\mathbf{A}^\ddagger$  denote the transpose, conjugate transpose and element-wise complex conjugate of  $\mathbf{A}$ respectively.    $\mathrm{Tr}\left(  \mathbf{A}\right)$ represents the trace of   $\mathbf{A}$. $\mathbf{I}$ represents  identity matrix with appropriate dimension. $\|\mathbf{A}\|_F$ denotes the Frobenius norm of   $\mathbf{A}=[a_{kl}]$. $\mu_{max}(\mathbf{A})$   represents the largest  eigenvalue of  a symmetric matrix $\mathbf{A}$. $\|\mathbf{A}\|$ represents the Euclidean norm of a vector $\mathbf{A}$. $|x|$ represents the absolute value of a scaler $x$.   $\mathbb{S}^n$ ($\mathbb{S}_+^n$) represents the set of $n \times n$ dimensional  (positive definite) symmetric matrices. $\nabla_\mathbf{x} f(\mathbf{x})$ denotes the column gradient vector with the $k$-th element being  $\frac{\partial f(\mathbf{x})}{\partial x_k}$. $\nabla_\mathbf{x}^2 f(\mathbf{x})$ denotes the Hessian matrix of $f(\mathbf{x})$. $f\left(x\right)=\mathcal{O}\left(g\left(x \right) \right)$ as $x\rightarrow a$ means $\lim_{x \rightarrow a}\frac{f(x)}{g(x)}<\infty$.   $\text{Re}\{ x \}$ represents the real part of $x$. $\mathbf{x}\sim\mathcal{N}(0, \mathbf{X}$) ($\mathbf{x}\sim\mathcal{CN}(0, \mathbf{X})$) means that the real-valued (complex-valued) random variable $\mathbf{x}$ is circularly-symmetric Gaussian distributed with zero mean and covariance $\mathbf{X}$. Denote $B\setminus A=\{x\in B|x\notin A\}$.
\begin{figure}[t]
\centering
  \includegraphics[width=3.2in]{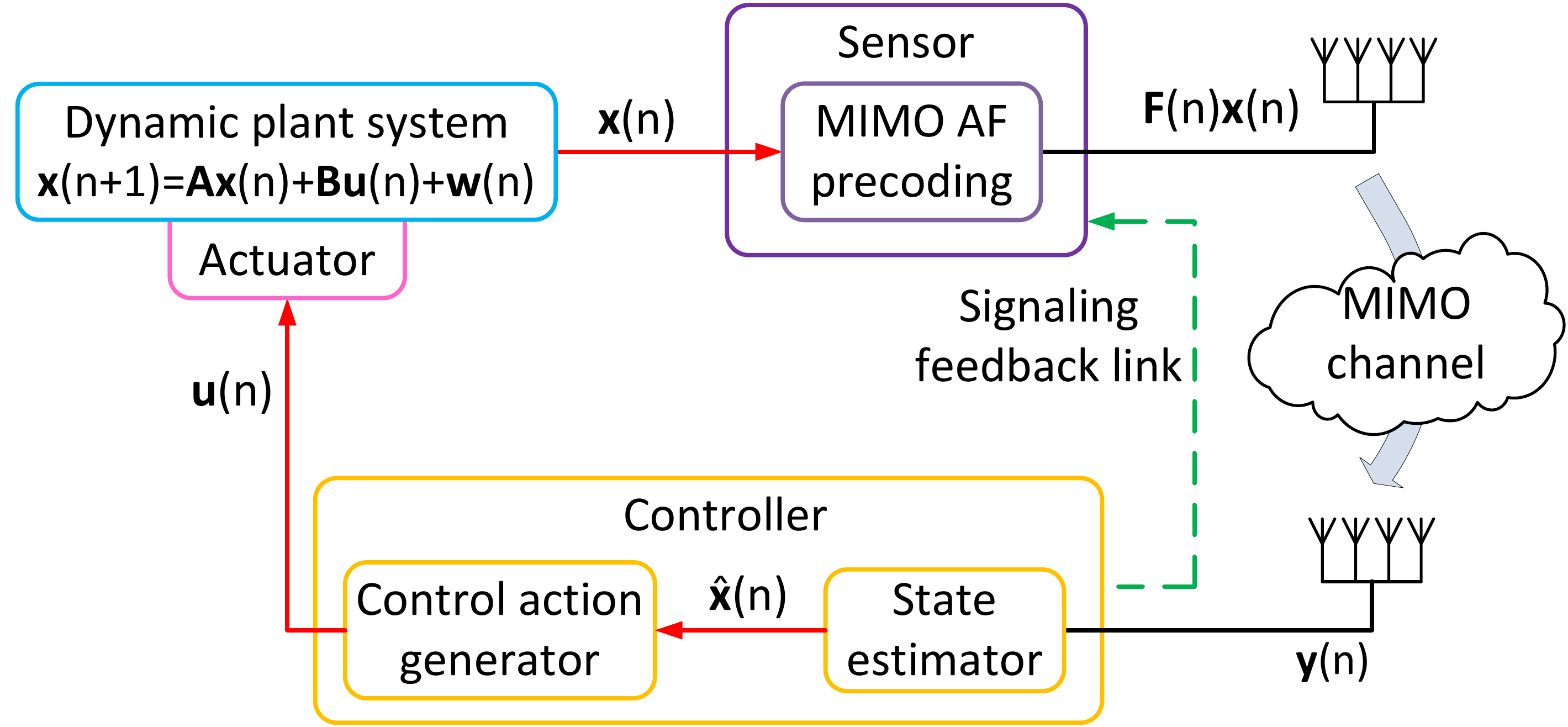}
  \caption{A typical architecture of  a MIMO NCS with MIMO \txtblue{amplify-and-forward (AF)} precoding  and analog state transmission over the MIMO fading channel.}\vspace{-0.6cm}
  \label{artNSC}
\end{figure}

\section{System Model}	\label{ncssystem}
Fig. \ref{artNSC} shows a MIMO networked control system (NCS), which consists of a MIMO plant, a multiple-antenna sensor and a multiple-antenna controller, and they form a closed-loop control.  Furthermore, we consider a slotted system, where the time dimension is partitioned into decision slots indexed by $n$ with slot duration $\tau$.  The  sensor has perfect state observation of the MIMO plant state $\mathbf{x}(n)$ at any time slot $n$.  The  controller is  geographically separated from the sensor, and there is  a MIMO  wireless  channel connecting them. At time slot $n$,  the sensor transmits  $\mathbf{x}(n)$   to the remote controller over a wireless MIMO fading channel using the MIMO \txtblue{amplify-and-forward} precoding  $\mathbf{F}(n)$. The received signal at the controller is $\mathbf{y}(n)$, which  is  passed to the state estimator at the controller to obtain a state estimate $\hat{\mathbf{x}}(n)$ based on the local information. Then, $\hat{\mathbf{x}}(n)$ is passed to the control action generator  to generate a control action $\mathbf{u}(n)$. The actuator which is co-located with the plant uses control action $\mathbf{u}(n)$ for plant actuation.

Such an NCS with a wireless fading channel covers a lot of practical application scenarios. For example, in  intelligent automobiles  \cite{automobile}  the sensors (e.g., air bag sensor, fuel pressure sensor, engine sensor) are located  all over the vehicle body and collect the real-time information that reflects the  operation conditions. This information is sent to the central processor inside the vehicle body. The processor generates control actions that control various subsystems of the vehicle.  
\vspace{-0.3cm}

\subsection{Stochastic MIMO Dynamic System }
We consider a continuous-time stochastic  plant system with dynamics $\dot{\mathbf{x}}(n)=\widetilde{\mathbf{A}}\mathbf{x}(n)+\widetilde{\mathbf{B}}\mathbf{u}(n)+\widetilde{\mathbf{w}}(n)$, $t\geq0$,  $\mathbf{x}(0)=\mathbf{x}_0$, where $\mathbf{x}(n) \in \mathbb{R}^{L \times 1}$ is the plant state process, $\mathbf{u}(n) \in \mathbb{R}^{M \times 1}$ is the plant control action,  $\widetilde{\mathbf{A}}\in \mathbb{R}^{L\times L}$, $\widetilde{\mathbf{B}} \in \mathbb{R}^{L\times M}$, and $\widetilde{\mathbf{w}}(n) \sim \mathcal{N}(0, \widetilde{\mathbf{W}})$ is an additive plant disturbance with  zero mean and covariance $\widetilde{\mathbf{W}} \in \mathbb{R}^{L\times L}$.   Without loss of generality, we assume $\widetilde{\mathbf{W}}$ is diagonal\footnote{For non-diagonal  $\widetilde{\mathbf{W}}$, we can pre-process the plant  using the \emph{whitening transformation} procedure \cite{white}. \txtblue{Specifically, let the  eigenvalue decomposition  of $\widetilde{\mathbf{W}}$ be $\mathbf{M}\widetilde{\mathbf{W}}\mathbf{M}^T=\mathbf{T}$, where $\mathbf{M}$ is a unitary matrix and $\mathbf{T}$ is diagonal. We have $\mathbf{x}_\mathbf{M}(n+1)=\mathbf{A}_\mathbf{M}\mathbf{x}_\mathbf{M}(n)+\mathbf{B}_\mathbf{M}\mathbf{u}(n)+\mathbf{w}_\mathbf{M}(n)$, where $\mathbf{x}_\mathbf{M}=\mathbf{M}\mathbf{x}$, $\mathbf{A}_\mathbf{M}=\mathbf{M}\mathbf{A}\mathbf{M}^T$, $\mathbf{B}_\mathbf{M}=\mathbf{M}\mathbf{B}$,  $\mathbf{w}_\mathbf{M}=\mathbf{M}\mathbf{w}$ and  $\mathbf{E}\big[\mathbf{w}_\mathbf{M}\mathbf{w}_\mathbf{M}^T\big]=\mathbf{T}$. Therefore, the  optimization in the NCS  based on the original  plant state $\mathbf{x}$ can be  transformed to  an equivalent optimization based on the transformed plant state  $\mathbf{x}_\mathbf{M}$ with diagonal plant noise covariance.}}. Since the sensor  samples the plant state once per time slot (with duration  $\tau$), the state dynamics of the sampled discrete-time stochastic  plant system  is given by  \cite{sampleplant}
\begin{align}	\label{plantain}
	\mathbf{x}(n+1)=\mathbf{A}\mathbf{x}(n)+\mathbf{B}\mathbf{u}(n)+\mathbf{w}(n), \quad  n=0,1, 2 \dots
\end{align}
where $\mathbf{A}=\exp( \widetilde{\mathbf{A}}\tau )$, $\mathbf{B}= \widetilde{\mathbf{A}}^{-1}\big(\exp( \widetilde{\mathbf{A}}\tau )-\mathbf{I}\big)\widetilde{\mathbf{B}}$, and $\mathbf{w}(n) = \int_0^\tau	\exp( \widetilde{\mathbf{A}}s ) \widetilde{\mathbf{w}} ((n+1)\tau-s)d s $ is a  random noise with zero mean and covariance $\mathbf{W} = \int_0^\tau \exp( \widetilde{\mathbf{A}}s ) \widetilde{\mathbf{W}}  \exp( \widetilde{\mathbf{A}}s )d s$.  We have the following assumptions on the plant model:
\begin{Assumption}[Stochastic Plant Model]	\label{assumpstabl}
	We assume that the plant system $\big(\mathbf{A}, \mathbf{B}\big)$ is controllable.~\hfill\IEEEQED
\end{Assumption}

\vspace{-0.3cm}

\subsection{MIMO Wireless  Channel Model}
The communication channel between the sensor and the controller is modeled  as a MIMO wireless fading channel. We assume that the sensor is equipped with $N_t$ antennas. Using multiple-antenna techniques, the sensor can deliver $L$ parallel plant state  streams to the receiver through spatial multiplexing.  Let $\mathbf{F} \in \mathbb{C}^{N_t \times L}$ be the MIMO \txtblue{amplify-and-forward\footnote{\txtblue{Note that we consider the AF precoding due to its computational simplicity. In \cite{ncsmimo2}, \cite{fd1}, it has also been shown that a static AF precoding is optimal in the sense that it achieves the necessary condition for stability of  linear time-invariant  systems.}} (AF)}  precoding matrix at the sensor.  The controller is equipped with $N_r$ antennas and we assume $L\leq \min\{N_t, N_r\}$ \cite{datastream}. The received signal $\mathbf{y} \in \mathbb{C}^{N_r \times 1}$ at the controller is given by
\begin{equation}	\label{recvsig}
	\mathbf{y}(n)=  \mathbf{H}(n)\mathbf{F}(n) \mathbf{x}(n)+ \mathbf{z}(n)
\end{equation}
where $\mathbf{H}(n)\in \mathbb{C}^{N_r \times N_t}$ is the channel fading matrix (CSI) from the sensor to the controller and  $\mathbf{z}(n)   \sim \mathcal{CN}(0, \mathbf{I})$ is an additive channel noise.   Furthermore, we have the following assumptions on the CSI $\mathbf{H}(n)$.
\begin{Assumption}	[MIMO Channel  Model]	\label{CSIassum}	
	$\mathbf{H}(n)$ remains constant within each decision slot  and is  i.i.d. over  slots. Specifically, each element of $\mathbf{H}(n)$ follows a complex Gaussian distribution with zero mean and unit variance.~\hfill\IEEEQED
\end{Assumption}

\vspace{-0.3cm}
\subsection{Information Structures at the Sensor and the Controller}	\label{informs}
Let $a_0^n=\left\{a(0), \dots, a(n)\right\}$ denote the history of the realizations of variable $a$ up to time $n$. The knowledge at the sensor and the controller at time slot $n$ are represented by the \emph{information structures} $I_S(n)$ and $I_C(n)$ which are given below, respectively:
\txtblue{\begin{align}
	I_S(n)&=\big\{\underbrace{\mathbf{x}_0, \mathbf{w}_0^{n-1}}_{\text{plant-related states}}, \underbrace{\mathbf{H}_0^{n},  \mathbf{z}_0^{n-1}}_{\text{com.-related states}}\big\},	\notag	\\
	I_C(n)&= \big\{\underbrace{\mathbf{u}_0^{n-1}}_{\text{plant-related states}}, \underbrace{\mathbf{E}_0^{n}, \mathbf{y}_0^{n}}_{\text{com.-related states}}\big\},\quad n=1, 2, \dots	\notag
\end{align}}and $I_S(0)=\left\{\mathbf{x}_0, \mathbf{H}(0)\right\}$ and $I_C(0)=\left\{\mathbf{E}(0), \mathbf{y}(0)\right\}$ and  we denote $\mathbf{E}(n) \triangleq \mathbf{H}(n)\mathbf{F}(n)$.  There are several observations on the information structures  at the sensor and the controller:

\begin{Remark}[Observations on  $I_S(n)$ and $I_C(n)$]	\
\begin{itemize}
	\item \txtblue{For  $I_S(n)$ at the sensor, $\mathbf{x}_0$ is the initial plant state, $\mathbf{w}_0^{n-1}$ can be obtained using $\big(\mathbf{x}_0^n, \mathbf{u}_0^{n-1}\big)$ (according to (\ref{plantain})), which are locally available at the sensor,   $\mathbf{H}_0^{n}$ can be obtained by the uplink training pilots from the controller due to   the reciprocity property of the wireless channel \cite{uplink},  $\mathbf{z}_0^{n-1}$ can be obtained using $\big(\mathbf{x}_0^{n-1},  \mathbf{E}_0^{n-1},   \mathbf{y}_0^{n-1}\big)$ (according to (\ref{recvsig})), where   $\mathbf{F}(n)$ in $\mathbf{E}(n)=\mathbf{H}(n)\mathbf{F}(n)$ is the locally generated precoding action, and $\mathbf{y}_0^{n-1}$  can be obtained by     causal   feedback from the controller as illustrated in Fig. \ref{artNSC}.}  Therefore, the information  in $I_S(n)$ can be obtained locally at the controller. We will discuss the implementation considerations in Section \ref{implantsdsd} regarding the associated signaling feedback from the controller to the sensor.
	\item For $I_C(n)$ at the controller, $\mathbf{u}_0^{n-1}$ are the past plant control actions, $\mathbf{E}_0^{n}$ can be locally measured using the  dedicated  pilots from the sensor \cite{training}, and $\mathbf{y}_0^{n}$ are the received signals  over the wireless fading channel. Therefore, the information  in $I_C(n)$ can be obtained locally at the controller. 
	~\hfill~\IEEEQED
\end{itemize}
\end{Remark}

\section{MIMO  AF Precoding Problem Formulation}

\txtblue{In this section, we first   define the MIMO  AF precoding policy and establish the no dual effect property in our NCS. Next, we  give the optimal plant control policy based on the no dual effect property, which is  the certainty equivalent (CE) controller.}  We then  formulate the  MIMO  \txtblue{AF} precoding  problem for the MIMO NCS and utilize the special  problem structure to derive the optimality conditions. 
\vspace{-0.3cm}
\subsection{MIMO \txtblue{AF} Precoding Policy and Optimal CE Controller} 
Let $\mathcal{F}_S(n)=\sigma\left(\left\{I_S(m):  m\in [0,n]\right\}\right)$ be the minimal $\sigma$-algebra containing the set $\left\{I_S(m):  \right. \\  \left. m\in[0,n] \right\}$ and $\left\{\mathcal{F}_S(n)\right\}$ be the associated \emph{filtration}  at the sensor.  At time slot $n$, the sensor determines the MIMO  \txtblue{AF} precoding  action $\mathbf{F}(n)$  according to the following  policy:
\begin{Definition}	[MIMO  \txtblue{AF} Precoding  Policy]	\label{powerpoli}
	A   \emph{MIMO  \txtblue{AF} precoding  policy} $\Omega$ for the sensor is $\mathcal{F}_S(n)$-adapted at time slot $n$, meaning that $\mathbf{F}(n)$ is adaptive to all the available information  at the sensor up to time slot $n$ (i.e., $\left\{I_S(m): m\in [0,n]\right\}$). Furthermore,  the precoding action $\mathbf{F}(n)$ satisfies the following \txtblue{AF gain constraint} of the sensor, i.e., $\mathrm{Tr}\left(\mathbf{F}^\dagger (n)\mathbf{F}(n)\right)\leq \bar{F}$ for all $n$, \txtblue{where $\bar{F}$ is the maximum AF gain of the sensor}.~\hfill~\IEEEQED
\end{Definition}

\txtblue{As indicated in  \cite{onoff2} and \cite{mdp1}, the  joint communication  and plant control optimization problem is challenging,  because the design of the communication policy and the plant control policy  are coupled together\footnote{\txtblue{The coupling is because the communication control action will affect the state estimation accuracy at the controller, which will in turn affect the plant state evolution \cite{onoff2}, \cite{mdp1}.}}.  However, by establishing   the no dual effect property (e.g., \cite{rate2}, \cite{dualeffect}), we can obtain the optimal plant control policy for the joint optimization problem, which  is given by the  CE controller.  Specifically, let $\hat{\mathbf{x}}(n)=\mathbb{E}\big[\mathbf{x}(n)\big|I_C(n)\big]$ be the  plant state estimate at the controller and  $\boldsymbol{\Delta}(n)=\mathbf{x}(n)-\hat{\mathbf{x}}(n)$  be the state estimation error. The no dual effect property is established as follows:
\begin{Lemma}	[No Dual Effect Property]	\label{lemmadual}
	Under the MIMO AF precoding policy in Definition \ref{powerpoli}, we have the following no dual effect property in our NCS:
	\begin{align}
		\mathbb{E}\big[\boldsymbol{\Delta}^T(n)\boldsymbol{\Delta}(n)\big| I_C(n)\big]=\mathbb{E}\big[\boldsymbol{\Delta}^T(n)\boldsymbol{\Delta}(n)\big| I_S(n)\big], \quad \forall n	\label{nodualeffecrdef}
	\end{align}	
\end{Lemma}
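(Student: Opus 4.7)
The plan is to establish the no dual effect via the classical linear decomposition argument, exploiting the separability of the plant dynamics into a noise-driven part and a control-driven part. The guiding intuition is that neither conditional second moment in the claim depends on the realized past controls $\mathbf{u}_0^{n-1}$, and both should collapse to the same control-free error covariance governed by the effective channel sequence $\mathbf{E}_0^n$.

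First, iterating (\ref{plantain}) gives $\mathbf{x}(n)=\tilde{\mathbf{x}}(n)+\mathbf{c}(n)$, where $\tilde{\mathbf{x}}(n)=\mathbf{A}^n\mathbf{x}_0+\sum_{k=0}^{n-1}\mathbf{A}^{n-1-k}\mathbf{w}(k)$ is noise-driven and $\mathbf{c}(n)=\sum_{k=0}^{n-1}\mathbf{A}^{n-1-k}\mathbf{B}\mathbf{u}(k)$ is control-driven. Because $\mathbf{u}_0^{n-1}$ is measurable under both filtrations (explicitly so in $I_C(n)$, and reconstructible from the CE controller inside $I_S(n)$), $\mathbf{c}(n)$ is deterministic under either conditioning. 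Pulling $\mathbf{c}(n)$ outside the conditional expectation yields $\hat{\mathbf{x}}(n)=\mathbb{E}[\tilde{\mathbf{x}}(n)\,|\,I_C(n)]+\mathbf{c}(n)$ and hence $\boldsymbol{\Delta}(n)=\tilde{\mathbf{x}}(n)-\mathbb{E}[\tilde{\mathbf{x}}(n)\,|\,I_C(n)]$, eliminating explicit dependence on controls. Rewriting (\ref{recvsig}) as $\mathbf{y}(k)=\mathbf{E}(k)\tilde{\mathbf{x}}(k)+\mathbf{E}(k)\mathbf{c}(k)+\mathbf{z}(k)$ and introducing the control-corrected observation $\tilde{\mathbf{y}}(k)\triangleq\mathbf{y}(k)-\mathbf{E}(k)\mathbf{c}(k)=\mathbf{E}(k)\tilde{\mathbf{x}}(k)+\mathbf{z}(k)$, the affine shift by the $I_C(n)$-measurable term $\mathbf{E}(k)\mathbf{c}(k)$ leaves the conditioning sigma-algebra unchanged, so $\mathbb{E}[\tilde{\mathbf{x}}(n)\,|\,I_C(n)]=\mathbb{E}[\tilde{\mathbf{x}}(n)\,|\,\tilde{\mathbf{y}}_0^n,\mathbf{E}_0^n]$.

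Under the linear-Gaussian structure, this conditional mean is realized by a Kalman filter whose error covariance $\mathbf{P}(n)$ satisfies a Riccati recursion driven solely by $\mathbf{E}_0^n$ and the plant/channel noise covariances, giving $\mathbb{E}[\boldsymbol{\Delta}^T(n)\boldsymbol{\Delta}(n)\,|\,I_C(n)]=\mathrm{Tr}(\mathbf{P}(n))$, a quantity that is $\sigma(\mathbf{E}_0^n)$-measurable. Since $\mathbf{E}_0^n=\mathbf{H}_0^n\mathbf{F}_0^n$ is also $I_S(n)$-measurable (the sensor generates $\mathbf{F}_0^n$ and observes $\mathbf{H}_0^n$), repeating the same control-free Kalman analysis from the sensor's vantage point likewise produces $\mathbb{E}[\boldsymbol{\Delta}^T(n)\boldsymbol{\Delta}(n)\,|\,I_S(n)]=\mathrm{Tr}(\mathbf{P}(n))$, which closes the argument. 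I expect the main obstacle to be this last equality: one must verify that the additional realized noises $(\mathbf{w}_0^{n-1},\mathbf{z}_0^{n-1})$ available at the sensor do not alter the quadratic error statistic beyond what is already encoded in $\mathbf{E}_0^n$. The cleanest route I foresee is to represent $\boldsymbol{\Delta}(n)$ through the innovation sequence of $\tilde{\mathbf{y}}_0^n$ and invoke the Gaussian projection theorem together with the i.i.d.\ channel noise assumption to show that the sensor-side conditioning on the past noises is orthogonal to $\boldsymbol{\Delta}(n)$, collapsing both sides to the same deterministic trace.
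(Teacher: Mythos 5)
Your first two steps---splitting $\mathbf{x}(n)$ into a noise-driven and a control-driven part, pulling the control term out of the conditional mean, and passing to control-corrected observations $\tilde{\mathbf{y}}(k)=\mathbf{y}(k)-\mathbf{E}(k)\mathbf{c}(k)$---are essentially the paper's own construction in Appendix A (the original versus autonomous NCS, Lemma \ref{lemma6}, and the bijection between $\mathbf{y}_0^n$ and the autonomous observations given $\mathbf{E}_0^n$ and the controls). The genuine gap is in your final step. You claim that $\mathbb{E}\big[\boldsymbol{\Delta}^T(n)\boldsymbol{\Delta}(n)\big|I_S(n)\big]$ also collapses to $\mathrm{Tr}(\mathbf{P}(n))$, a $\sigma(\mathbf{E}_0^n)$-measurable Riccati quantity, via an orthogonality argument. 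That cannot work: $I_S(n)=\{\mathbf{x}_0,\mathbf{w}_0^{n-1},\mathbf{H}_0^{n},\mathbf{z}_0^{n-1}\}$ determines the realized prediction error exactly (the sensor can reconstruct $\mathbf{y}_0^{n-1}$, hence $\hat{\mathbf{x}}(n-1)$, $\boldsymbol{\Delta}(n-1)$ and $\mathbf{A}\boldsymbol{\Delta}(n-1)+\mathbf{w}(n-1)$), so by the error dynamics in (\ref{kalman2}) the only residual randomness in $\boldsymbol{\Delta}(n)$ under $I_S(n)$ is the current channel noise $\mathbf{z}^a(n)$, and the sensor-side conditional second moment equals $\big\|\big(\mathbf{I}-\mathbf{K}^a(n)\mathbf{E}^a(n)\big)\big(\mathbf{A}\boldsymbol{\Delta}(n-1)+\mathbf{w}(n-1)\big)\big\|^2$ plus the $\mathbf{K}^a(n)\mathbf{z}^a(n)$ contribution. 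This is a random variable driven by the realized noises, not a function of $\mathbf{E}_0^n$ alone; the hoped-for orthogonality fails because $\boldsymbol{\Delta}(n)$ is a linear function of precisely the primitives $(\mathbf{x}_0,\mathbf{w}_0^{n-1},\mathbf{z}_0^{n})$ you condition on. Indeed, if the sensor-side conditional MSE depended only on $\mathbf{E}_0^n$, the event-driven threshold of Theorem \ref{thmpower}, which is driven by the realized $\boldsymbol{\Delta}$, would be vacuous.

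The paper closes the argument differently and never evaluates either side as a deterministic trace: after Lemma \ref{lemma6} it argues that, given $\mathbf{E}_0^n$, the controller's data are in bijection with the autonomous observations, so $\mathbb{E}\big[\overline{\mathbf{x}}(n)\big|I_C(n)\big]=\mathbb{E}\big[\overline{\mathbf{x}}(n)\big|\overline{I}_C(n)\big]$, and then that $\overline{I}_C(n)$ is contained in $\sigma(I_S(n))$ because the sensor can reconstruct those observations; the conclusion is that $\boldsymbol{\Delta}(n)$ is unaffected by the applied controls and its conditional statistics given $I_C(n)$ coincide with what the sensor can compute from $I_S(n)$. To repair your route, replace the ``both sides equal $\mathrm{Tr}(\mathbf{P}(n))$'' claim with this measurability/containment argument. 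A secondary caveat: even on the controller side, asserting that the posterior covariance is a Riccati iterate driven solely by $\mathbf{E}_0^n$ tacitly assumes a Gaussian posterior, which is delicate here because $\mathbf{F}(k)$ (hence $\mathbf{E}(k)$) is chosen as a function of state-dependent sensor information, so the realized precoder itself carries state information; the paper sidesteps this by adopting the augmented complex Kalman filter of Appendix B as the estimator rather than relying on exact conditional-Gaussian optimality.
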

\begin{proof}
	please refer to Appendix A.
\end{proof}
\quad Using the no dual effect property in Lemma \ref{lemmadual} and Prop. 3.1 of \cite{rate2} (or  Theorem 1  in Section III of \cite{dualeffect})}, the optimal plant control policy  is given by the certainty equivalent (CE) controller:
\begin{align}	\label{statsrr}
	\mathbf{u}^\ast(n)=\boldsymbol{\Psi} \hat{\mathbf{x}}(n), \quad \forall n
\end{align}
where $\boldsymbol{\Psi}= -(\mathbf{B}^T\mathbf{Z} \mathbf{B} + \mathbf{R})^{-1}\mathbf{B}^T\mathbf{Z} \mathbf{A}$ is the feedback gain matrix,  $\mathbf{Z}$ satisfies the following  discrete-time algebraic Riccati equation\footnote{We assume that  $(\mathbf{A}, \mathbf{Q}^{1/2})$ is observable  as in the classical LQG control theories. This assumption together with Assumption \ref{assumpstabl} ensures that the DARE has a unique symmetric positive semidefinite solution \cite{poweronl}.} (DARE): $\mathbf{Z} = \mathbf{A}^T\mathbf{Z}\mathbf{A} - \mathbf{A}^T\mathbf{Z} \mathbf{B}(\mathbf{B}^T\mathbf{Z} \mathbf{B} + \mathbf{R})^{-1}\mathbf{B}^T\mathbf{Z}\mathbf{A} + \mathbf{Q}$, and $\mathbf{Q} \in \mathbb{S}^{L}_+$ and $\mathbf{R}\in \mathbb{S}^{m}_+$ are the  weighting matrices for the plant state deviation cost and  plant control cost of the LQG control associated with  the CE controller \cite{rate2}, \cite{dualeffect}.

\txtblue{We need to design a MIMO \txtblue{AF} precoding policy such that the MIMO plant  system state  is bounded. Specifically, we have the following definition on the admissible MIMO \txtblue{AF} precoding  policy:
\begin{Definition}	\emph{(Admissible  MIMO \txtblue{AF} Precoding Policy):}	\label{admisscontrolpol}
	A MIMO \txtblue{AF} precoding   policy $\Omega$  is admissible if the plant state   process    is stable under $\Omega$ and the CE controller in (\ref{statsrr}),  i.e.,  $\lim_{n \rightarrow \infty}\mathbb{E}^{\Omega}\big[\left\|\mathbf{x}(n)\right\|^2 \big]< \infty$ under $\mathbf{u}^\ast$ in (\ref{statsrr}).~\hfill~\IEEEQED
\end{Definition}}

\vspace{-0.3cm}

\subsection{MIMO \txtblue{AF} Precoding Problem Formulation and Optimality Conditions}

%
%
%
%

Under the CE controller in (\ref{statsrr})   a given  admissible MIMO precoding policy $\Omega$,  the  optimization objective  of the MIMO NCS is reduced to an  \emph{average state estimation error} which is given by:
\begin{align}	\label{cost1}
	\overline{D}\left(\Omega\right) = \limsup_{N \rightarrow \infty} \frac{1}{N} \mathbb{E}^{\Omega} \left[ \sum_{n=0}^{N-1}\boldsymbol{\Delta}^T(n)\mathbf{S} \boldsymbol{\Delta}(n)\txtblue{\tau}\right]
\end{align}
where $\mathbf{S}\in \mathbb{S}^L_+$ is a constant weighting matrix.  Similarly, the average communication power gain cost at the sensor is given by
\begin{align}
	\overline{P}\left(\Omega\right) = \limsup_{N \rightarrow \infty} \frac{1}{N} \mathbb{E}^{\Omega}\left[\sum_{n=0}^{N-1} \text{Tr}\left(\mathbf{F}^\dagger(n) \mathbf{F}(n)\right)\txtblue{\tau}\right]
\end{align}
We consider the following MIMO  \txtblue{AF}  precoding  optimization:
\begin{Problem}	\emph{(MIMO  \txtblue{AF}  Precoding Optimization for MIMO NCS):}	\label{probformu}
\begin{align} 
	& \min_{\Omega}	\quad  \overline{D}\left(\Omega\right) +  \lambda \overline{P}\left(\Omega\right)	\label{perstagecost} \\
	 =& \limsup_{N \rightarrow \infty} \frac{1}{N}\mathbb{E}^{\Omega}\left[  \sum_{n=0}^{N-1} \big(\boldsymbol{\Delta}^T(n)\mathbf{S} \boldsymbol{\Delta}(n)+ \lambda  \text{Tr}\left(\mathbf{F}^\dagger(n) \mathbf{F}(n)\right)\big)\txtblue{\tau}\right] \notag 	
\end{align}
where  $\lambda \in \mathbb{R}^+$ is  the  communication power price. The system state is  $\boldsymbol{\chi}(n)\triangleq \big\{\boldsymbol{\Delta}(n-1), \boldsymbol{\Sigma}(n), \mathbf{H}(n)\big\}$, where  $\boldsymbol{\Sigma}(n)=\mathbb{E}\big[\big(\mathbf{x}(n)-\hat{\mathbf{x}}^-(n)\big]\big)\big(\mathbf{x}(n)-\hat{\mathbf{x}}^-(n)\big)^T \big|I_C(n-1) \big]$ is the one-step state prediction error covariance and $\hat{\mathbf{x}}^-(n)\triangleq \mathbb{E}\big[\mathbf{x}(n)\big|I_C(n-1)\big]$ is the one-step plant state prediction. The state dynamic of  $\boldsymbol{\chi}(n)$ is given by\footnote{\txtblue{Here, we adopt the  augmented complex Kalman filter   in \cite{ackf} to obtain the plant state estimate,  which is the   minimum MSE estimator  for a complex-valued plant state  measurement (i.e., $\mathbf{y}$ in our problem). Please refer to Appendix B for the iterative equation of the plant state estimate $\hat{\mathbf{x}}(n)$ in order to obtain the dynamics of $\boldsymbol{\Delta}$.}}
\begin{align}
	&\boldsymbol{\Delta}(n) =\big(\mathbf{I} - \mathbf{K}^a(n) \mathbf{E}^a(n)\big) \notag \\
	&\hspace{1.5cm}\cdot \big(\mathbf{A} \boldsymbol{\Delta}(n-1)+ \mathbf{w}(n-1)\big) - \mathbf{K}^a(n)\mathbf{z}^a(n)		\label{kalman2}	\\
	& \boldsymbol{\Sigma}(n+1) = \mathbf{A} \Big(\boldsymbol{\Sigma}(n) - \boldsymbol{\Sigma}(n)(\mathbf{E}^a (n))^\dagger   \notag \\
	& \hspace{0cm}  \cdot \big(\mathbf{E}^a (n) \boldsymbol{\Sigma}(n)   (\mathbf{E}^a (n))^\dagger+\mathbf{I}\big)^{-1} \mathbf{E}^a (n) \boldsymbol{\Sigma}(n)  \Big) \mathbf{A}^T + \mathbf{W} \label{kalman31}
\end{align}
with initial conditions $\boldsymbol{\Delta}(n)=0$, and $\boldsymbol{\Sigma}(0)=0$, where $\mathbf{K}^a(n) = \boldsymbol{\Sigma}(n) (\mathbf{E}^a (n))^\dagger \big(\mathbf{E}^a (n) \boldsymbol{\Sigma}(n)   (\mathbf{E}^a (n))^\dagger  +\mathbf{I}\big)^{-1}$ is the Kalman gain, $\mathbf{E}^a(n)=\left( \begin{smallmatrix} 
  \mathbf{E} (n)\\
  \mathbf{E}^\ddagger(n)
\end{smallmatrix} \right)$ is an  augmented $2 N_r \times L$ matrix and $\mathbf{z}^a (n)=\left( \begin{smallmatrix} 
  \mathbf{z} (n)\\
  \mathbf{z}^\ddagger(n)
\end{smallmatrix} \right)$ is an augmented  $2 N_r \times 1$ noise vector.~\hfill\IEEEQED
\end{Problem}

Given an admissible MIMO \txtblue{AF} precoding policy $\Omega$, the system state  process $\{\boldsymbol{\chi}(n)\}$ is a controlled Markov chain with the following transition probability:
\txtblue{\begin{align}	\label{totaltrans1}
	& \Pr\big[\boldsymbol{\chi}(n+1)\big| \boldsymbol{\chi}(n),\mathbf{F}(n)\big] 	\notag \\
	 = & \Pr\big[\boldsymbol{\Delta}(n)\big| \boldsymbol{\chi}(n),\mathbf{F}(n)  \big]  \notag \\
	& \cdot  \Pr\big[\boldsymbol{\Sigma}(n+1)\big|\boldsymbol{\Sigma}(n),\mathbf{H}(n), \mathbf{F}(n)  \big]\Pr\big[\mathbf{H}(n+1) \big]  
\end{align}
where $\Pr[\boldsymbol{\Delta}(n)|\boldsymbol{\chi}(n),\mathbf{F}(n)   ]$ and   $ \Pr[\boldsymbol{\Sigma}(n+1)|\boldsymbol{\Sigma}(n),\mathbf{H}(n), \\ \mathbf{F}(n) ]$  are  the  state estimation error and the error covariance transition probabilities associated with the dynamics in (\ref{kalman2}) and (\ref{kalman31}).} Hence, Problem \ref{probformu} is an infinite horizon average cost MDP with system state $\boldsymbol{\chi}(n)$ and per-stage cost  \bs$\big(\boldsymbol{\Delta}^T(n)\mathbf{S} \boldsymbol{\Delta}(n)+ \lambda  \text{Tr}(\mathbf{F}^\dagger(n) \mathbf{F}(n))\big)\tau$\bsc. Exploiting the i.i.d. property of the MIMO fading channel, the optimality condition of Problem \ref{probformu}  is given by the following reduced Bellman equation \txtblue{according to Prop. 4.6.1 of \cite{mdpcref2} and  Lemma 1 of  \cite{mdpsurvey}}: 
\begin{Theorem}	[Sufficient Conditions for Optimality]	\label{mdpvarify}
	If there exists $\left(\theta^\ast, V^\ast(\boldsymbol{\Delta}, \boldsymbol{\Sigma})\right)$ that satisfies the following \emph{optimality equation} (i.e., reduced Bellman equation) for given $\boldsymbol{\Delta}, \boldsymbol{\Sigma}$:
	\begin{align}
		& \theta^\ast \tau +  V^\ast \left(\boldsymbol{\Delta}, \boldsymbol{\Sigma}\right) \label{OrgBel} \\
		 = & \mathbb{E}\bigg[\min_{F \in \Omega(\boldsymbol{\chi}) }  \Big[\big((\boldsymbol{\Delta}')^T\mathbf{S}(\boldsymbol{\Delta}') + \lambda \mathrm{Tr}\big( \mathbf{F} \mathbf{F}^\dagger\big) \big)\tau \notag \\
		 & + \sum_{\boldsymbol{\Delta}', \boldsymbol{\Sigma}'}\Pr\left[\boldsymbol{\Delta}', \boldsymbol{\Sigma}'\big|\boldsymbol{\chi}, \mathbf{F}  \right] V^\ast \left(\boldsymbol{\Delta}', \boldsymbol{\Sigma}'\right)\Big]\bigg| \boldsymbol{\Delta}, \boldsymbol{\Sigma}\bigg], 	\notag 
	\end{align}
and for all admissible MIMO  \txtblue{AF} precoding policies $\Omega$, $V^\ast (\boldsymbol{\Delta}, \boldsymbol{\Sigma})$ satisfies the following transversality condition:
	 \begin{align}	\label{transodts}
	\lim_{N\rightarrow \infty} \frac{1}{N}\mathbb{E}^{\Omega}\left[ V^\ast\left(\boldsymbol{\Delta}(N), \boldsymbol{\Sigma}(N) \right) |\boldsymbol{\chi}(0)\right]=0
\end{align}
Then, we have  the following results:
\begin{itemize}
	\item $\theta^\ast \tau=\min_{\Omega}  \overline{D}\left(\Omega\right) +  \lambda \overline{P}\left(\Omega\right)$ is the optimal cost of Problem \ref{probformu}.
	\item Suppose there exists an admissible stationary MIMO  \txtblue{AF} precoding   policy $\Omega^\ast$ with  $\Omega^\ast\left(\boldsymbol{\chi} \right) = \mathbf{F}^\ast$, where  $\mathbf{F}^\ast$ attains the minimum of the R.H.S. in (\ref{OrgBel}) for given $\boldsymbol{\chi}$. Then, $\Omega^\ast$ is the optimal MIMO  \txtblue{AF} precoding  policy for  Problem \ref{probformu}.	
\end{itemize}
\end{Theorem}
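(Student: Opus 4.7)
The plan is to carry out the standard verification argument for average-cost MDPs, adapted to the reduced Bellman equation (\ref{OrgBel}). The two ingredients I will exploit are (i) the Bellman equation interpreted as a pointwise lower bound on the one-stage Q-value for every feasible action, and (ii) the transversality condition (\ref{transodts}), which annihilates the terminal value after Ces\`aro averaging.

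First, I would fix an arbitrary admissible policy $\Omega$ generating actions $\mathbf{F}(n)$. Since $\mathbf{F}(n)$ is a particular feasible choice while the minimization in (\ref{OrgBel}) ranges over all feasible actions, the Bellman equation yields the one-step inequality
\begin{align}
\theta^{\ast}\tau + V^{\ast}\bigl(\boldsymbol{\Delta}(n{-}1),\boldsymbol{\Sigma}(n)\bigr) \leq \mathbb{E}\bigl[ c_n\tau + V^{\ast}\bigl(\boldsymbol{\Delta}(n),\boldsymbol{\Sigma}(n{+}1)\bigr) \,\big|\, \boldsymbol{\chi}(n) \bigr], \notag
\end{align}
where $c_n \triangleq \boldsymbol{\Delta}^T(n)\mathbf{S}\boldsymbol{\Delta}(n) + \lambda \mathrm{Tr}\bigl(\mathbf{F}^\dagger(n)\mathbf{F}(n)\bigr)$ is the per-stage cost. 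Taking $\mathbb{E}^{\Omega}[\,\cdot\,|\boldsymbol{\chi}(0)]$, invoking the tower property, and summing over $n=0,\ldots,N-1$ telescopes the value-function contributions and gives
\begin{align}
N\theta^{\ast}\tau + V^{\ast}\bigl(\boldsymbol{\Delta}(-1),\boldsymbol{\Sigma}(0)\bigr) \leq \mathbb{E}^{\Omega}\!\left[ \sum_{n=0}^{N-1} c_n\tau + V^{\ast}\bigl(\boldsymbol{\Delta}(N{-}1),\boldsymbol{\Sigma}(N)\bigr) \,\Big|\, \boldsymbol{\chi}(0) \right]. \notag
\end{align}
Dividing by $N$ and letting $N\to\infty$, the initial value on the left drops; the Ces\`aro average of the per-stage cost converges in $\limsup$ to $\overline{D}(\Omega)+\lambda\overline{P}(\Omega)$ by definition (\ref{cost1}); and the terminal term vanishes by (\ref{transodts}). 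Hence $\theta^{\ast}\tau \leq \overline{D}(\Omega)+\lambda\overline{P}(\Omega)$ for every admissible $\Omega$.

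For the candidate stationary policy $\Omega^{\ast}$ that attains the minimum inside (\ref{OrgBel}), the one-step inequality holds with equality; replaying the identical telescoping and Ces\`aro argument gives $\theta^{\ast}\tau = \overline{D}(\Omega^{\ast}) + \lambda\overline{P}(\Omega^{\ast})$. Combined with the lower bound above, this simultaneously identifies $\theta^{\ast}\tau$ as the optimal value of Problem \ref{probformu} and $\Omega^{\ast}$ as an optimal stationary policy.

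The main technical obstacle I anticipate is matching the conditional-expectation structure to the reduced state. Because Assumption \ref{CSIassum} makes $\mathbf{H}(n)$ i.i.d.\ and independent of $\mathcal{F}_S(n{-}1)$, the outer expectation on the right-hand side of (\ref{OrgBel}) already marginalises over the next-slot channel, leaving $V^{\ast}$ a function only of $(\boldsymbol{\Delta},\boldsymbol{\Sigma})$; I must justify carefully that the iterated expectations in the telescoping step respect this marginalisation stage by stage so the inequalities chain without ambiguity. A secondary subtlety is that admissibility (Definition \ref{admisscontrolpol}) only guarantees $\limsup_n \mathbb{E}^{\Omega}[\|\mathbf{x}(n)\|^2]<\infty$, which by itself does not bound $V^{\ast}$; however, (\ref{transodts}) is hypothesised for all admissible $\Omega$, so no separate growth-rate control on $V^{\ast}$ is needed here.
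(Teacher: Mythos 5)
Your proposal is correct and follows essentially the same route as the paper: the paper's Appendix B proves this theorem simply by invoking Prop.\ 4.6.1 of \cite{mdpcref2} and Lemma 1 of \cite{mdpsurvey}, and your explicit telescoping verification argument (one-step inequality from the Bellman equation, tower property, Ces\`aro averaging, transversality killing the terminal term, equality for the minimizing stationary policy) is precisely the standard argument behind those cited results. The only cosmetic mismatch is the index of the terminal term ($V^\ast(\boldsymbol{\Delta}(N{-}1),\boldsymbol{\Sigma}(N))$ versus $V^\ast(\boldsymbol{\Delta}(N),\boldsymbol{\Sigma}(N))$ in (\ref{transodts})), which is immaterial to the limit.
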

\begin{proof}
	please refer to Appendix B.
\end{proof}

Note that the optimal MIMO  \txtblue{AF} precoding  $\mathbf{F}^\ast(n)$ adapts to $\boldsymbol{\chi}(n)$, which consists of both the plant state information $\left(\boldsymbol{\Delta}(n-1), \boldsymbol{\Sigma}(n) \right)$ and the CSI $\mathbf{H}(n)$. Unfortunately, the Bellman equation in (\ref{OrgBel}) is very difficult to solve because it involves a huge number of fixed point equations w.r.t. $\left(\theta^\ast, V^\ast(\boldsymbol{\Delta}, \boldsymbol{\Sigma})\right)$. Numerical solutions such as value iteration or policy iteration  \cite{mdpcref2}  have exponential complexity w.r.t. $L$ (the dimension of $\mathbf{x}$) and are not scalable. 

%

\vspace{-0.3cm}

\section{Closed-Form First-Order Optimal MIMO  \txtblue{AF} Precoding}	\label{pertuabationana}
In this section, we shall establish a continuous-time perturbation approach to derive an approximate closed-form priority function. We show that the approximate priority function is asymptotically accurate for small $\tau$. Based on that, we derive the closed-form MIMO \txtblue{AF} precoding solution and show that the solution has an \emph{event-driven control}  structure.  We also derive an achievable upper bound of the mean square state estimation  error  in the NCS,  and discuss how the system parameters affect this upper bound.
\vspace{-0.3cm}

\subsection{Continuous-Time Approximation}
We first consider a perturbation of the  priority function $V^\ast(\boldsymbol{\Delta}, \boldsymbol{\Sigma})$ in Theorem \ref{mdpvarify} w.r.t. the slot duration $\tau$. Based on that, the optimality condition in Theorem \ref{mdpvarify} reduces to the partial differential equation (PDE) as below. 
\begin{Lemma}	\emph{(Perturbation Analysis for Solving the Optimality Equation):}	\label{perturbPDE}
	If there exists $\left(\theta, V(\boldsymbol{\Delta}, \boldsymbol{\Sigma})\right)$ \txtblue{where}\footnote{\txtblue{$V\in \mathcal{C}^2$ means that $V(\boldsymbol{\Delta}, \boldsymbol{\Sigma})$ is second order differentiable w.r.t. to each variable in $(\boldsymbol{\Delta}, \boldsymbol{\Sigma})$.}} \txtblue{$V\in \mathcal{C}^2$}  that satisfies

	\begin{itemize}
		\item the following multi-dimensional PDE:
	\begin{align}	
		\theta  &=  \boldsymbol{\Delta}^T \mathbf{S} \boldsymbol{\Delta}  + \mathbb{E}\left[ \min_{F\in \Omega(\boldsymbol{\chi})}\left[ \lambda \mathrm{Tr}\big( \mathbf{F} \mathbf{F}^\dagger\big)  \right.\right. \notag \\
		& \left.\left. -2 \text{Re}\left\{\nabla_{\boldsymbol{\Delta}}^T V \boldsymbol{\Sigma} \mathbf{F}^\dagger \mathbf{H}^\dagger \mathbf{H} \mathbf{F} \boldsymbol{\Delta}/\tau  \right\}\right]\bigg| \boldsymbol{\Delta}, \boldsymbol{\Sigma}   \right]  +\nabla_{\boldsymbol{\Delta}}^T V  \widetilde{\mathbf{A}} \boldsymbol{\Delta}   \notag \\
		&  + \frac{1}{2}\text{Tr}\left( \nabla_{\boldsymbol{\Delta}}^2 V \widetilde{\mathbf{W}} \right)+ \text{Tr}\left( \frac{\partial V}{\partial \boldsymbol{\Sigma}}  \widetilde{\mathbf{W}} \right) \label{bellman2}
	\end{align}

		\item and $V(\boldsymbol{\Delta}, \boldsymbol{\Sigma})=\mathcal{O}(\left\|\boldsymbol{\Delta}\right\|^2)$,
	\end{itemize}
	  Then,  for any $\boldsymbol{\Delta}, \boldsymbol{\Sigma}$, 
	\begin{align}	\label{errortermtaur}
		V^\ast \left(\boldsymbol{\Delta}, \boldsymbol{\Sigma} \right)  =  V(\boldsymbol{\Delta}, \boldsymbol{\Sigma}) + \mathcal{O}\left(\tau\right)
	\end{align}
	where $\mathcal{O}\left(\tau\right)$  is  the asymptotically small error term.
\end{Lemma}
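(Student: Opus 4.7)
The plan is to treat the solution $(\theta, V)$ of the PDE (\ref{bellman2}) as an approximate solution of the reduced Bellman equation (\ref{OrgBel}), bound the resulting per-slot residual by $\mathcal{O}(\tau^2)$ via Taylor expansion, and then invoke a standard perturbation estimate for average-cost MDPs to transfer this residual into an $\mathcal{O}(\tau)$ bound on $V^\ast - V$, yielding (\ref{errortermtaur}). Concretely, I would substitute $V$ into the right-hand side of (\ref{OrgBel}) and expand $V(\boldsymbol{\Delta}',\boldsymbol{\Sigma}')$ around $(\boldsymbol{\Delta},\boldsymbol{\Sigma})$ in powers of $\tau$, using the small-slot scalings $\mathbf{A} = \mathbf{I} + \widetilde{\mathbf{A}}\tau + \mathcal{O}(\tau^2)$ and $\mathbf{W} = \widetilde{\mathbf{W}}\tau + \mathcal{O}(\tau^2)$ together with the Kalman-gain form $\mathbf{K}^a = \boldsymbol{\Sigma}(\mathbf{E}^a)^\dagger[\mathbf{E}^a\boldsymbol{\Sigma}(\mathbf{E}^a)^\dagger+\mathbf{I}]^{-1}$.

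From (\ref{kalman2})--(\ref{kalman31}), the conditional mean of the increment $\delta\boldsymbol{\Delta}$ is $\widetilde{\mathbf{A}}\boldsymbol{\Delta}\tau - \mathbf{K}^a\mathbf{E}^a\mathbf{A}\boldsymbol{\Delta} + \mathcal{O}(\tau^2)$, its conditional covariance is $\widetilde{\mathbf{W}}\tau + \mathcal{O}(\tau^2)$, and the conditional mean of $\delta\boldsymbol{\Sigma}$ is $\widetilde{\mathbf{W}}\tau$ plus $\mathcal{O}(\tau^2)$ Riccati-type corrections. Plugging the second-order Taylor expansion
\begin{align*}
V(\boldsymbol{\Delta}',\boldsymbol{\Sigma}') &= V(\boldsymbol{\Delta},\boldsymbol{\Sigma}) + \nabla_{\boldsymbol{\Delta}}^T V\,\delta\boldsymbol{\Delta} + \tfrac{1}{2}\,\delta\boldsymbol{\Delta}^T\nabla_{\boldsymbol{\Delta}}^2 V\,\delta\boldsymbol{\Delta} \\
&\quad + \mathrm{Tr}\!\left(\tfrac{\partial V}{\partial\boldsymbol{\Sigma}}\,\delta\boldsymbol{\Sigma}\right) + \mathcal{O}\!\left(\|\delta\boldsymbol{\Delta}\|^3 + \|\delta\boldsymbol{\Sigma}\|_F^2\right)
\end{align*}
into the Bellman equation and taking the conditional expectation yields, at order $\tau$, the drift $\tau\nabla_{\boldsymbol{\Delta}}^T V\,\widetilde{\mathbf{A}}\boldsymbol{\Delta}$, the diffusion $\tfrac{\tau}{2}\,\mathrm{Tr}(\nabla_{\boldsymbol{\Delta}}^2 V\,\widetilde{\mathbf{W}})$, the covariance drift $\tau\,\mathrm{Tr}(\tfrac{\partial V}{\partial\boldsymbol{\Sigma}}\widetilde{\mathbf{W}})$, and, using the identity $(\mathbf{E}^a)^\dagger\mathbf{E}^a = 2\,\text{Re}\{\mathbf{E}^\dagger\mathbf{E}\}$ together with the real-valuedness of $\boldsymbol{\Delta}$, the Kalman-innovation contribution $-2\,\text{Re}\{\nabla_{\boldsymbol{\Delta}}^T V\,\boldsymbol{\Sigma}\mathbf{F}^\dagger\mathbf{H}^\dagger\mathbf{H}\mathbf{F}\boldsymbol{\Delta}\}$; after dividing through by $\tau$, exactly (\ref{bellman2}) is recovered and every other term is provably $\mathcal{O}(\tau)$.

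The main obstacle is threefold. First, I need to interchange $\min_{\mathbf{F}}$ with the $\tau\to 0$ expansion; this follows because the feasible set $\{\mathbf{F}:\mathrm{Tr}(\mathbf{F}^\dagger\mathbf{F})\le\bar{F}\}$ in Definition \ref{powerpoli} is compact and the bracketed objective is jointly continuous in $(\mathbf{F},\tau)$, so Berge's maximum theorem ensures that the parametric minimum of the discrete-time expression is $\mathcal{O}(\tau)$-close to the PDE minimum uniformly in $(\boldsymbol{\Delta},\boldsymbol{\Sigma},\mathbf{H})$ on compact sets. Second, I need uniform control of the Taylor remainder; this follows from $V\in\mathcal{C}^2$ together with the growth bound $V=\mathcal{O}(\|\boldsymbol{\Delta}\|^2)$, which also lets me verify the transversality condition (\ref{transodts}) against the admissibility bound $\limsup_n\mathbb{E}^\Omega[\|\mathbf{x}(n)\|^2]<\infty$ in Definition \ref{admisscontrolpol}. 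Third, cross-moments between the independent plant noise $\mathbf{w}$ and channel noise $\mathbf{z}^a$ contribute only at $\mathcal{O}(\tau^2)$ and are safely absorbed into the residual, completing the conclusion $V^\ast = V + \mathcal{O}(\tau)$.
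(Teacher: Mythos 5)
Your Taylor-expansion machinery is essentially the paper's own argument: you expand the per-stage cost and the transition kernel using $\mathbf{A}=\mathbf{I}+\widetilde{\mathbf{A}}\tau+\mathcal{O}(\tau^2)$, $\mathbf{W}=\widetilde{\mathbf{W}}\tau+\mathcal{O}(\tau^2)$, $\mathbf{K}^a=\mathcal{O}(\tau)$, $\boldsymbol{\Sigma}=\mathcal{O}(\tau)$, recover the drift, diffusion, covariance-drift and Kalman-innovation terms of (\ref{bellman2}), and show the discrete Bellman operator equals the PDE operator up to $\mathcal{O}(\tau)$ after dividing by $\tau$; your handling of the minimization (compactness of the AF-gain ball and continuity, versus the paper's two-sided bound using the respective minimizers) is an acceptable variant, and your use of the growth condition $V=\mathcal{O}(\|\boldsymbol{\Delta}\|^2)$ to verify the transversality condition (\ref{transodts}) is also what the paper does.

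The genuine gap is the last step. The conclusion of the lemma is precisely $V^\ast=V+\mathcal{O}(\tau)$, and you obtain it by invoking ``a standard perturbation estimate for average-cost MDPs'' that transfers a per-stage residual of $\mathcal{O}(\tau)$ in the Bellman equation into an $\mathcal{O}(\tau)$ bound on the relative value function. No such off-the-shelf estimate applies here: the state space is continuous and unbounded, the relative value function is defined only up to additive constants, and sensitivity of $(\theta^\ast,V^\ast)$ to perturbations of the optimality equation requires either an ergodicity/contraction structure (the paper needs a weighted sup-norm contraction from \cite{hinidhsd} in Appendix D even for the cost gap $\widetilde{\theta}^\ast-\theta^\ast$) or an argument tailored to this setting. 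The paper closes this step differently and explicitly: it assumes the Bellman equation together with the transversality condition (\ref{transodts}) has a unique solution $(\theta^\ast,V^\ast)$, notes that the PDE solution $(\theta,V)$ satisfies the Bellman equation with $\mathcal{O}(\tau)$ residual and, by the quadratic growth bound, also satisfies transversality, and then argues by contradiction that any discrepancy $V(\boldsymbol{\Delta}',\boldsymbol{\Sigma}')-V^\ast(\boldsymbol{\Delta}',\boldsymbol{\Sigma}')=\alpha\neq0$ persisting as $\tau\rightarrow0$ would violate uniqueness, which forces $|V^\ast-V|=\mathcal{O}(\tau)$. Without supplying either this uniqueness-based argument or an explicit contraction-type bound, your proof establishes only that $V$ approximately satisfies the optimality equation, not the stated closeness of $V$ to $V^\ast$.
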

\begin{proof}
	please refer to Appendix C.
\end{proof}

As a result, solving the  optimality equation in (\ref{OrgBel}) is transformed into a calculus problem of solving the PDE in (\ref{bellman2}). Furthermore, the difference between the solution of the PDE (i.e., $V(\boldsymbol{\Delta},\boldsymbol{\Sigma})$) in (\ref{bellman2}) and the priority function in (\ref{OrgBel}) (i.e., $V^\ast(\boldsymbol{\Delta}, \boldsymbol{\Sigma})$) is $\mathcal{O}\left(\tau\right)$ for sufficiently small slot duration $\tau$. In the next subsections, we shall focus on solving the PDE in (\ref{bellman2}) by leveraging the well-established theories of  differential equations.

Let $\widetilde{\Omega}^\ast$ be the minimizer of the R.H.S.  of (\ref{bellman2}) and  $\widetilde{\theta}^\ast=\limsup_{N \rightarrow \infty} \frac{1}{N} \sum_{n=0}^{N-1} \mathbb{E}^{\widetilde{\Omega}^\ast}  \left[\boldsymbol{\Delta}^T(n)\mathbf{S}\boldsymbol{\Delta}(n)  + \lambda    \mathrm{Tr}\big(\mathbf{F}(n) \mathbf{F}^\dagger(n)    \big) \right]$ be the associated  objective function value. The performance gap between $\widetilde{\theta}^\ast$ and the optimal   cost $\theta^{\ast}$ in (\ref{OrgBel}) is established in the following theorem:
\begin{Theorem}	[Performance Gap between $\widetilde{\theta}^\ast$ and $\theta^{\ast}$]	\label{perfgap}
	If $V(\boldsymbol{\Delta}, \boldsymbol{\Sigma})=\mathcal{O}(\left\|\boldsymbol{\Delta}\right\|^2)$ and $\widetilde{\Omega}^\ast$ is admissible, then the performance gap between $\widetilde{\theta}^\ast$ and $\theta^{\ast}$ is given by
	\begin{align}		\label{perfgapequ}
		\widetilde{\theta}^\ast - \theta^{\ast} =\mathcal{O}(\tau), \qquad \text{as } \tau \rightarrow 0
	\end{align}
	\end{Theorem}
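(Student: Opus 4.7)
The plan is to certify $\widetilde{\Omega}^\ast$ as a first-order optimal policy by deriving two matched one-sided estimates, $\widetilde{\theta}^\ast = \theta + \mathcal{O}(\tau)$ and $\theta^\ast = \theta + \mathcal{O}(\tau)$; subtracting these yields the claim. The inequality $\theta^\ast \leq \widetilde{\theta}^\ast$ is already immediate from Theorem~\ref{mdpvarify} combined with the assumed admissibility of $\widetilde{\Omega}^\ast$, so the substance is entirely in identifying both averages with the PDE value $\theta$ up to $\mathcal{O}(\tau)$.

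The core step is a second-order Taylor expansion of $\mathbb{E}[V(\boldsymbol{\chi}(n+1))|\boldsymbol{\chi}(n),\mathbf{F}(n)]$ in $\tau$. Using the sampled plant dynamics (\ref{plantain}) and the Kalman recursions (\ref{kalman2})--(\ref{kalman31}), together with $V\in\mathcal{C}^2$ and the $\mathcal{O}(\tau)$ covariance of the noise increments over one slot, this expansion yields
\begin{equation*}
\mathbb{E}[V(\boldsymbol{\chi}')|\boldsymbol{\chi},\mathbf{F}] = V(\boldsymbol{\chi}) + \tau\,\mathcal{L}^{\mathbf{F}} V(\boldsymbol{\chi}) + \mathcal{O}(\tau^2),
\end{equation*}
where $\mathcal{L}^{\mathbf{F}} V$ is precisely the bracketed expression on the right-hand side of the PDE (\ref{bellman2}). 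Adding the stage cost and invoking the PDE, which identifies $\theta=\min_{\mathbf{F}}[\lambda\mathrm{Tr}(\mathbf{F}\mathbf{F}^\dagger)+\mathcal{L}^{\mathbf{F}} V]+\boldsymbol{\Delta}^T\mathbf{S}\boldsymbol{\Delta}$ and pins $\widetilde{\Omega}^\ast$ as its argmin, produces the per-slot Poisson identity $c(\boldsymbol{\chi},\widetilde{\Omega}^\ast)\tau + \mathbb{E}[V(\boldsymbol{\chi}')|\widetilde{\Omega}^\ast] = V(\boldsymbol{\chi}) + \theta\tau + \mathcal{O}(\tau^2)$. Telescoping this identity along the $\widetilde{\Omega}^\ast$-controlled trajectory from $n=0$ to $N-1$ and sending $N\to\infty$, the growth bound $V=\mathcal{O}(\|\boldsymbol{\Delta}\|^2)$ together with the admissibility of $\widetilde{\Omega}^\ast$ forces $\mathbb{E}^{\widetilde{\Omega}^\ast}[V(\boldsymbol{\chi}(N))]/(N\tau)\to 0$, so $\widetilde{\theta}^\ast = \theta + \mathcal{O}(\tau)$. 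For the matching estimate on $\theta^\ast$, I would substitute $V^\ast = V + \mathcal{O}(\tau)$ from Lemma~\ref{perturbPDE} into the exact Bellman equation (\ref{OrgBel}) at $\Omega^\ast$; since the correction $V^\ast - V$ is a smooth $\mathcal{O}(\tau)$ function of $(\boldsymbol{\Delta},\boldsymbol{\Sigma})$, its \emph{increment} under the state transition is $\mathcal{O}(\tau^2)$, so (\ref{OrgBel}) reduces to $\theta^\ast\tau = \tau[c(\boldsymbol{\chi},\Omega^\ast)+\mathcal{L}^{\Omega^\ast} V(\boldsymbol{\chi})] + \mathcal{O}(\tau^2)$. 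Combined with the PDE minimality, this yields $\theta \leq \theta^\ast + \mathcal{O}(\tau)$, and the two estimates together give $\widetilde{\theta}^\ast - \theta^\ast = \mathcal{O}(\tau)$.

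The main obstacle is rigorously upgrading the per-slot Bellman residual from the naive $\mathcal{O}(\tau)$ to the needed $\mathcal{O}(\tau^2)$. Replacing $V^\ast$ by $V$ directly in the discrete Bellman only produces an absolute error of $\mathcal{O}(\tau)$, and dividing by $\tau$ to convert to an average-cost rate would leave an uncontrolled $\mathcal{O}(1)$ gap; the required saving relies on examining the \emph{increment} $\mathbb{E}[V(\boldsymbol{\chi}')|\mathbf{F}] - V(\boldsymbol{\chi})$, whose constant-in-$\boldsymbol{\chi}'$ $\mathcal{O}(\tau)$ contribution from $V^\ast - V$ cancels between the two endpoints. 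Making this rigorous demands uniform bounds on the third derivatives of $V$, control of the third-order moments of the Wiener-scaled plant and measurement noise (which scale as $\mathcal{O}(\tau^{3/2})$), uniformity of all remainder terms in $\boldsymbol{\chi}$ through polynomial growth bounds, and smoothness of the correction $V^\ast - V$, which should be extracted from the construction carried out in the proof of Lemma~\ref{perturbPDE}.
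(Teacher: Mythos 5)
Your first half (the Poisson-identity/telescoping argument showing $\widetilde{\theta}^\ast=\theta+\mathcal{O}(\tau)$ under the growth bound $V=\mathcal{O}(\|\boldsymbol{\Delta}\|^2)$ and admissibility) is essentially what the paper does in steps (a)--(b) of (\ref{finalexpr}), using Lemma \ref{applema} and the finiteness of $\mathbb{E}^{\widetilde{\Omega}^\ast}[V]$ to make the transition term telescope. The trivial inequality $\theta^\ast\leq\widetilde{\theta}^\ast$ is also fine. The problem is your treatment of $\theta^\ast$. You need the per-slot residual created by replacing $V^\ast$ with $V$ in the exact Bellman equation to be $\mathcal{O}(\tau^2)$, and you justify this by asserting that the correction $V^\ast-V$ is a \emph{smooth} $\mathcal{O}(\tau)$ function whose increment across one transition is $\mathcal{O}(\tau^2)$. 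Lemma \ref{perturbPDE} gives only a pointwise bound $|V^\ast-V|=\mathcal{O}(\tau)$; its proof is a uniqueness/contradiction argument that produces no derivative estimates on $V^\ast-V$, and the paper explicitly notes (footnote in Appendix C) that $V^\ast$ may not even be $\mathcal{C}^2$. So the cancellation you rely on cannot be "extracted from the construction carried out in the proof of Lemma \ref{perturbPDE}" --- the object you need does not exist there, and without it the residual divided by $\tau$ is only $\mathcal{O}(1)$, which destroys the claimed bound. This is a genuine gap, not merely a technicality you deferred.

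The paper closes this step by a different mechanism that never differentiates $V^\ast$: it keeps the exact discrete identity $\mathbb{E}\big[T_{\boldsymbol{\chi}}(V^\ast,\Omega^\ast(\boldsymbol{\chi}))\big|\boldsymbol{\Delta},\boldsymbol{\Sigma}\big]=\theta^\ast$, uses minimality of $\widetilde{\Omega}^\ast$ for the continuous operator $T^\dagger(V,\cdot)$ plus Lemma \ref{applema} to pass to $T(V,\Omega^\ast)$, and then bounds the difference $T_{\boldsymbol{\chi}}(V,\Omega^\ast(\boldsymbol{\chi}))-T_{\boldsymbol{\chi}}(V^\ast,\Omega^\ast(\boldsymbol{\chi}))$ by a weighted sup-norm contraction estimate (Lemma 3 of the cited Tseng reference, with weights built from a reachability partition of the state space), which requires only the sup-norm closeness $\|V-V^\ast\|_\infty^{\overline{\omega}}=\mathcal{O}(\tau)$ delivered by Lemma \ref{perturbPDE}. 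If you want to keep your two-sided "compare both to $\theta$" architecture, you would have to either (i) strengthen Lemma \ref{perturbPDE} to a $\mathcal{C}^1$-type estimate on $V^\ast-V$, which is not available, or (ii) replace your smoothness argument with a policy-space comparison of the kind the paper uses, i.e., evaluate the suboptimal policy $\Omega^\ast$ inside the operator built from $V$ and control the resulting operator difference by a sup-norm argument rather than a Taylor expansion of $V^\ast-V$.
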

\begin{proof}
Please refer to Appendix D.		
\end{proof}

Theorem \ref{perfgap} suggests that  $\widetilde{\theta}^\ast \rightarrow \theta^{\ast} $, as $\tau \rightarrow 0$. In other words, the  MIMO  \txtblue{AF} precoding    policy  $\widetilde{\Omega}^\ast$ is asymptotically optimal as $\tau \rightarrow 0$.

In this next subsections, we focus on finding the priority function $V(\boldsymbol{\Delta}, \boldsymbol{\Sigma})$ to solve the PDE  in (\ref{bellman2}). To do this, we first derive the structural properties of the optimal MIMO \txtblue{AF} precoding solution $\mathbf{F}^\ast$ that minimize the R.H.S. of (\ref{bellman2}). Based on that, we derive asymptotically accurate closed-form approximate priority function $V(\boldsymbol{\Delta}, \boldsymbol{\Sigma})$.

\subsection{Structural Properties of the  MIMO \txtblue{AF} Precoding Solution $\widetilde{\Omega}^\ast$}	\label{signalingsec}
In this section, we give the  MIMO \txtblue{AF} precoding solution for given $V(\boldsymbol{\Delta}, \boldsymbol{\Sigma})$. Let $\sigma^\ast$ and $\mathbf{U} \in \mathbb{C}^{N_t\times N_t}$ be the largest squared singular value and the associated  left  singular  matrix of $\mathbf{H}$, respectively. Let $\boldsymbol{\Xi}\triangleq  \boldsymbol{\Delta}\nabla_{\boldsymbol{\Delta}}^T V  \boldsymbol{\Sigma}\big/ \tau$, denote $\nu^\ast\triangleq \mu_{max}({\boldsymbol{\Xi}+\boldsymbol{\Xi}^T})$ and $\mathbf{q}_1$ be the associated unit column eigenvector. Then, the MIMO  \txtblue{AF} precoding  solution that minimizes the R.H.S. of the PDE in (\ref{bellman2}) is given in the following theorem:
\begin{Theorem}		\emph{(Structural Properties of the Optimal MIMO \txtblue{AF} Precoding Policy):}	\label{thmpower}
	For any given state realization $\boldsymbol{\chi}$, the optimal MIMO \txtblue{AF} precoding $\widetilde{\Omega}^\ast\left(\boldsymbol{\chi}\right)=\mathbf{F}^\ast$ that  minimizes the R.H.S.  the PDE in (\ref{bellman2}) is  given by
	\begin{itemize}
		\item \textbf{Dormant Mode:} If $\lambda   > \sigma^\ast \nu^\ast $, then $\mathbf{F}^\ast=0$.
		\item \textbf{Active Mode:}  If $\lambda   < \sigma^\ast \nu^\ast $, then $\mathbf{F}^\ast=\sqrt{\bar{F}}\mathbf{U} \boldsymbol{\Upsilon}$. $\boldsymbol{\Upsilon}\in \mathbb{R}^{N_t \times L}$ is a dynamic power splitting matrix with only non-zero row (first row) given by $\mathbf{q}_1^T$.
\end{itemize}
\end{Theorem}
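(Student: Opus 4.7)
The plan is to isolate the $\mathbf{F}$-dependent portion of the RHS of the PDE in (\ref{bellman2}) and treat the inner minimization as a constrained trace optimization. Using the cyclic property of the trace together with the identity $2\,\mathrm{Re}\{\mathrm{Tr}(\boldsymbol{\Xi}\mathbf{A})\} = \mathrm{Tr}\big((\boldsymbol{\Xi}+\boldsymbol{\Xi}^T)\mathbf{A}\big)$ for real $\boldsymbol{\Xi}$ and Hermitian $\mathbf{A} = \mathbf{F}^\dagger\mathbf{H}^\dagger\mathbf{H}\mathbf{F}$, the inner problem reduces to
\[
\min_{\mathrm{Tr}(\mathbf{F}^\dagger \mathbf{F})\le \bar F}\; \lambda\,\mathrm{Tr}(\mathbf{F}\mathbf{F}^\dagger) \;-\; \mathrm{Tr}\big((\boldsymbol{\Xi}+\boldsymbol{\Xi}^T)\mathbf{F}^\dagger\mathbf{H}^\dagger\mathbf{H}\mathbf{F}\big).
\]

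The key step is a change of basis to the channel eigenspace. Writing $\mathbf{H}^\dagger\mathbf{H} = \mathbf{U}\boldsymbol{\Lambda}\mathbf{U}^\dagger$ with $\boldsymbol{\Lambda} = \mathrm{diag}(\lambda_1,\dots,\lambda_{N_t})$ in descending order so that $\lambda_1 = \sigma^\ast$, and substituting $\widetilde{\mathbf F} \triangleq \mathbf{U}^\dagger\mathbf{F}$, the Frobenius-norm constraint is preserved and $\mathbf{F}^\dagger\mathbf{H}^\dagger\mathbf{H}\mathbf{F} = \widetilde{\mathbf{F}}^\dagger\boldsymbol{\Lambda}\widetilde{\mathbf F}$. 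Denoting by $\tilde{\mathbf{v}}_i^T$ the $i$-th row of $\widetilde{\mathbf F}$, one computes $\widetilde{\mathbf{F}}^\dagger\boldsymbol{\Lambda}\widetilde{\mathbf F} = \sum_i \lambda_i\, \tilde{\mathbf{v}}_i^{*}\tilde{\mathbf{v}}_i^T$, so that the objective decouples across eigenmodes:
\[
\sum_{i=1}^{N_t} \Big(\lambda\,\|\tilde{\mathbf{v}}_i\|^2 \;-\; \lambda_i\,\tilde{\mathbf{v}}_i^{T}(\boldsymbol{\Xi}+\boldsymbol{\Xi}^T)\tilde{\mathbf{v}}_i^{*}\Big),\qquad \sum_i \|\tilde{\mathbf{v}}_i\|^2 \le \bar F.
\]

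The remainder splits into a directional and a magnitude subproblem. For the direction, the Rayleigh quotient $\mathbf{u}^\dagger(\boldsymbol{\Xi}+\boldsymbol{\Xi}^T)\mathbf{u}\le \nu^\ast\|\mathbf{u}\|^2$ — which extends to complex $\mathbf{u}$ because $\boldsymbol{\Xi}+\boldsymbol{\Xi}^T$ is real symmetric, and a real/imaginary decomposition splits the form into two real Rayleigh quotients — forces every nonzero row of $\widetilde{\mathbf F}$ to be proportional to the top eigenvector $\mathbf{q}_1$ of $\boldsymbol{\Xi}+\boldsymbol{\Xi}^T$. Writing $p_i = \|\tilde{\mathbf{v}}_i\|^2$, what remains is the scalar linear program $\min \sum_i(\lambda-\lambda_i\nu^\ast)p_i$ on the simplex $\{p\ge 0,\ \sum_i p_i\le\bar F\}$. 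Since $\sigma^\ast$ is the largest eigenvalue of $\mathbf{H}^\dagger\mathbf{H}$, the most negative coefficient (if any) sits at $i=1$; hence either all coefficients are nonnegative — which covers both $\lambda>\sigma^\ast\nu^\ast$ with $\nu^\ast>0$ and automatically $\nu^\ast\le 0$ because $\lambda>0$ — yielding $\mathbf{F}^\ast=\mathbf 0$ (dormant mode), or $\lambda<\sigma^\ast\nu^\ast$ so the budget concentrates on mode $1$, giving $\tilde{\mathbf v}_1 = \sqrt{\bar F}\,\mathbf{q}_1$ and $\tilde{\mathbf v}_i=\mathbf 0$ for $i\ge 2$. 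Undoing the substitution gives $\mathbf{F}^\ast = \sqrt{\bar F}\,\mathbf{U}\boldsymbol{\Upsilon}$ with $\boldsymbol{\Upsilon}=\mathbf{e}_1\mathbf{q}_1^T$, matching the theorem's active mode.

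The main obstacle is largely bookkeeping: carefully distinguishing transpose from conjugate transpose, rows from columns of $\widetilde{\mathbf F}$, and verifying that the Rayleigh bound remains tight against complex $\tilde{\mathbf v}_i$ aligned with the \emph{real} eigenvector $\mathbf{q}_1$ (up to a global complex phase that is irrelevant to $p_i$ and can be absorbed). The knife-edge case $\lambda=\sigma^\ast\nu^\ast$ admits non-unique minimizers but does not change the policy structure, so it can be folded into either branch.
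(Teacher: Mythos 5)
Your proposal is correct and follows essentially the same route as the paper's Appendix E: rotate into the channel eigenbasis via $\mathbf{U}^\dagger\mathbf{F}$, drop the skew-symmetric part of $\boldsymbol{\Xi}$ (your trace identity is the paper's observation that $\mathrm{Re}\{\mathrm{Tr}((\boldsymbol{\Xi}-\boldsymbol{\Xi}^T)\widetilde{\mathbf{G}}^\dagger\boldsymbol{\Pi}\widetilde{\mathbf{G}})\}=0$), align the active row with the top eigenvector $\mathbf{q}_1$ of $\boldsymbol{\Xi}+\boldsymbol{\Xi}^T$ paired with the largest channel eigenvalue $\sigma^\ast$, and finish with a linear program in the gain that is bang-bang at $0$ or $\bar F$ according to the sign of $\lambda-\sigma^\ast\nu^\ast$. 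Your per-row decoupling plus simplex LP is only a cosmetic reorganization of the paper's split into a unit-norm direction problem $\mathcal{P}_1$ and a scalar magnitude problem $\mathcal{P}_2$, and your handling of the complex Rayleigh quotient and the $\nu^\ast\le 0$ case is sound.
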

\begin{proof}
	please refer to Appendix E.
\end{proof}

\txtblue{It can be observed that the MIMO AF precoding  policy  $\widetilde{\Omega}^\ast$ has an \emph{event-driven control structure} with a dynamically changing threshold  $\sigma^\ast \nu^\ast$. Specifically, the sensor either transmits using the maximum communication resource  or shuts down, depending on whether the dynamic threshold  is larger than $\lambda$ or not. Furthermore, the dynamic threshold is adaptive to the   plant state estimation error  $\boldsymbol{\Delta}$, the state estimation error covariance $\boldsymbol{\Sigma}$   and the CSI $\mathbf{H}$.} Note that the optimal MIMO \txtblue{AF} precoding $\mathbf{F}^\ast$ only activates the strongest eigenchannel $\sigma^\ast$ (with power splitting across the plant  states) to deliver a dynamically weighted combination of the plant   states $\{x_1, x_2, \dots,  x_L\}$. The power splitting dynamic weight (first row of $\boldsymbol{\Upsilon}$) is adaptive to the plant-related  states $(\boldsymbol{\Delta}, \boldsymbol{\Sigma})$ of the MIMO plant system.

\vspace{-0.3cm}

\subsection{Closed-Form Approximate Priority Function}	\label{closedformsolution}

Based on $\mathbf{F}^\ast$ in Theorem \ref{thmpower}, the priority function $V(\boldsymbol{\Delta}, \boldsymbol{\Sigma})$ is given by the solution of the PDE in (\ref{bellman2}). However,  obtaining the solution to the PDE is very challenging due to the multi-dimensional nonlinear and coupling structure. Numerical solutions such as value iteration \cite{mdpcref2} suffer from the curse of dimensionality issue and lack of design insights. 


We shall adopt the asymptotic analysis  techniques \cite{dominmethod} to derive an asymptotic solution of the PDE.  The solution is summarized in the following  lemma:
\begin{Lemma}	[Asymptotic Solution of the PDE]	\label{solpde}
The asymptotic solution of the PDE in (\ref{bellman2}) is given as follows\footnote{As discussed in Theorem \ref{thmpower}, the optimal MIMO \txtblue{AF} precoding is only related to the partial gradient $\nabla_{\boldsymbol{\Delta}} V$. Therefore, we  focus on deriving $\nabla_{\boldsymbol{\Delta}} V$ for the PDE in (\ref{bellman2}).}:
	\begin{itemize}
		\item if $\|\boldsymbol{\Delta}\|$ is  sufficiently   small (Low Urgency Regime), 
		\begin{align}	\label{valuefunc1}
			\nabla_{\boldsymbol{\Delta}} V= \left(\Phi_1(\boldsymbol{\Sigma})+\Phi_1^T(\boldsymbol{\Sigma})\right)\boldsymbol{\Delta}+\mathcal{O}(\|\boldsymbol{\Delta}\|^3)\mathbf{1} 
		\end{align}
		  where  the full expression of $\Phi_1(\boldsymbol{\Sigma})\in \mathbb{R}^{L\times L}$ is given   in (\ref{appeox12}) in Appendix E. Furthermore, we have $\|\Phi_1(\boldsymbol{\Sigma})\|_{F}=\mathcal{O}\big(\|\boldsymbol{\Sigma}\|_F\big)$.
		 \item   if $\|\boldsymbol{\Delta}\|$ is sufficiently  large (High Urgency Regime), 
		\begin{align}	
			\nabla_{\boldsymbol{\Delta}} V= \left(\Phi_2(\boldsymbol{\Sigma})+\Phi_2^T(\boldsymbol{\Sigma})\right)\boldsymbol{\Delta} +\mathcal{O}\left(\frac{1}{\|\boldsymbol{\Delta}\|^3}\right)\mathbf{1} \label{valuefunc2}
		\end{align}
		where  the full expression of $\Phi_2(\boldsymbol{\Sigma})\in \mathbb{R}^{L\times L}$ is given   in (\ref{appeox22}) in Appendix E. Furthermore, we have $\|\Phi_2(\boldsymbol{\Sigma})\|_{F}=\mathcal{O}\big(\|\boldsymbol{\Sigma}\|_F\big)$.
		\end{itemize}
\end{Lemma}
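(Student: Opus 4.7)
The plan is to use the method of dominant balance from asymptotic analysis. I would posit a quadratic ansatz $V(\boldsymbol{\Delta},\boldsymbol{\Sigma}) = \boldsymbol{\Delta}^T \Phi(\boldsymbol{\Sigma})\boldsymbol{\Delta} + \Psi(\boldsymbol{\Sigma})$, with $\Phi$ taking two distinct forms in the two regimes, and substitute it into the PDE (\ref{bellman2}). The bifurcation into low- and high-urgency regimes is a direct consequence of Theorem \ref{thmpower}: under the quadratic ansatz $\nabla_{\boldsymbol{\Delta}} V$ is linear in $\boldsymbol{\Delta}$, so $\boldsymbol{\Xi} = \boldsymbol{\Delta}\nabla_{\boldsymbol{\Delta}}^T V \boldsymbol{\Sigma}/\tau$ scales as $\|\boldsymbol{\Delta}\|^2$, and hence the triggering threshold $\sigma^*\nu^*$ is typically much smaller than $\lambda$ for small $\|\boldsymbol{\Delta}\|$ (dormant mode dominates) and typically much larger than $\lambda$ for large $\|\boldsymbol{\Delta}\|$ (active mode dominates).

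For the low-urgency regime, I would replace the inner $\min$ by its dormant value of $0$, controlling the dropped contribution by $\Pr[\sigma^*\nu^* > \lambda \mid \boldsymbol{\Delta}, \boldsymbol{\Sigma}]$ multiplied by the active-mode penalty $\bar{F}(\sigma^*\nu^* - \lambda)$. Since $\nu^* = \mathcal{O}(\|\boldsymbol{\Delta}\|^2)$, this probability decays polynomially in $\|\boldsymbol{\Delta}\|$ using standard tail bounds on the top eigenvalue of the complex Wishart matrix $\mathbf{H}^\dagger\mathbf{H}$, so the total contribution becomes subleading. The PDE then collapses to a quadratic form in $\boldsymbol{\Delta}$, and matching the coefficients of $\boldsymbol{\Delta}\boldsymbol{\Delta}^T$ produces a continuous-time Lyapunov-type equation for $\Phi_1(\boldsymbol{\Sigma})$ whose closed-form solution is (\ref{appeox12}). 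The bound $\|\Phi_1\|_F = \mathcal{O}(\|\boldsymbol{\Sigma}\|_F)$ then follows from standard sensitivity bounds on the Lyapunov equation, since the $\boldsymbol{\Sigma}$-dependence enters linearly on the right-hand side through $\boldsymbol{\Xi}$.

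For the high-urgency regime, the active mode triggers with probability tending to $1$, so I would replace the inner $\min$ by $\lambda\bar{F} - \bar{F}\sigma^*\nu^*$, the residual from the dormant event $\Pr[\sigma^*\nu^* < \lambda]$ decaying polynomially in $1/\|\boldsymbol{\Delta}\|$. Under the quadratic ansatz, $\nu^* = \mu_{\max}(\boldsymbol{\Xi}+\boldsymbol{\Xi}^T)$ expands into a bilinear form in $\boldsymbol{\Delta}$ up to a controlled error, and after taking the expectation over $\mathbf{H}$ the PDE yields an analogous algebraic equation for $\Phi_2(\boldsymbol{\Sigma})$, now carrying an extra $-\bar{F}\,\mathbb{E}[\sigma^*\mathbf{q}_1\mathbf{q}_1^T]$ contribution reflecting the event-driven active transmission; solving gives (\ref{appeox22}). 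The residual from the rare dormant events is then of order $\mathcal{O}(\|\boldsymbol{\Delta}\|^{-1})$ in $V$, which becomes $\mathcal{O}(\|\boldsymbol{\Delta}\|^{-3})\mathbf{1}$ after taking the componentwise gradient $\nabla_{\boldsymbol{\Delta}}$.

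The main technical obstacle is handling $\mathbb{E}_{\mathbf{H}}[\sigma^*\nu^*]$ and $\mathbb{E}_{\mathbf{H}}[\sigma^*\mathbf{q}_1\mathbf{q}_1^T]$, because $\sigma^*$ is the top eigenvalue of a complex Wishart matrix and $\mathbf{q}_1$ (the principal eigenvector of $\boldsymbol{\Xi}+\boldsymbol{\Xi}^T$) is a nonlinear function of $\boldsymbol{\Delta}$. The saving grace is that at leading order these eigen-quantities decouple from the $\boldsymbol{\Delta}$-dependence up to errors already of the claimed residual order, so the closed-form expressions in (\ref{appeox12}) and (\ref{appeox22}) can be read off by simple coefficient matching. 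Verifying uniformly in $\boldsymbol{\Delta}$ that all dropped tail and probability terms are of the advertised order is the delicate book-keeping step and would be carried out term-by-term in the appendix.
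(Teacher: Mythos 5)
Your high-level strategy (separable quadratic ansatz in $\boldsymbol{\Delta}$ with $\boldsymbol{\Sigma}$-dependent coefficient, splitting into dormant-dominant and active-dominant regimes via $\nu^\ast=\mathcal{O}(\|\boldsymbol{\Delta}\|^2)$, then coefficient matching) is indeed the paper's route, but two of your execution steps do not go through as stated. First, in the low-urgency regime you call the matched-coefficient equation a Lyapunov equation and attribute the $\boldsymbol{\Sigma}$-dependence of $\Phi_1$ to ``$\boldsymbol{\Xi}$ entering linearly on the right-hand side''; this is wrong, since in that regime the $\boldsymbol{\Xi}$-dependent (active-mode) term is exactly what you dropped. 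The $\boldsymbol{\Sigma}$-dependence of $\Phi_1$ comes from the diffusion term $\text{Tr}\big(\frac{\partial V}{\partial\boldsymbol{\Sigma}}\widetilde{\mathbf{W}}\big)$ in (\ref{bellman2}), so the coefficient equation is the first-order PDE in $\boldsymbol{\Sigma}$ in (\ref{deltaequation}); the paper has to diagonalize $\widetilde{\mathbf{A}}$ and solve the resulting coupled ODEs entrywise ((\ref{8964equ})--(\ref{yyx})) to get the explicitly affine-in-$\boldsymbol{\Sigma}$ solution (\ref{62equ})--(\ref{57eq}), from which $\|\Phi_1(\boldsymbol{\Sigma})\|_F=\mathcal{O}(\|\boldsymbol{\Sigma}\|_F)$ follows; a ``sensitivity bound on the Lyapunov equation'' argument would instead give a $\boldsymbol{\Sigma}$-independent $\Phi_1$ and cannot produce the stated structure.

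Second, in the high-urgency regime your claim that $\nu^\ast$ ``expands into a bilinear form in $\boldsymbol{\Delta}$ up to a controlled error'' is not true: by the rank-two eigenvalue formula (Lemma \ref{lemmaappendxoosds}), $\nu^\ast$ contains a square-root term that is homogeneous of degree two in $\boldsymbol{\Delta}$ but is not a quadratic form, and it is of the same order as the leading term, so direct matching against $\boldsymbol{\Delta}\boldsymbol{\Delta}^T$ fails. The paper's device is to approximate $\nu^\ast\approx c\,\boldsymbol{\Delta}^T\boldsymbol{\Delta}$ with the constant $c$ pinned down self-consistently by a fixed-point equation, which is what puts the $-c\overline{\sigma}\bar{F}\mathbf{I}$ shift into (\ref{deltaequation1}); your proposal has no substitute for this step. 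Relatedly, the ``main obstacle'' you identify, $\mathbb{E}_{\mathbf{H}}[\sigma^\ast\mathbf{q}_1\mathbf{q}_1^T]$, does not arise here: after substituting $\mathbf{F}^\ast$ the active-mode term is the scalar $\bar{F}(\sigma^\ast\nu^\ast-\lambda)$ and, since $\nu^\ast$ and $\mathbf{q}_1$ are deterministic given $(\boldsymbol{\Delta},\boldsymbol{\Sigma})$, the expectation factorizes as $\overline{\sigma}\nu^\ast$ (the matrix expectation you mention belongs to the MSE analysis of Theorem \ref{stab}, not this lemma). Finally, your error bookkeeping is incomplete: the residual orders $\mathcal{O}(\|\boldsymbol{\Delta}\|^3)$ and $\mathcal{O}(\|\boldsymbol{\Delta}\|^{-3})$ in the gradient are obtained in the paper by writing $V=\widetilde{V}_i+J_i$, deriving a residual PDE for $J_i$ ((\ref{xxx2}), (\ref{xxx22})) and balancing orders to get $J_1=\mathcal{O}(\|\boldsymbol{\Delta}\|^4)$ and $J_2=\mathcal{O}(\|\boldsymbol{\Delta}\|^{-2})$; you never derive the low-urgency error term at all, and your high-urgency count ($\mathcal{O}(\|\boldsymbol{\Delta}\|^{-1})$ in $V$ giving $\mathcal{O}(\|\boldsymbol{\Delta}\|^{-3})$ in the gradient) is dimensionally inconsistent.
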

\begin{figure}[t]
\centering
\subfigure[Shape of the decision region boundary w.r.t. state estimation error $\Delta_1$ and error covariance $\Sigma_{11}$.]{
\includegraphics[width=2.6in]{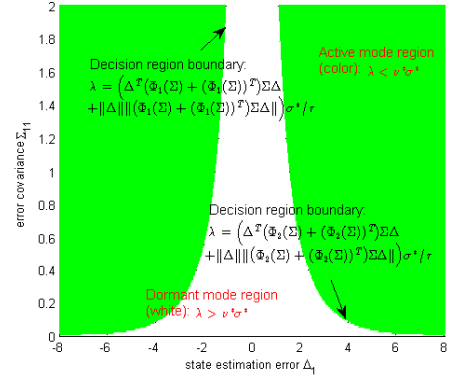}}
\hspace{10cm}
\subfigure[Decision region w.r.t. $\Delta_1$ and $\Sigma_{11}$ under instability  of $\widetilde{\mathbf{A}}$ with  $\widetilde{\mathbf{A}}=0.5\mathbf{I}, \mathbf{I}, 1.5\mathbf{I}$ and $\lambda=30$.]{
\includegraphics[width=2.6in]{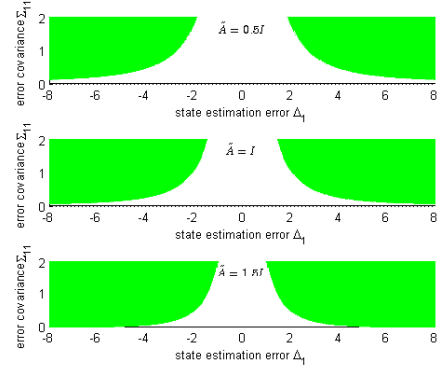}}
\hspace{0.1cm}
\subfigure[Decision region w.r.t. $\Delta_1$ and $\Sigma_{11}$ under communication power price $\lambda$ with  $\lambda=15, 20, 30$ and $\widetilde{\mathbf{A}}=\mathbf{I}$.]{
\includegraphics[width=2.6in]{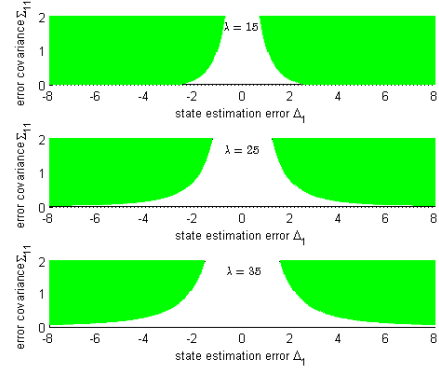}}
\caption{Decision region between the dormant/active mode w.r.t.  $\Delta_1$ and $\Sigma_{11}$. The system variables and parameters are configured as follows: $\Delta_2=1$, $\eta_{th}=2.5$, $\boldsymbol{\Sigma}=[\Sigma_{ij}]$ with $\Sigma_{12}=\Sigma_{21}=0.1$ and $\Sigma_{22}=0.5$, $\sigma^\ast=2$, $\widetilde{\mathbf{B}}=\mathbf{Q}=\mathbf{R}=\widetilde{\mathbf{W}}=\mathbf{I}$, $\bar{F}=1$, $N_t=2$, $N_r=2$, $\tau=0.05$.}	\label{decregion2}\vspace{-0.7cm}
\end{figure}

\begin{proof}
	please refer to Appendix F.
\end{proof}

As a result, we adopt the following approximation for the solution of the PDE  in (\ref{bellman2}):
\begin{align}	\label{approxvaluefunc}
	\nabla_{\boldsymbol{\Delta}} V\approx\left\{
	\begin{aligned}
		 & \left(\Phi_1(\boldsymbol{\Sigma})+\Phi_1^T(\boldsymbol{\Sigma})\right)\boldsymbol{\Delta},  \quad \text{if } \|\boldsymbol{\Delta} \|<\eta_{th}	\\
		 &  \left(\Phi_2(\boldsymbol{\Sigma})+\Phi_2^T(\boldsymbol{\Sigma})\right)\boldsymbol{\Delta} ,  \quad \text{if } \|\boldsymbol{\Delta} \|>\eta_{th} 
	\end{aligned}
	\right.
\end{align}
where  $ \eta_{th} >0$ is a solution parameter.

\begin{Remark}	[Structure of Decision Region]	\label{remark55}
	The  decision region between the active/dormant modes of $\mathbf{F}^\ast$ in Theorem \ref{thmpower} is jointly determined by the MIMO fading channel $\sigma^\ast$, the state estimation error $\boldsymbol{\Delta}$ and the one-step state prediction error covariance $\boldsymbol{\Sigma}$. The decision region has the following properties:
	\begin{itemize}
		\item 	\textbf{Shape of the Decision Region Boundary between the Active/Dormant Modes:} 
		Fig. \ref{decregion2}(a)  shows the shape of the decision region boundary between the active mode (when $\lambda<\nu^\ast \sigma^\ast$) and dormant mode (when $\lambda>\nu^\ast \sigma^\ast$). The dynamic threshold  $\nu^\ast$ grows w.r.t. $\|\boldsymbol{\Delta}\|^2$ and $\Sigma_{kk}$  at the order of $\mathcal{O}\left(\|\boldsymbol{\Delta}\|^2\right)$ and   $\mathcal{O}\left(\Sigma_{kk}^2\right)$ for all $k$, respectively. \txtblue{This is reasonable because  large state estimation error or large error covariance means  there is {\em urgency} in delivering information to the controller, which leads to activation of  the sensor transmission more frequently.}		
		\item	\textbf{Impact of Plant Dynamics on Decision Region:} 
		The active mode region enlarges as the instability degree of the plant dynamics $\widetilde{\mathbf{A}}$ increases as shown in Fig. \ref{decregion2}(b).  This is reasonable because unstable plant means it is more difficult for stabilization and hence, active mode covers  a larger  region to reach a lower plant estimation cost. 
		\item	\textbf{Impact of Communication Power Price on Decision Region:} 
		The active mode region  enlarges as the communication power price $\lambda$ decreases as shown in Fig. \ref{decregion2}(c). This means that for a smaller power price, it is appropriate to have  a large decision region for active mode so as to reach a low joint plant and communication cost.~\hfill~\IEEEQED
	\end{itemize}
\end{Remark}

\vspace{-0.3cm}
\subsection{Implementation Considerations of the MIMO \txtblue{AF} Precoding Solution} \label{implantsdsd}
Fig. \ref{approxqua} illustrates a sample path of the state evolutions and the transitions between the active and dormant modes under  the MIMO \txtblue{AF}  precoding solution in Theorem \ref{thmpower}. It can be observed that the state estimation error $\|\boldsymbol{\Delta}\|$ increases during the dormant modes and is reset during the active mode (event-driven when $\lambda   < \sigma^\ast \nu^\ast $). As such, the solution in Theorem \ref{thmpower} has an \emph{event-driven  control} structure with \emph{aperiodic reset} of $\|\boldsymbol{\Delta}\|$. We summarize the solution as follows:

\begin{figure}
\centering
\subfigure[Evolutions of the dynamic threshold $\sigma^\ast \nu^\ast$ and  transitions between the active and dormant modes.]{
\includegraphics[width=2.2in]{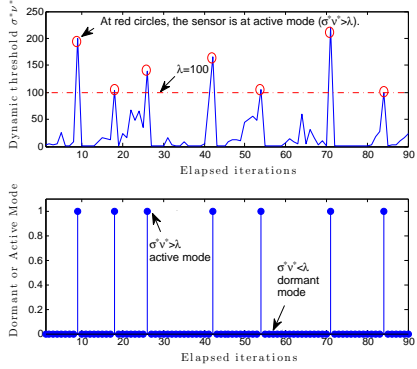}}
\hspace{0.5cm}
\centering
\subfigure[Evolutions of state estimation error $\|\boldsymbol{\Delta}\|^2$ and virtual state estimation error $\|\widetilde{\boldsymbol{\Delta}}\|^2$ in (\ref{dynvirtual}).]{
\centering\includegraphics[width=2.25in]{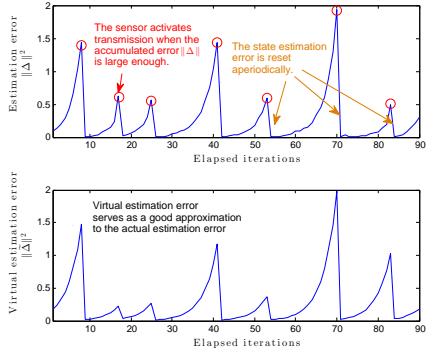}}
\caption{Illustrations of the evolutions of the  dynamic threshold, transitions between the active and dormant modes, and evolutions of the  state estimation error. The system parameters are configured as in Fig. \ref{decregion2} with $\widetilde{\mathbf{A}}=2\mathbf{I}$, $\bar{F}=1$ and $\lambda=100$.}
\label{approxqua}\vspace{-0.5cm}
\end{figure}

\begin{Algorithm}	\emph{(Dynamic MIMO \txtblue{AF} Precoding with Aperiodic Reset):}	\label{algo11}
\begin{figure}
\centering
  \includegraphics[width=3.2in]{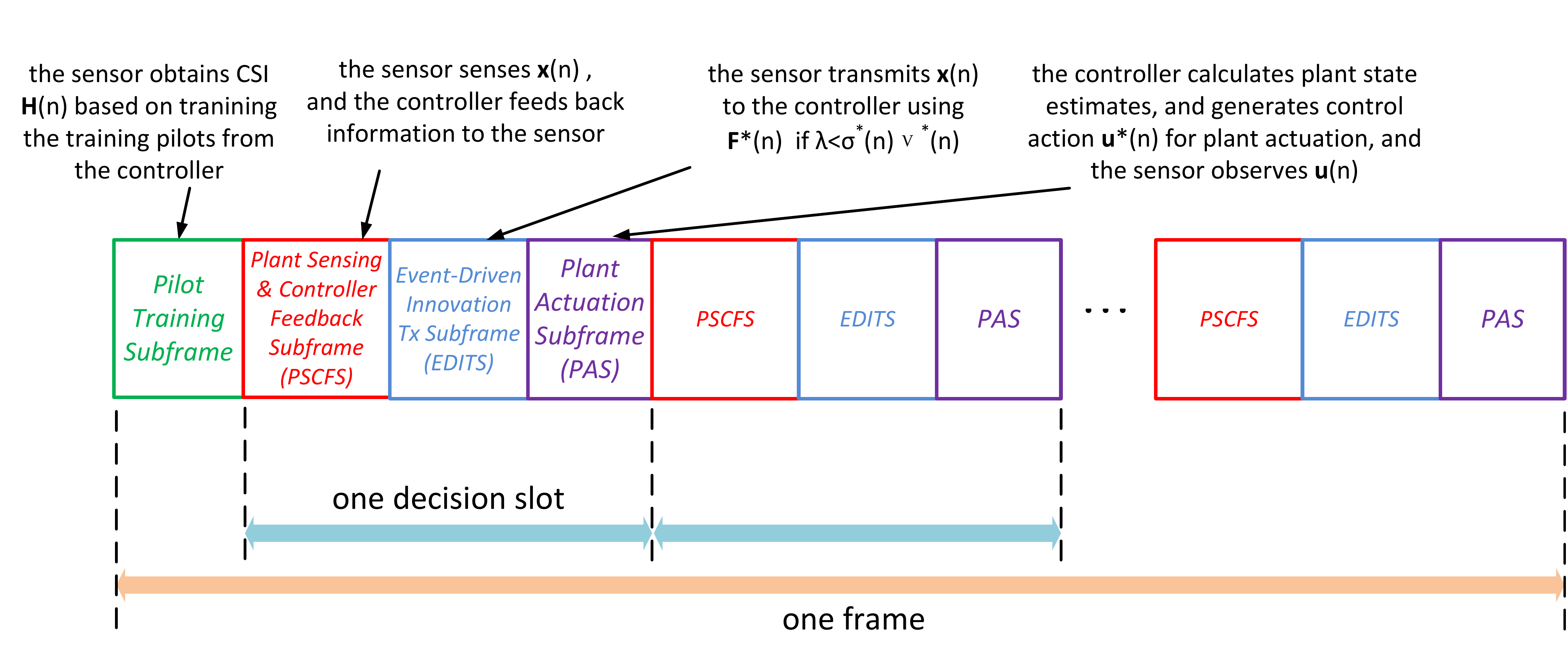}
  \caption{Illustrations of the frame structure.   The \txtblue{uplink training pilot from the controller} is transmitted at the beginning  of a frame once every coherence time, and the event-driven \txtblue{AF} precoding at the controller will be triggered at  every slot if $\lambda   < \sigma^\ast \nu^\ast$, where  $\sigma^\ast$ depends on the MIMO channel fading matrix and $\nu^\ast$ depends on the state estimation error $\boldsymbol{\Delta}$ and error covariance $\boldsymbol{\Sigma}$.}
  \label{evedriNSCasda1}\vspace{-0.8cm}
\end{figure}

	
	The time slots are  grouped into frames as illustrated in Fig. \ref{evedriNSCasda1}. \txtblue{The controller  transmits uplink training pilots at the beginning of a frame to the sensor, and  the sensor  estimates  the MIMO channel fading matrix $\mathbf{H}$.} At the beginning of the $n$-th slot,  
	\begin{itemize}
		\item	\textbf{Step 1 [Plant State Sensing of the Sensor and Information Feedback of the Controller]:}  The sensor  samples the plant state $\mathbf{x}(n)$.  If the rank of the feedback gain matrix  in (\ref{statsrr}) is less than  $L$ (i.e., $\text{rank}(\boldsymbol{{\Psi}})<L$), the controller will feed back a $(L-\text{rank}(\boldsymbol{\Psi}))$--dimensional  vector\footnote{specifically, $\widetilde{\mathbf{u}}^0(n-1)=\boldsymbol{\Psi}^0 \hat{\mathbf{x}}(n-1)$ where $\boldsymbol{\Psi}^0 \in \mathbb{R}^{(L-\text{rank}(\boldsymbol{\Psi}))\times L}$ and the rows of $\boldsymbol{\Psi}^0$ are the basis that spans the null space of $\boldsymbol{\Psi}$.} $\widetilde{\mathbf{u}}^0(n-1)$ to the sensor, which is a projection of $\hat{\mathbf{x}}(n-1)$ on the null space of $\boldsymbol{\Psi}$.  Otherwise, the controller does not need to feed back. 			
		\item	\textbf{Step 2 [Event-Driven \txtblue{AF} Precoding  and Plant State Transmission at the Sensor]: }  Based on ${\mathbf{u}}(n-1)$ from the plant and  the feedback $\widetilde{\mathbf{u}}^0(n-1)$  from the controller (if $\text{rank}(\boldsymbol{\Psi})<L$),   the sensor first calculates\footnote{Based on  $\mathbf{u}(n-1)$ and $\widetilde{\mathbf{u}}^0(n-1)$, the sensor first calculates $\hat{\mathbf{x}}(n-1)$. Then, it calculates $\boldsymbol{\Delta}(n-1)=\mathbf{x}(n-1)-\hat{\mathbf{x}}(n-1)$, $\boldsymbol{\Sigma}(n)$ using (\ref{kalman31}), which are use to further calculate  $\nu^\ast(n)$ in  Theorem \ref{thmpower}.} the dynamic threshold $\nu^\ast(n)$ according to Theorem \ref{thmpower}. If $\lambda   > \sigma^\ast(n) \nu^\ast(n)$, the sensor is in dormant mode at the current slot. Otherwise, the sensor calculates $\mathbf{F}^\ast(n)$ according to Theorem \ref{thmpower} and transmits the  $\mathbf{x}(n)$ using $\mathbf{F}^\ast(n)$.		
		\item	\textbf{Step 3 [Plant State Estimation and Plant Actuation]: } The controller calculates  the plant state estimate $\hat{\mathbf{x}}(n)$ based on the received signal $\mathbf{y}(n)$ and the local information, and generates plant control action $\mathbf{u}^\ast(n)$ according to (\ref{statsrr}). The actuator uses $\mathbf{u}^\ast(n)$ to drive the plant to a new state. The sensor observes the plant control action $\mathbf{u}^\ast(n)$.~\hfill~\IEEEQED
	\end{itemize}
\end{Algorithm}


Observe that when $\text{rank}(\boldsymbol{\Psi})<L$, the controller is required to feed back $\widetilde{\mathbf{u}}^0(n-1)$ to the sensor  every time slot. This is needed for the sensor to obtain $ \boldsymbol{\Delta}(n-1)$ in order to calculate the dynamic threshold $\nu^\ast(n)$. However, this feedback may be undesirable from the signaling overhead perspective. In fact, the sensor can approximate $\boldsymbol{\Delta}(n)$ using a \emph{virtual state estimation error} $\widetilde{\boldsymbol{\Delta}}(n)$ with the following dynamics:
\begin{align}	\label{dynvirtual}
	\widetilde{\boldsymbol{\Delta}}(n) =\big(\mathbf{I} - \mathbf{K}^a(n) \mathbf{E}^a(n)\big)\mathbf{A} \boldsymbol{\widetilde{\Delta}}(n-1) 
\end{align}
Note that the R.H.S. of (\ref{dynvirtual}) is the conditional mean drift of $\boldsymbol{\Delta}(n)$ in (\ref{kalman2}) and hence, $\widetilde{\boldsymbol{\Delta}}(n)$ tracks the mean of the actual $\boldsymbol{\Delta}(n)$. As a result, the sensor can use $\widetilde{\boldsymbol{\Delta}}(n)$ (which can be obtained locally at the sensor) instead of $\boldsymbol{\Delta}(n)$ to compute the MIMO \txtblue{AF} precoding $\mathbf{F}^\ast$ in Theorem \ref{thmpower}  as illustrated in Fig. \ref{approxqua}b, and no feedback from the controller is needed in Step 1 of Algorithm \ref{algo11}. 

\vspace{-0.3cm}

\subsection{Performance  Analysis}
We are interested to analyze the achievable system performance (MSE of the plant state estimation) using the proposed event-driven MIMO \txtblue{AF} precoding solution  $\mathbf{F}^\ast$ in Theorem \ref{thmpower}, and how the system parameters such as the \txtblue{maximum AF}  gain $\bar{F}$ and the average power price $\lambda$ affects the MSE. The result is summarized below.
\begin{Theorem}	[Achievable MSE under $\widetilde{\Omega}^\ast$]	\label{stab}
	For any given $\bar{F}>0$,  the MSE under $\widetilde{\Omega}^\ast$ is bounded, i.e., $\lim_{n \rightarrow \infty}\mathbb{E}^{\widetilde{\Omega}^\ast}\big[\left\|\boldsymbol{\Delta}(n)\right\|^2 \big]< \infty$. Furthermore, the MSE  satisfies:
	\begin{align}	\label{msebound}
		\mathbb{E}^{\widetilde{\Omega}^\ast}\left[\left\|\boldsymbol{\Delta} \right\|^2\right]\leq   \text{Tr}\big(   \mathbf{P} - \mathbf{P}G(\mathbf{P}, \bar{F}, \lambda)\mathbf{P} \big)  
	\end{align}
	where $G(\mathbf{P}, \bar{F}, \lambda)\triangleq \mathbb{E}\left[\int_{\lambda/\nu^\ast }^\infty \left(\frac{2\bar{F}x \mathbf{q}_1\mathbf{q}_1^T }{1+2\bar{F}x \mathbf{q}_1^TP \mathbf{q}_1}\right)f_{\sigma^\ast}(x)dx\right] $ and $f_{\sigma^\ast}(x)$ is the PDF of $\sigma^\ast$ (given in equ. (6) of \cite{eigenvaluesds}) and $\mathbf{P}$ satisfies the following fixed-point equation:
	\begin{align}	\label{fxptequ}
		\mathbf{P} = \mathbf{A}\left(\mathbf{P} - \mathbf{P}G(\mathbf{P}, \bar{F}, \lambda)\mathbf{P}\right)\mathbf{A}^T + \mathbf{W}
	\end{align}
\end{Theorem}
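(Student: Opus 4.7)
The plan is to convert the stochastic Kalman covariance recursion (\ref{kalman31}) into a deterministic majorizing Riccati-like recursion whose fixed point is $\mathbf{P}$, by plugging in the event-driven structure of $\mathbf{F}^\ast$ from Theorem \ref{thmpower} and averaging over the i.i.d. fading. The MSE bound then follows from the standard posterior identity for the Kalman filter, and stability is an immediate consequence of boundedness of $\mathbf{P}$.

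First, I would substitute the closed-form $\mathbf{F}^\ast$ into (\ref{kalman31}). In the dormant region $\{\sigma^\ast < \lambda/\nu^\ast\}$, the innovation term vanishes and the recursion collapses to $\boldsymbol{\Sigma}(n+1) = \mathbf{A}\boldsymbol{\Sigma}(n)\mathbf{A}^T + \mathbf{W}$. In the active region $\{\sigma^\ast > \lambda/\nu^\ast\}$, write $\mathbf{F}^\ast = \sqrt{\bar{F}}\,\mathbf{U}\boldsymbol{\Upsilon}$ and use the SVD of $\mathbf{H}$: the augmented effective channel $\mathbf{E}^a$ reduces to a rank-one operator along $\mathbf{q}_1$ with squared gain $2\bar{F}\sigma^\ast$ (the factor $2$ arising from the real/imaginary stacking of the augmented complex Kalman filter). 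Applying the Sherman--Morrison identity, the covariance reduction becomes
\begin{align*}
\boldsymbol{\Sigma}(\mathbf{E}^a)^\dagger\bigl(\mathbf{E}^a\boldsymbol{\Sigma}(\mathbf{E}^a)^\dagger + \mathbf{I}\bigr)^{-1}\mathbf{E}^a\boldsymbol{\Sigma}
= \frac{2\bar{F}\sigma^\ast\, \boldsymbol{\Sigma}\mathbf{q}_1\mathbf{q}_1^T\boldsymbol{\Sigma}}{1 + 2\bar{F}\sigma^\ast\, \mathbf{q}_1^T\boldsymbol{\Sigma}\mathbf{q}_1}.
\end{align*}

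Second, I would take the conditional expectation over $\mathbf{H}$ using the density $f_{\sigma^\ast}$ of the largest squared singular value \cite{eigenvaluesds}, splitting the integration range at the dynamic threshold $\lambda/\nu^\ast$. Combining the two regions yields exactly $\boldsymbol{\Sigma}\,G(\boldsymbol{\Sigma},\bar{F},\lambda)\,\boldsymbol{\Sigma}$, so the (expected) recursion becomes $\mathbb{E}[\boldsymbol{\Sigma}(n+1) | \boldsymbol{\Sigma}(n)] = \mathbf{A}\bigl(\boldsymbol{\Sigma}(n) - \boldsymbol{\Sigma}(n) G(\boldsymbol{\Sigma}(n),\bar{F},\lambda) \boldsymbol{\Sigma}(n)\bigr)\mathbf{A}^T + \mathbf{W}$. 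Using concavity of the fractional-linear Riccati map in $\boldsymbol{\Sigma}$ and Jensen's inequality, this produces the deterministic majorization $\mathbb{E}[\boldsymbol{\Sigma}(n+1)] \preceq \mathbf{A}\bigl(\mathbb{E}[\boldsymbol{\Sigma}(n)] - \mathbb{E}[\boldsymbol{\Sigma}(n)] G(\mathbb{E}[\boldsymbol{\Sigma}(n)],\bar{F},\lambda) \mathbb{E}[\boldsymbol{\Sigma}(n)]\bigr)\mathbf{A}^T + \mathbf{W}$. The right-hand side defines a monotone, continuous, order-preserving map on $\mathbb{S}_+^L$ that is dominated by a linear growth operator; standard Riccati theory (a Tarski/Brouwer fixed-point argument combined with monotone convergence) then yields existence of a positive semidefinite solution $\mathbf{P}$ of (\ref{fxptequ}) and $\limsup_n \mathbb{E}[\boldsymbol{\Sigma}(n)] \preceq \mathbf{P}$.

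Third, I would conclude via the posterior identity. The measurement update yields $\mathbb{E}[\boldsymbol{\Delta}(n)\boldsymbol{\Delta}(n)^T | \boldsymbol{\Sigma}(n), \mathbf{H}(n)] = \boldsymbol{\Sigma}(n) - \boldsymbol{\Sigma}(n)(\mathbf{E}^a)^\dagger(\mathbf{E}^a\boldsymbol{\Sigma}(\mathbf{E}^a)^\dagger+\mathbf{I})^{-1}\mathbf{E}^a\boldsymbol{\Sigma}(n)$. Averaging as in Step 2 and taking trace gives $\mathbb{E}^{\widetilde{\Omega}^\ast}[\|\boldsymbol{\Delta}(n)\|^2] \leq \mathrm{Tr}\bigl(\mathbb{E}[\boldsymbol{\Sigma}(n)] - \mathbb{E}[\boldsymbol{\Sigma}(n)] G(\mathbb{E}[\boldsymbol{\Sigma}(n)],\bar{F},\lambda)\mathbb{E}[\boldsymbol{\Sigma}(n)]\bigr)$, which at the limit yields (\ref{msebound}). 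Boundedness of $\mathbf{P}$ immediately delivers the first claim $\lim_n \mathbb{E}^{\widetilde{\Omega}^\ast}[\|\boldsymbol{\Delta}(n)\|^2] < \infty$.

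The main obstacle is the state-dependent coupling between $\nu^\ast$ and $(\boldsymbol{\Delta},\boldsymbol{\Sigma})$ through $\nabla_{\boldsymbol{\Delta}} V$ in Lemma \ref{solpde}: the indicator $\mathbf{1}\{\sigma^\ast > \lambda/\nu^\ast\}$ makes both the integration limit and the rank-one direction $\mathbf{q}_1$ random and state-dependent, so the Jensen/monotonicity step that collapses the stochastic recursion into its deterministic upper bound is delicate. A secondary difficulty is showing that the modified Riccati operator defined by $G$ has a unique, globally attracting fixed point; this likely requires a Riccati-monotonicity contraction argument tailored to the rank-one, averaged update, rather than a direct invocation of classical DARE theory.
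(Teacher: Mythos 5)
Your plan follows the same route as the paper's Appendix G: reduce the covariance update under $\mathbf{F}^\ast$ to a rank-one correction along $\mathbf{q}_1$ with effective gain $2\bar{F}\sigma^\ast$ (the factor $2$ from the augmented real/imaginary stacking, exactly as in the paper's step from (\ref{77eqersd}) to (\ref{77eqer})), average over the fading with the threshold $\lambda/\nu^\ast$ against $f_{\sigma^\ast}$ to produce $G$ as in (\ref{77eqer222}), majorize the expected stochastic Riccati recursion by a deterministic recursion whose limit is $\mathbf{P}$, and finish with the posterior covariance identity (\ref{77equation}) plus a trace, as in (\ref{93quationsd}).

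The difference is in how the two obstacles you yourself flag get discharged, and in your proposal they are not actually discharged. The paper does not run a from-scratch Jensen/monotone-fixed-point argument: it first verifies that the closed-loop system under $\widetilde{\Omega}^\ast$ is weakly controllable and weakly observable and then imports the theory of Riccati equations with random parameters from \cite{stablity1} — Lemma 3.2 there gives convergence of $\mathbb{E}^{\widetilde{\Omega}^\ast}[\boldsymbol{\Sigma}(n)]$ with finite trace (this, not mere boundedness of $\mathbf{P}$, is what underwrites the claim $\lim_{n}\mathbb{E}^{\widetilde{\Omega}^\ast}[\|\boldsymbol{\Delta}(n)\|^2]<\infty$), Theorem 3.3 gives the majorizations $\mathbb{E}^{\widetilde{\Omega}^\ast}[\boldsymbol{\Sigma}(n)]\leq \mathbf{P}(n)$ and $\overline{\boldsymbol{\Sigma}}\leq \mathbf{P}$, with $\mathbf{P}$ obtained as the limit of the iteration (\ref{pnsds121}) started from $\mathbf{P}(0)=0$, so no separate uniqueness or global-attractivity argument for the fixed point is needed, and Lemma 3.1/eq.~(13) there justifies the corresponding bound for the posterior covariance $\boldsymbol{\Lambda}(n)$. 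Your Jensen step as written is the weak point: since $\mathbf{E}^a=\mathbf{H}\mathbf{F}^\ast$ depends on $(\boldsymbol{\Delta},\boldsymbol{\Sigma})$ through the threshold $\lambda/\nu^\ast$ and the direction $\mathbf{q}_1$, the update is not a concave map of $\boldsymbol{\Sigma}$ with independent random coefficients, so $\mathbb{E}[\boldsymbol{\Sigma}(n+1)]\preceq \mathbf{A}\big(\mathbb{E}[\boldsymbol{\Sigma}(n)]-\mathbb{E}[\boldsymbol{\Sigma}(n)]\,G\,\mathbb{E}[\boldsymbol{\Sigma}(n)]\big)\mathbf{A}^T+\mathbf{W}$ does not follow from concavity alone (note also that $G$ in the theorem carries an outer expectation over the state, so it is not a function of the covariance argument only). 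To complete your version you would either have to cite the same machinery from \cite{stablity1}, or carry out that majorization carefully for the state-dependent, rank-one update; everything else in your decomposition matches the paper's calculation.
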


\begin{proof}
	please refer to Appendix G.
\end{proof}

\txtblue{Theorem \ref{stab} not only gives an upper bound of the MSE under $\widetilde{\Omega}^\ast$, but also leads to the result that $\widetilde{\Omega}^\ast$ is an admissible policy according to Definition \ref{admisscontrolpol}, as  shown below:
\begin{Corollary}	\emph{(Admissibility of the MIMO AF Precoding Policy $\widetilde{\Omega}^\ast$):}	\label{remarkwwewxx1}
	For any given $\bar{F}>0$, $\widetilde{\Omega}^\ast$ is an admissible policy according to Definition \ref{admisscontrolpol}. That is,  the plant state process  under $\widetilde{\Omega}^\ast$ and $\mathbf{u}^\ast$ in (\ref{statsrr}) is bounded, i.e.,  $\lim_{n \rightarrow \infty}\mathbb{E}^{\widetilde{\Omega}^\ast}\big[\left\|\mathbf{x}(n)\right\|^2 \big]< \infty$ under  $\mathbf{u}^\ast$ in (\ref{statsrr}).
\end{Corollary}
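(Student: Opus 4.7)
The plan is to decompose $\mathbf{x}(n)=\hat{\mathbf{x}}(n)+\boldsymbol{\Delta}(n)$ and bound each piece in mean square. By the elementary inequality $\|\mathbf{x}(n)\|^{2}\leq 2\|\hat{\mathbf{x}}(n)\|^{2}+2\|\boldsymbol{\Delta}(n)\|^{2}$, it suffices to show that both $\mathbb{E}^{\widetilde{\Omega}^{\ast}}[\|\boldsymbol{\Delta}(n)\|^{2}]$ and $\mathbb{E}^{\widetilde{\Omega}^{\ast}}[\|\hat{\mathbf{x}}(n)\|^{2}]$ stay bounded as $n\to\infty$. The first bound is already available: Theorem~\ref{stab} gives the explicit upper bound $\mathbb{E}^{\widetilde{\Omega}^{\ast}}[\|\boldsymbol{\Delta}\|^{2}]\leq\mathrm{Tr}(\mathbf{P}-\mathbf{P}G(\mathbf{P},\bar{F},\lambda)\mathbf{P})$ for any $\bar{F}>0$, so the state estimation error has uniformly bounded second moment under $\widetilde{\Omega}^{\ast}$.

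Next I would derive a clean recursion for $\hat{\mathbf{x}}(n)$ under the CE controller. Substituting $\mathbf{u}^{\ast}(n)=\boldsymbol{\Psi}\hat{\mathbf{x}}(n)$ into the plant dynamics (\ref{plantain}) and writing $\mathbf{x}(n)=\hat{\mathbf{x}}(n)+\boldsymbol{\Delta}(n)$ gives
\begin{align}
\hat{\mathbf{x}}(n+1)=\mathbf{A}_{c}\hat{\mathbf{x}}(n)+\mathbf{A}\boldsymbol{\Delta}(n)+\mathbf{w}(n)-\boldsymbol{\Delta}(n+1),\notag
\end{align}
where $\mathbf{A}_{c}\triangleq\mathbf{A}+\mathbf{B}\boldsymbol{\Psi}$ is the closed-loop system matrix. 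A standard LQG fact—guaranteed by the controllability of $(\mathbf{A},\mathbf{B})$ in Assumption~\ref{assumpstabl} together with the observability of $(\mathbf{A},\mathbf{Q}^{1/2})$ noted in the DARE footnote—is that $\mathbf{A}_{c}$ is Schur stable, i.e.\ its spectral radius is strictly less than one. Hence there exists a positive definite $\mathbf{M}$ satisfying the discrete Lyapunov equation $\mathbf{A}_{c}^{T}\mathbf{M}\mathbf{A}_{c}-\mathbf{M}=-\mathbf{I}$.

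Using $V(\hat{\mathbf{x}})=\hat{\mathbf{x}}^{T}\mathbf{M}\hat{\mathbf{x}}$ as a Lyapunov function and applying the above recursion, I would expand $\mathbb{E}[V(\hat{\mathbf{x}}(n+1))]$, keep the contracting term $\hat{\mathbf{x}}(n)^{T}(\mathbf{A}_{c}^{T}\mathbf{M}\mathbf{A}_{c})\hat{\mathbf{x}}(n)$, and control the cross terms using Cauchy--Schwarz, $\mathbb{E}[\|\mathbf{w}(n)\|^{2}]=\mathrm{Tr}(\mathbf{W})<\infty$, and the uniform bound on $\mathbb{E}[\|\boldsymbol{\Delta}(n)\|^{2}]$ from Theorem~\ref{stab}. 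The contracting term gives a $(1-\kappa)\mathbb{E}[V(\hat{\mathbf{x}}(n))]$ factor with $\kappa\in(0,1)$, while the disturbance-plus-error terms contribute a finite constant $C<\infty$. The resulting drift inequality $\mathbb{E}[V(\hat{\mathbf{x}}(n+1))]\leq(1-\kappa)\mathbb{E}[V(\hat{\mathbf{x}}(n))]+C$ immediately yields $\limsup_{n}\mathbb{E}[\|\hat{\mathbf{x}}(n)\|^{2}]\leq C/(\kappa\,\mu_{\min}(\mathbf{M}))<\infty$. Combining this with the $\boldsymbol{\Delta}$ bound finishes the proof.

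The main obstacle I expect is handling the cross terms arising from $\boldsymbol{\Delta}(n)$ and $\boldsymbol{\Delta}(n+1)$ in the Lyapunov drift: strictly speaking the state estimation error is not independent of $\hat{\mathbf{x}}(n)$ (both depend on the common past channel and noise realizations). This is dealt with cleanly by invoking the Cauchy--Schwarz inequality $|\mathbb{E}[\hat{\mathbf{x}}(n)^{T}\mathbf{A}_{c}^{T}\mathbf{M}\mathbf{A}\boldsymbol{\Delta}(n)]|\leq\varepsilon\mathbb{E}[V(\hat{\mathbf{x}}(n))]+C_{\varepsilon}\mathbb{E}[\|\boldsymbol{\Delta}(n)\|^{2}]$ for arbitrary $\varepsilon>0$, and choosing $\varepsilon$ small enough so that the contraction factor $(1-\kappa+\varepsilon)$ is still strictly less than one; the same trick absorbs the $\mathbf{w}(n)$ and $\boldsymbol{\Delta}(n+1)$ cross terms. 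Everything else is routine book-keeping.
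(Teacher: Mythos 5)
Your proposal is correct and follows the same overall strategy as the paper's Appendix H: decompose $\mathbf{x}(n)=\hat{\mathbf{x}}(n)+\boldsymbol{\Delta}(n)$, invoke Theorem \ref{stab} for the second moment of $\boldsymbol{\Delta}(n)$, and then exploit closed-loop stability of the CE controller to bound $\hat{\mathbf{x}}(n)$. The difference lies in how the $\hat{\mathbf{x}}$-recursion is handled. The paper takes $\mathbb{E}[\,\cdot\,|I_C(n+1)]$ of the plant dynamics, obtaining $\hat{\mathbf{x}}(n+1)=(\mathbf{A}+\mathbf{B}\boldsymbol{\Psi})\hat{\mathbf{x}}(n)+\hat{\mathbf{w}}(n)$ with $\hat{\mathbf{w}}(n)=\mathbb{E}[\mathbf{A}\boldsymbol{\Delta}(n)+\mathbf{w}(n)|I_C(n+1)]$; since $\boldsymbol{\Delta}(n)$ has zero mean given $I_C(n)$, the driving term is an innovation whose cross term with $\hat{\mathbf{x}}(n)$ vanishes, and the paper then concludes via the norm contraction $\|\mathbf{A}+\mathbf{B}\boldsymbol{\Psi}\|<1$. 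You instead keep the raw recursion $\hat{\mathbf{x}}(n+1)=\mathbf{A}_c\hat{\mathbf{x}}(n)+\mathbf{A}\boldsymbol{\Delta}(n)+\mathbf{w}(n)-\boldsymbol{\Delta}(n+1)$ and run a Lyapunov drift argument with $\mathbf{M}$ solving $\mathbf{A}_c^T\mathbf{M}\mathbf{A}_c-\mathbf{M}=-\mathbf{I}$, absorbing the correlated cross terms by Young's/Cauchy--Schwarz inequalities. What your route buys is robustness: it needs only Schur stability of $\mathbf{A}_c$ (spectral radius strictly less than one), which is exactly what the standard LQG/DARE theory guarantees under Assumption \ref{assumpstabl} and the observability footnote, whereas the paper's inequality as written relies on the stronger operator-norm condition $\|\mathbf{A}+\mathbf{B}\boldsymbol{\Psi}\|<1$, which need not hold even for a Schur-stable closed loop; conversely, the paper's conditioning trick makes the cross terms disappear exactly rather than having to be dominated by a small-$\varepsilon$ argument. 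One minor point to make explicit in your write-up: Theorem \ref{stab} is stated as a limit, so you should note (as is immediate from $\boldsymbol{\Delta}(0)=0$ and finite per-step noise) that $\sup_n\mathbb{E}^{\widetilde{\Omega}^\ast}[\|\boldsymbol{\Delta}(n)\|^2]<\infty$, so that the constant $C$ in your drift inequality is uniform in $n$.
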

\begin{proof}
	please refer to Appendix H.
\end{proof}}

Therefore, $\bar{F}>0$ a sufficient condition for the stability of the  NCS and $\widetilde{\Omega}^\ast$ in Theorem \ref{thmpower} is an admissible policy (according to Definition \ref{admisscontrolpol}). In the following corollary, we discuss the impact of key system parameters on the MSE performance:
\begin{Corollary}	\emph{(Impact of System Parameters on MSE Performance):} \label{remarkwwewxx}	\
	\begin{itemize}
		\item	\textbf{MSE Upper Bound in (\ref{msebound}) vs Normalization Parameter $\bar{F}$:} The MSE upper bound in (\ref{msebound}) decreases at the order of $\mathcal{O}\left(\frac{1}{\bar{F}}\right)$ as $\bar{F}$ increases. 
		\item	\textbf{MSE Upper Bound in (\ref{msebound}) vs Communication Power Price $\lambda$:} The MSE upper bound in (\ref{msebound}) increases at the order of $\mathcal{O}\left(\frac{\exp(\lambda)}{\lambda^d}\right)$ (where $d\triangleq \min\{N_t, N_r\}$) as $\lambda$ increases.~\hfill~\IEEEQED
	\end{itemize}	
\end{Corollary}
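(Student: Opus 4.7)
The plan is to derive both scaling results from the closed-form MSE upper bound $\mathrm{Tr}\!\big(\mathbf{P}-\mathbf{P}G(\mathbf{P},\bar{F},\lambda)\mathbf{P}\big)$ in (\ref{msebound}), combined with the DARE-like fixed-point equation (\ref{fxptequ}) that implicitly defines $\mathbf{P}$ as a function of $(\bar{F},\lambda)$. The common strategy for each bullet is to first obtain an asymptotic expression for $G$ in the relevant parameter, then propagate it through (\ref{fxptequ}) by a perturbation argument, and finally substitute back into the trace formula.

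For the $\bar{F}$-scaling, I would use the elementary algebraic identity
\begin{align*}
\frac{2\bar{F}x}{1+2\bar{F}x\,\mathbf{q}_1^T\mathbf{P}\mathbf{q}_1}=\frac{1}{\mathbf{q}_1^T\mathbf{P}\mathbf{q}_1}-\frac{1}{2\bar{F}x(\mathbf{q}_1^T\mathbf{P}\mathbf{q}_1)^2}+\mathcal{O}(\bar{F}^{-2})
\end{align*}
inside the integral defining $G$ to write $G(\mathbf{P},\bar{F},\lambda)=G_\infty(\mathbf{P},\lambda)-\tfrac{1}{\bar{F}}G_1(\mathbf{P},\lambda)+\mathcal{O}(\bar{F}^{-2})$, with the leading term $G_\infty$ independent of $\bar{F}$. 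Substituting this expansion into (\ref{fxptequ}) and matching zeroth-order terms, one obtains a limiting fixed point $\mathbf{P}_\infty$ for which the residual satisfies $\mathbf{P}_\infty-\mathbf{P}_\infty G_\infty(\mathbf{P}_\infty,\lambda)\mathbf{P}_\infty=\mathbf{A}^{-1}(\mathbf{P}_\infty-\mathbf{W})\mathbf{A}^{-T}$. Arguing via the high-SNR Kalman interpretation of $G_\infty$ (whose leading order coincides with the inverse of the prior covariance, so that the one-step posterior residual collapses), and then carrying the $1/\bar{F}$ correction through the linearised fixed-point equation, one concludes $\mathrm{Tr}(\mathbf{P}-\mathbf{P}G\mathbf{P})=\mathcal{O}(1/\bar{F})$.

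For the $\lambda$-scaling, the central observation is that the lower integration limit $\lambda/\nu^\ast$ tends to infinity as $\lambda\to\infty$. Using the explicit PDF $f_{\sigma^\ast}$ from eq.\ (6) of \cite{eigenvaluesds}, whose large-$x$ tail is of Gamma type $x^{\beta}e^{-x}$ with exponent $\beta$ determined by $(N_t,N_r)$, I would apply the incomplete-gamma tail estimate $\int_a^\infty x^\beta e^{-x}\,dx\sim a^\beta e^{-a}$ together with the uniform bound $\frac{2\bar{F}x}{1+2\bar{F}xc}\leq 1/c$ to obtain $G(\mathbf{P},\bar{F},\lambda)=\mathcal{O}(\lambda^{\beta}e^{-\lambda})$ after averaging over the bounded state-dependent quantities $\nu^\ast$ and $\mathbf{q}_1$. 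Feeding this back into (\ref{fxptequ}), the dominant unstable-mode contribution in $\mathbf{A}\mathbf{P}\mathbf{A}^T$ must be balanced by $\mathbf{A}\mathbf{P}G\mathbf{P}\mathbf{A}^T$, and solving the essentially quadratic fixed-point relation forces $\|\mathbf{P}\|$ to grow like $1/\|G\|$. Tracking the trace through this balance and aggregating the $L$-dimensional contributions yields the announced order $\mathrm{Tr}(\mathbf{P}-\mathbf{P}G\mathbf{P})=\mathcal{O}(\exp(\lambda)/\lambda^{d})$.

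The main technical obstacle will be the coupling inside (\ref{fxptequ}): $\mathbf{P}$ enters $G$ both directly through the quadratic form $\mathbf{q}_1^T\mathbf{P}\mathbf{q}_1$ and indirectly through $\nu^\ast$ and $\mathbf{q}_1$ via the priority function of Lemma \ref{solpde}. Rigorously justifying the perturbation expansions---in particular establishing invertibility of the linearised fixed-point operator at $\mathbf{P}_\infty$ for Part 1, and pinning down the precise polynomial prefactor of $e^{\lambda}$ for Part 2---is the delicate step; once the tail estimate and the algebraic expansion are in place, the claimed orders follow by routine asymptotic bookkeeping.
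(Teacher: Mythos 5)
Your plan is correct in outline and follows essentially the same two-pronged strategy as the paper: a Taylor expansion of $G$ in $1/\bar{F}$ for the first bullet, and a tail estimate of order $\lambda^{d}e^{-\lambda}$ on $G$ pushed through the fixed-point equation (\ref{fxptequ}) for the second. The execution differs in two places worth noting. For the $\bar{F}$-scaling, the paper does \emph{not} perturb the fixed point at all: it expands the expectation in (\ref{77eqersd}) as $\mathbf{P}^{-1}-\mathcal{O}(1/\bar{F})\,\mathbf{P}^{-1}\mathbb{E}[(2\mathbf{H}\mathbf{F}^\ast(\mathbf{H}\mathbf{F}^\ast)^\dagger)^{-1}]\mathbf{P}^{-1}$ and substitutes directly into $\mathrm{Tr}(\mathbf{P}-\mathbf{P}G\mathbf{P})$, where the leading $\mathbf{P}\mathbf{P}^{-1}\mathbf{P}$ term cancels $\mathbf{P}$ and leaves $\mathcal{O}(1/\bar{F})$; your detour through $\mathbf{P}_\infty$, the linearised fixed-point operator and its invertibility is unnecessary, and in fact the crux you flag (that the zeroth-order term of your $G_\infty$ --- a rank-one average weighted by the activation probability --- should behave like the inverse prior covariance) is exactly the identification the paper makes at the expansion step itself, so your route does not avoid it, it only relocates it. For the $\lambda$-scaling, the paper makes your ``$\|\mathbf{P}\|$ grows like $1/\|G\|$'' balance precise via the uniform bound $\mathbf{q}_1\mathbf{q}_1^T/(\mathbf{q}_1^T\mathbf{P}\mathbf{q}_1)\preceq\mu_{\max}(\mathbf{P}^{-1})\mathbf{I}$, giving $G\preceq c(\lambda)\mu_{\max}(\mathbf{P}^{-1})\mathbf{I}$ with $c(\lambda)=\mathcal{O}(\lambda^{d}/\exp(\lambda))$, and then a comparison with the scaled Lyapunov equation $\mathbf{Z}=(1-c(\lambda)\kappa)\mathbf{A}\mathbf{Z}\mathbf{A}^T+\mathbf{W}$, whose series solution yields $\mathbf{P}\succeq\mathbf{Z}=\mathcal{O}(\exp(\lambda)/\lambda^{d})\mathbf{I}$ and hence the stated growth of the trace; this Lyapunov-comparison step is the concrete device you would need to complete your ``routine asymptotic bookkeeping,'' and it also settles the polynomial prefactor as $\lambda^{d}$ via the complementary CDF of $\sigma^\ast$ rather than a generic Gamma exponent $\beta$.
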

\begin{proof}
	please refer to Appendix I.
\end{proof}

The above results illustrate that while $\bar{F}> 0$ is sufficient to maintain NCS stability, the reward of using a larger $\bar{F}$ is to further suppress the MSE at the order of   $\mathcal{O}\left(\frac{1}{\bar{F}}\right)$. On the other hand, the MSE increases exponentially fast as the average power price $\lambda$ increases. 
\txtblue{\begin{Remark}[Extension to Complex-Valued Plant State]
	Our proposed  solution framework  can be easily extended to the case with a complex-valued plant state.  Specifically,  the dynamics of the continuous-time stochastic  plant system before sampling is given by 
\begin{align}	\label{cmasd1}
	\dot{\mathbf{x}}(n)=\widetilde{\mathbf{A}}\mathbf{x}(n)+\widetilde{\mathbf{B}}\mathbf{u}(n)+\widetilde{\mathbf{w}}(n)
\end{align}
where $\mathbf{x}(n) \in \mathbb{C}^{L \times 1}$ is the plant state process, $\mathbf{u}(n) \in \mathbb{C}^{M \times 1}$ is the plant control action, $\widetilde{\mathbf{A}}\in \mathbb{C}^{L\times L}$, $\widetilde{\mathbf{B}} \in \mathbb{C}^{L\times M}$, and $\widetilde{\mathbf{w}}(n) \sim \mathcal{CN}(0, \widetilde{\mathbf{W}})$ is an additive plant disturbance with  zero mean and covariance $\widetilde{\mathbf{W}} \in \mathbb{R}^{L\times L}$.  Similarly, we can obtain the optimal CE controller using the no dual effect property as in Lemma \ref{lemmadual}. We then  formulate a MIMO precoding  AF optimization problem as follows:
\bs\begin{align} 
	\min_{\Omega}	\quad  & \limsup_{N \rightarrow \infty} \frac{1}{N}\mathbb{E}^{\Omega}\Bigg[  \sum_{n=0}^{N-1} \bigg(\left(\boldsymbol{\Delta}^a\right)^\dagger(n)\mathbf{S}^a \boldsymbol{\Delta}^a(n)\notag \\
&	\hspace{1cm}+ \lambda  \text{Tr}\left(\mathbf{F}^\dagger(n) \mathbf{F}(n)\right)\bigg)\tau\Bigg] \notag 	
\end{align}\bsc
where\footnote{\txtblue{Note that the squared estimation error in the  per-stage cost can be written in an equivalent form as $\boldsymbol{\Delta}^\dagger(n)\mathbf{S} \boldsymbol{\Delta}(n)=\frac{1}{2}\left(\boldsymbol{\Delta}^a\right)^\dagger(n)\mathbf{S}^a \boldsymbol{\Delta}^a(n)$ for all $n$.}}  $\boldsymbol{\Delta}^a(n)=\left[ \begin{smallmatrix} 
\boldsymbol{\Delta}(n) \\
 \boldsymbol{\Delta}^\ddagger(n)
\end{smallmatrix} \right]$ is an augmented $2L \times 1$ plant state error and $\mathbf{S}^a=\diag\left(\mathbf{S}, \mathbf{S}^\ddagger\right)$. The system state is  $\boldsymbol{\chi}(n)\triangleq \big\{\boldsymbol{\Delta}^a(n-1), \boldsymbol{\Sigma}^a(n), \mathbf{H}(n)\big\}$, where  $\boldsymbol{\Sigma}^a(n)=\mathbb{E}\big[\big(\mathbf{x}^a(n)-\hat{\mathbf{x}}^{a-}(n)\big]\big)\big(\mathbf{x}^a(n)-\hat{\mathbf{x}}^{a-}(n)\big)^\dagger \big|I_C(n-1) \big]$ is the one-step state prediction error covariance and $\hat{\mathbf{x}}^{a-}(n)\triangleq \mathbb{E}\big[\mathbf{x}^a(n)\big|I_C(n-1)\big]$ is the one-step plant state prediction. The state dynamic of  $\boldsymbol{\chi}(n)$ is given by
\begin{align}
	\boldsymbol{\Delta}^a(n) 	= & \big(\mathbf{I} - \mathbf{K}^a(n) \mathbf{E}^a(n)\big)\big(\mathbf{A}^a\boldsymbol{\Delta}^a(n-1)+ \mathbf{w}^a(n-1)\big)\notag\\
	& - \mathbf{K}^a(n)\mathbf{z}^a(n)		\notag	\\
	\boldsymbol{\Sigma}^a(n+1) = & \mathbf{A}^a\big(\boldsymbol{\Sigma}^a (n) - \boldsymbol{\Sigma}^a (n)(\mathbf{E}^a (n))^\dagger \big(\mathbf{E}^a (n) \boldsymbol{\Sigma}^a (n) \notag \\
	&\cdot  (\mathbf{E}^a (n))^\dagger  \left(\mathbf{A}^a\right)^\dagger+\mathbf{I}\big)^{-1}\mathbf{E}^a (n) \boldsymbol{\Sigma}^a (n)  \big)  + \mathbf{W}^a \notag
\end{align}
with initial conditions $\boldsymbol{\Delta}^a(n)=\mathbf{0}$ and $\boldsymbol{\Sigma}^a(0)=\mathbf{0}$, where $\mathbf{K}^a(n) =\boldsymbol{\Sigma}^a (n) (\mathbf{E}^a (n))^\dagger \big(\mathbf{E}^a (n) \boldsymbol{\Sigma}^a (n)    (\mathbf{E}^a (n))^\dagger    +\mathbf{I}\big)^{-1}$ is the Kalman gain, $\mathbf{E}^a(n)=\diag\left(\mathbf{E}(n), \mathbf{E}^\ddagger(n)\right)$, $\mathbf{A}^a=\diag\left(\mathbf{A}, \mathbf{A}^\ddagger\right)$, and $\mathbf{W}^a=\diag\left(\mathbf{W}, \mathbf{W}\right)$. Using the calculations for solving the PDE as in Lemma 4, we can  obtain the associated  closed-form  priority function and then obtain the  optimal event-driven MIMO AF precoding solution as in Theorem \ref{thmpower}, which is adaptive to the plant-related states $\big(\boldsymbol{\Delta}^a, \boldsymbol{\Sigma}^a\big)$ and CSI $\mathbf{H}$.~\hfill~\IEEEQED\end{Remark}}

\vspace{-0.2cm}

\section{Simulations}\label{simsection}

\begin{figure}[t]
  \centering
  \includegraphics[width=2.6in]{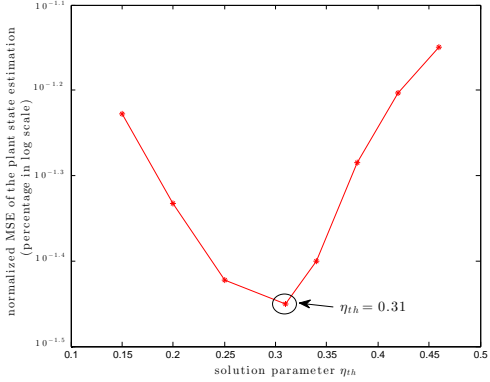}
  \caption{Normalized  MSE of the plant state estimation  versus $\eta_{th}$ under the MIMO \txtblue{AF} precoding  scheme  in Algorithm \ref{algo11}  at $\bar{F}=2$ and $\lambda=1500$.}\vspace{-0.2cm}
  \label{compareees2} \vspace{-0.5cm}
\end{figure}

In this section, we compare the  performance  of the proposed   MIMO \txtblue{AF} precoding  scheme with the following  four baselines. Baseline 1 refers to \emph{MIMO \txtblue{AF} precoding with  equal power across data streams (AP-EPDS)} \cite{baseline1}, where  $F=\sqrt{\frac{\bar{F}}{L}}\mathbf{U} \widetilde{\boldsymbol{\Upsilon}}$ and the $(l,l)$-th  element  in $\widetilde{\boldsymbol{\Upsilon}}$ is one for all $l=1,\dots, L$ and the other elements are zero.  Baseline 2 refers to \emph{MIMO \txtblue{AF}  precoding  for error-free channel (AP-EFC)} \cite{onoff2},  where   the sensor at each time slot determines whether to transmit  by minimizing the average  weighted state estimation error and the average number of channel uses, and  adopts the BF-EPDS if it transmits. Baseline 3 refers to \emph{MIMO \txtblue{AF}  precoding for SISO packet-dropout channel with special information structure (AP-SPSIS)} \cite{mdp1}, where    the sensor at each time slot determines whether to transmit  by minimizing the average weighted state estimation error and the average power cost, and  adopts the BF-EPDS if it transmits. The power action  depends on $\boldsymbol{\Theta}(n)$ and the CSI, where $\boldsymbol{\Theta}(n)\triangleq \mathbf{A}\boldsymbol{\Delta}(n-1)+\mathbf{w}(n-1)$. The solutions for Baseline 2 and  3 are obtained using the brute-force VIA. Baseline 4 refers to \emph{dynamic MIMO AF  precoding using approximate dynamic programming (DAP-ADP)} \cite{adp}, \cite{adpp}. We consider the quadratic  approximation of the  priority function $\widetilde{V}_r=r_1 \boldsymbol{\Delta}^T\boldsymbol{\Sigma}\boldsymbol{\Delta}+\mathbf{r}_2^T \boldsymbol{\Delta}$ (or equivalently $\nabla_{\boldsymbol{\Delta}}\widetilde{V}_r=r_1\boldsymbol{\Sigma}\boldsymbol{\Delta}+\mathbf{r}_2$), where $r\in \mathbb{R}$ and $\mathbf{r}_2 \in  \mathbb{R}^{L\times 1}$ are tunable parameters and  $ \boldsymbol{\Delta}^T\boldsymbol{\Sigma}\boldsymbol{\Delta}$ and $\boldsymbol{\Delta}$ are basis functions in the ADP. We adopt the average cost temporal-difference iteration learning  algorithm \cite{adp}, \cite{adpp}   to update $(r_1,\boldsymbol{r}_2)$ at each time slot. The MIMO AF precoding solution under ADP is similar to that in Theorem \ref{thmpower} with $V$ replaced by $\widetilde{V}_r$. We consider a MIMO NCS with parameters: $\mathbf{\widetilde{\mathbf{A}}}=\left( \begin{smallmatrix} 
  1  & 2\\
  -1 & 3 
\end{smallmatrix} \right)$, $\widetilde{\mathbf{B}}=\left( \begin{smallmatrix} 
  1 & 0.2\\
  0.1 & 1 
\end{smallmatrix} \right)$, $\widetilde{\mathbf{W}}=\text{diag}(1,2)$, $\mathbf{Q}=\text{diag}(1,2)$, $\mathbf{R}=\text{diag}(1,0.2)$,  $N_t=3$, $N_r=2$, and $\tau=0.05$s.

\begin{figure}
\begin{minipage}[t]{0.45\textwidth}
 \hspace{0.65cm}
  \includegraphics[width=2.6in]{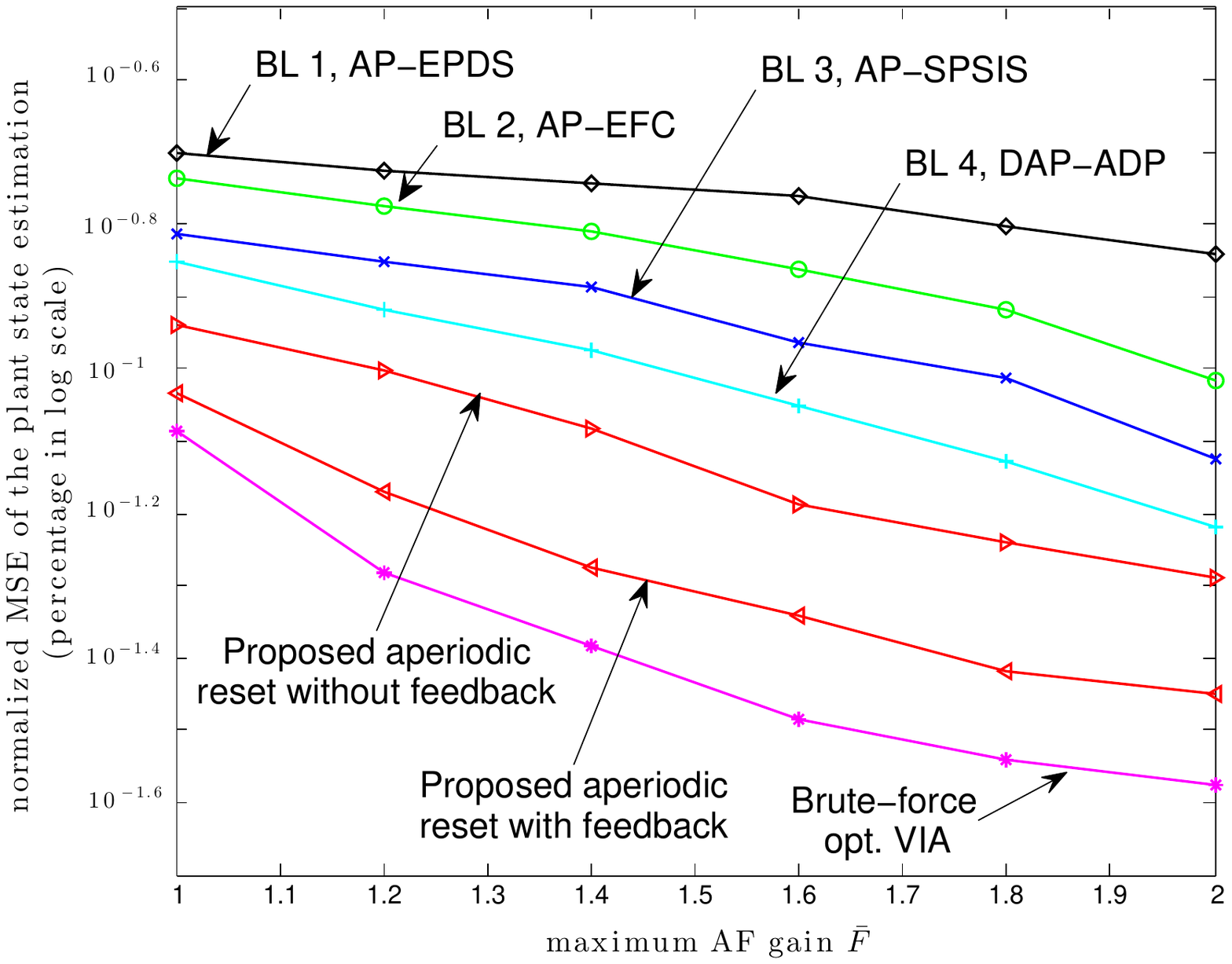}
  \caption{\txtblue{Normalized MSE of the plant state estimation  versus maximum AF  gain $\bar{F}$ at $\lambda=1500$.}}
  \label{comparsdeees2}
\end{minipage}
\hspace{0.03\textwidth}
\begin{minipage}[t]{0.45\textwidth}
  \centering
   \includegraphics[width=2.6in]{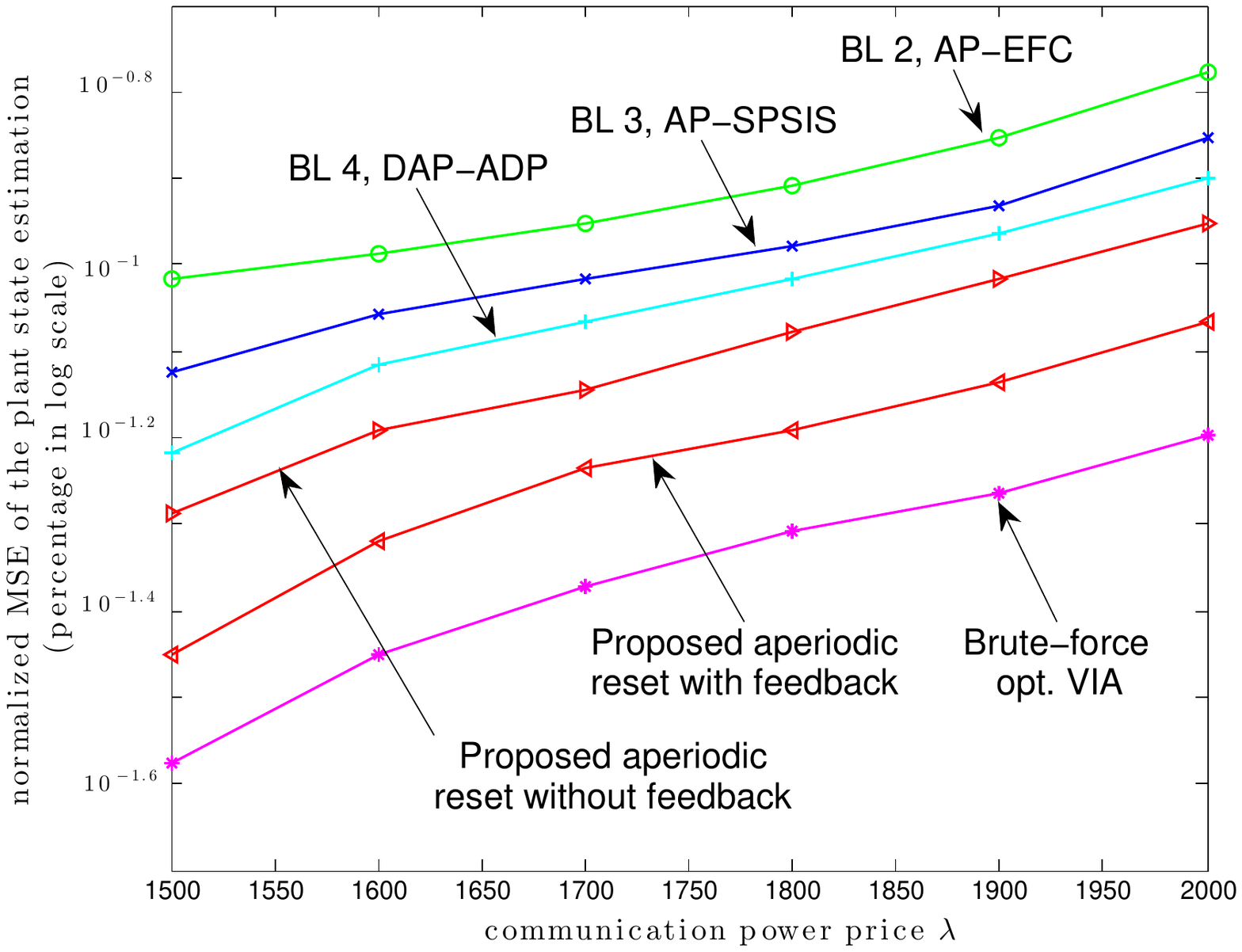}
  \caption{\txtblue{Normalized MSE of the plant state estimation  versus communication power price $\lambda$ at $\bar{F}=2$.}}
  \label{asddsadsd}\end{minipage}	\vspace{-0.5cm}
\end{figure}

\begin{table*}
	\centering
\begin{tabular}{|l | c c c c c c c c c c|}
	\hline
		  {\txtblue{$\lambda$}} & \multicolumn{1}{c|}{\txtblue{400}}  & \multicolumn{1}{c|}{\txtblue{800}}  & \multicolumn{1}{c|}{\txtblue{1000}} & \multicolumn{1}{c|}{\txtblue{1500}}  & \multicolumn{1}{c|}{\txtblue{2000}}  & \multicolumn{1}{c|}{\txtblue{3000}} & \multicolumn{1}{c|}{\txtblue{4000}}    & \multicolumn{1}{c|}{\txtblue{6000}}    \\
	\hline
		{\txtblue{Avg.   Pow. Gain Cost}}&   \multicolumn{1}{c|}{ \txtblue{0.1173}} &   \multicolumn{1}{c|}{ \txtblue{0.1088}} &   \multicolumn{1}{c|}{\txtblue{0.0920}} &    \multicolumn{1}{c|}{\txtblue{0.0873}}  &    \multicolumn{1}{c|}{\txtblue{0.0846}}    &    \multicolumn{1}{c|}{\txtblue{0.0817}}  &    \multicolumn{1}{c|}{\txtblue{0.0803}}   &    \multicolumn{1}{c|}{\txtblue{0.0784}}     	 \\ \hline
		 {\txtblue{Avg.   Abs. Pow. Cost (W)}}  &   \multicolumn{1}{c|}{ \txtblue{2.6306}} &   \multicolumn{1}{c|}{\txtblue{3.0451}} &   \multicolumn{1}{c|}{\txtblue{4.8488}} &    \multicolumn{1}{c|}{\txtblue{5.6796}}  &    \multicolumn{1}{c|}{\txtblue{6.3459}}    &    \multicolumn{1}{c|}{\txtblue{7.2648}}&    \multicolumn{1}{c|}{\txtblue{8.0645}}    &    \multicolumn{1}{c|}{\txtblue{9.1787}}   	 \\ \hline
	         \end{tabular}
	\caption{ \txtblue{Average  power gain cost  and average absolute power cost  under various  communication  power prices $\lambda$ at  $\tau=0.05$s and $\bar{F}=1$.}}
	\label{oneone}	\vspace{-0.5cm}
\end{table*}

\vspace{-0.3cm}
\subsection{Choice of the Solution Parameter $\eta_{th}$ in (\ref{approxvaluefunc})} \label{labelsdsa}

Fig.~\ref{compareees2} illustrates the normalized  MSE of the plant state estimation  versus different values of  $\eta_{th}$ under the MIMO \txtblue{AF} precoding  scheme  in Algorithm \ref{algo11} at  a \txtblue{maximum AF}  gain  $\bar{F}=2$ and communication power price $\lambda=1500$.  It can be observed that the average  normalized MSE achieves the minimum when $\eta_{th}$ is around 0.31. Therefore, we choose $\eta_{th}= 0.31$  when $\bar{F}=2$ and $\lambda=1500$. The optimal choices of $\eta_{th}$ at other \txtblue{maximum AF}  gains and communication power prices  can be obtained using similar methods.

\vspace{-0.5cm}

\subsection{Performance Comparisons}
Fig. \ref{comparsdeees2} illustrates the normalized MSE of the plant state estimation  versus  the \txtblue{maximum AF}  gain   $\bar{F}$ at communication power price $\lambda=1500$. It can be observed that there is significant performance gain of the proposed schemes (aperiodic reset with and without controller feedback) compared with all the baselines. This gain is contributed by the plant state and CSI adaptive dynamic MIMO \txtblue{AF} precoding. Furthermore,  the performance of the proposed scheme (aperiodic reset with controller feedback) is very close to that of the brute-force optimal  VIA  \cite{mdpcref2}.  Fig. \ref{asddsadsd} illustrates the normalized MSE of the plant state estimation  versus communication power price $\lambda$ at \txtblue{maximum AF}  gain $\bar{F}=2$. It can be observed that there is significant performance gain of the proposed schemes compared with all the baselines across a wide range of $\lambda$. \txtblue{Table \ref{oneone} illustrates the one-to-one association of the power price $\lambda$ and the absolute average power cost.}\vspace{-0.3cm}
\subsection{\txtblue{Comparison with the Brute-Force Optimal VIA}}
\txtblue{We evaluate the performance loss of our proposed closed-form MIMO  \txtblue{AF} precoding  policy $\widetilde{\Omega}^\ast$ with the optimal brute-force VIA \cite{mdpcref2} for solving Problem \ref{probformu}. Specifically, we focus on the normalized MSE performance under different power prices  and  the performance loss is defined as follows:
	\bs\begin{align}
		&Perf. \ Loss \\
		= &\frac{(Perf. \ under \ \widetilde{\Omega}^\ast) -(Optimal \ Perf. \ using \ VIA)}{Optimal \ Perf. \ using \ VIA}\notag
	\end{align}\bsc 	\quad We illustrate the performance loss results in Table \ref{tables1} and Table \ref{tables2}. Specifically, Table \ref{tables1} shows the performance loss under various  maximum AF  gains  $\bar{F}$ at power price $\lambda=1500$. It can be observed that the performance loss values are under 3\% under various   maximum AF  gains $\bar{F}$. Table \ref{tables2} shows the performance loss under various communication  power prices $\lambda$ at  maximum AF  gain  $\bar{F}=2$. The performance loss values are under 4\% under various   power prices $\lambda$. Therefore, based on the above numerical results, our proposed  closed-form MIMO \txtblue{AF} precoding  solution achieves a very low performance loss under various system parameter settings.}

\vspace{-0.3cm}
\subsection{Complexity Comparisons}
Table \ref{tables} illustrates the comparison of the MATLAB computational time of the baselines, the proposed schemes, and the brute-force VIA \cite{mdpcref2}.  The computational time of Baseline 1 is the smallest in all different  scenarios, but it has very poor performance.  The computational cost of our proposed schemes  is much smaller than those of Baseline 2--4, due to the closed-form approximate priority function.  Furthermore, our schemes  outperform  baselines 2--4.

\begin{table}
	\centering
\begin{tabular}{|l| c c c c c c |}
	\hline
		  {\txtblue{$\bar{F}$}}& \multicolumn{1}{c|}{\txtblue{1}}  & \multicolumn{1}{c|}{\txtblue{1.2}} & \multicolumn{1}{c|}{\txtblue{1.4}}  & \multicolumn{1}{c|}{\txtblue{1.6}} & \multicolumn{1}{c|}{\txtblue{1.8}} & \multicolumn{1}{c|}{\txtblue{2}}    \\
	\hline
		 {\txtblue{Perf. Loss}} &   \multicolumn{1}{c|}{\txtblue{2.48\%}} &   \multicolumn{1}{c|}{\txtblue{3.55\%}} &    \multicolumn{1}{c|}{\txtblue{2.63\%}}  &    \multicolumn{1}{c|}{\txtblue{2.97\%}}   &    \multicolumn{1}{c|}{\txtblue{2.21\%}}   &    \multicolumn{1}{c|}{\txtblue{2.07\%}}   	 \\ \hline
	         \end{tabular}
	\caption{ \txtblue{Performance loss under various   maximum AF gains $\bar{F}$ at    $\tau=0.05$s and $\lambda=1500$.}}	
		\label{tables1}\vspace{-0.2cm}
\end{table}
\begin{table}
	\centering
\begin{tabular}{|l | c c c c c c |}
	\hline
		  {\txtblue{$\lambda$}}& \multicolumn{1}{c|}{\txtblue{1500}}  & \multicolumn{1}{c|}{\txtblue{1600}} & \multicolumn{1}{c|}{\txtblue{1700}}  & \multicolumn{1}{c|}{\txtblue{1800}} & \multicolumn{1}{c|}{\txtblue{1900}} & \multicolumn{1}{c|}{\txtblue{2000}}    \\
	\hline
		 {\txtblue{Perf. Loss}} &   \multicolumn{1}{c|}{\txtblue{2.07\%}} &   \multicolumn{1}{c|}{\txtblue{2.90\%}} &    \multicolumn{1}{c|}{\txtblue{3.63\%}}  &    \multicolumn{1}{c|}{\txtblue{3.57\%}}   &    \multicolumn{1}{c|}{\txtblue{3.50\%}}   &    \multicolumn{1}{c|}{\txtblue{3.40\%}}   	 \\ \hline
	         \end{tabular}
	\caption{ \txtblue{Performance loss under various communication  power prices $\lambda$ at  $\tau=0.05$s and $\bar{F}=2$.}}	
		\label{tables2}\vspace{-0.5cm}
\end{table}

\begin{table}[t]
	\hspace{-0.3cm}
\begin{tabular}{|l | c c c c |}
	\hline
		  {Dimension of $\mathbf{x}$}& \multicolumn{1}{c|}{2}  & \multicolumn{1}{c|}{4} & \multicolumn{1}{c|}{6}  & \multicolumn{1}{c|}{8}    \\
	\hline
		 {Baseline 1, AP-EPDS} &  \multicolumn{4}{c|}{0.0004ms}   			\\ \hline
		 {Baseline 2, AP-EFC} &   \multicolumn{1}{c|}{2.43s} &   \multicolumn{1}{c|}{3860.2s} &    \multicolumn{2}{c|}{$> 10^6$s}   	 \\ \hline
	            {Baseline 3, AP-SPSIS} &   \multicolumn{1}{c|}{3.12s} &  \multicolumn{1}{c|}{4922.9s} &    \multicolumn{2}{c|}{$> 10^6$s} 	\\	\hline
	             \txtblue{Baseline 4, DAP-ADP} &   \multicolumn{1}{c|}{\txtblue{0.0768s}} &  \multicolumn{1}{c|}{\txtblue{0.1429s}}   &    \multicolumn{1}{c|}{\txtblue{0.3872s}} &     \multicolumn{1}{c|}{\txtblue{0.9104s}}	\\	\hline
	            {Proposed Schemes} &     \multicolumn{1}{c|}{ } &   \multicolumn{1}{c|}{ }  &    \multicolumn{1}{c|}{ } &     \multicolumn{1}{c|}{}	\\
	            {(aperiodic reset with and} &     \multicolumn{1}{c|}{$0.0012$s} &   \multicolumn{1}{c|}{$0.0018$s}  &    \multicolumn{1}{c|}{$0.0054$s} &     \multicolumn{1}{c|}{$0.088$s}	\\
	            {without controller feedback)} &     \multicolumn{1}{c|}{ } &   \multicolumn{1}{c|}{ }  &    \multicolumn{1}{c|}{ } &     \multicolumn{1}{c|}{ }	\\\hline
	             {Brute-force opt. VIA} &     \multicolumn{1}{c|}{220.5s} &   \multicolumn{3}{c|} {$> 10^6$s} 	\\
	\hline
\end{tabular}
	\caption{ {Comparison of the MATLAB computational time of the baselines, the proposed algorithm, and the brute-force optimal VIA in one decision slot.}}	
		\label{tables}\vspace{-0.5cm}
\end{table}

\vspace{-0.3cm}

\section{Summary}
In this paper, we propose a \txtblue{closed-form first-order optimal  MIMO AF  precoding solution} for the MIMO NCS by solving a weighted  average state estimation error at the remote controller subject to an average communication power gain constraint of the sensor. Using a continuous-time perturbation approach, we derive a closed-form approximate priority   function and a closed-form MIMO \txtblue{AF} precoding scheme.  The proposed  MIMO \txtblue{AF} precoding  solution   is shown to have  an event-driven control  structure. We  also give sufficient conditions for ensuring the NCS stability. Numerical results show that the proposed schemes have low complexity and much better performance compared with  the baselines.
\txtblue{\section*{Appendix A: Proof of Lemma \ref{lemmadual}} 
\vspace{0.1cm}
\hspace{-0.3cm} \emph{A. Relationship between the Original NCS and an Autonomous NCS} 
\vspace{0.1cm}}

\txtblue{We consider two NCSs. The first NCS is given as follows  for given control actions $\mathbf{u}_0^n$:
\begin{align}
	&\mathbf{x}(n+1)=\mathbf{A}\mathbf{x}(n)+\mathbf{B}\mathbf{u}(n)+\mathbf{w}(n)\notag \\
	&\mathbf{y}(n)=   \mathbf{H}(n)\mathbf{F}(n)\mathbf{x}\left(n\right)+ \mathbf{z}(n)	\label{ncs11}
\end{align}
The second NCS  is given as follows with no control actions applied (i.e., an autonomous system):
\begin{align}
	&\overline{\mathbf{x}}(n+1)=\mathbf{A}\overline{\mathbf{x}}(n)+\overline{\mathbf{w}}(n)\notag \\
	&\overline{\mathbf{y}}(n)=   \overline{\mathbf{H}}(n)\overline{\mathbf{F}}(n)\mathbf{x}\left(n\right)+ \overline{\mathbf{z}}(n)	\label{ncs22}
\end{align}
where   $\overline{\boldsymbol{\Delta}}(n)=\overline{\mathbf{x}}(n)-\hat{\overline{\mathbf{x}}}(n)$,  $\hat{\overline{\mathbf{x}}}(n)=\mathbb{E}\big[\overline{\mathbf{x}}(n)\big|\overline{I}_C(n)\big]$,  $\overline{I}_C(n)=\big\{\overline{\mathbf{E}}_0^{n}, \overline{\mathbf{y}}_0^{n}\big\}$  and  $\overline{\mathbf{E}}(n) = \overline{\mathbf{H}}(n)\overline{\mathbf{F}}(n) $. Furthermore, define $\overline{I}_S(n)=\big\{\overline{\mathbf{x}}_0, \overline{\mathbf{w}}_0^{n-1}, \overline{\mathbf{H}}_0^{n},  \overline{\mathbf{z}}_0^{n-1}\big\}$. We let  the initial conditions, system disturbances,  CSI, and channel noise   be identical in the two NCSs, i.e.,  $\mathbf{x}(0)=\overline{\mathbf{x}}(0)$, $\mathbf{w}(n) = \overline{\mathbf{w}}(n)$, $\mathbf{H}(n) = \overline{\mathbf{H}}(n)$, and $\mathbf{z}(n) = \overline{\mathbf{z}}(n)$, and assume that the two NSCs adopt the same MIMO AF precoding policy.  Then, we establish the following lemma:
\begin{Lemma}		\label{lemma6}
	For the two NCSs in (\ref{ncs11}) and (\ref{ncs22}), we have $\mathbf{x}(n)-\mathbb{E}\big[\mathbf{x}(n)|I_C(n)\big] =\overline{\mathbf{x}}(n)-\mathbb{E}\big[\overline{\mathbf{x}}(n)|I_C(n)\big]$.
\end{Lemma}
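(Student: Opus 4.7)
The plan is to show that the discrepancy $\mathbf{x}(n) - \overline{\mathbf{x}}(n)$ is a deterministic linear function of the past control actions $\mathbf{u}_0^{n-1}$, which are part of $I_C(n)$, so the difference can be pulled outside the conditional expectation and the two estimation errors coincide.

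First, I would exploit the fact that the primitive randomness is matched in the two NCSs by construction: $\mathbf{x}(0)=\overline{\mathbf{x}}(0)$, $\mathbf{w}(n)=\overline{\mathbf{w}}(n)$, $\mathbf{H}(n)=\overline{\mathbf{H}}(n)$, and $\mathbf{z}(n)=\overline{\mathbf{z}}(n)$. This implies $I_S(n)=\overline{I}_S(n)$, so any $\mathcal{F}_S(n)$-adapted MIMO AF precoding policy produces identical actions $\mathbf{F}(n)=\overline{\mathbf{F}}(n)$ along every sample path, and in particular $\mathbf{E}(n)=\overline{\mathbf{E}}(n)$. Next, subtracting the two plant dynamics in (\ref{ncs11}) and (\ref{ncs22}) gives
\begin{align*}
\mathbf{x}(n+1) - \overline{\mathbf{x}}(n+1) = \mathbf{A}\bigl[\mathbf{x}(n) - \overline{\mathbf{x}}(n)\bigr] + \mathbf{B}\mathbf{u}(n),
\end{align*}
and an easy induction starting from $\mathbf{x}(0)-\overline{\mathbf{x}}(0)=\mathbf{0}$ yields the closed form $\mathbf{x}(n)-\overline{\mathbf{x}}(n)=\sum_{k=0}^{n-1}\mathbf{A}^{n-1-k}\mathbf{B}\mathbf{u}(k)$, a deterministic linear functional of $\mathbf{u}_0^{n-1}$.

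Since $\mathbf{u}_0^{n-1}\subset I_C(n)$, the difference $\mathbf{x}(n)-\overline{\mathbf{x}}(n)$ is $I_C(n)$-measurable, and by the "take out what is known" property of conditional expectation, $\mathbb{E}[\mathbf{x}(n)-\overline{\mathbf{x}}(n)\,|\,I_C(n)] = \mathbf{x}(n) - \overline{\mathbf{x}}(n)$. Rearranging by linearity gives $\mathbf{x}(n)-\mathbb{E}[\mathbf{x}(n)|I_C(n)] = \overline{\mathbf{x}}(n)-\mathbb{E}[\overline{\mathbf{x}}(n)|I_C(n)]$, which is the claim. The only real subtlety rather than obstacle is the step $\mathbf{F}(n)=\overline{\mathbf{F}}(n)$: one must verify that the sensor's information structure is functionally the same in both systems, which is immediate from the stated definition of $I_S(n)$ in terms of the primitive random quantities $\{\mathbf{x}_0,\mathbf{w}_0^{n-1},\mathbf{H}_0^n,\mathbf{z}_0^{n-1}\}$. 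Once this point is pinned down, the remainder is a routine induction plus a one-line measurability argument.
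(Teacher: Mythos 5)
Your proposal is correct and follows essentially the same route as the paper: the paper writes $\mathbf{x}(n)$ and $\overline{\mathbf{x}}(n)$ as linear functions of $\big(\mathbf{x}(0),\vec{\mathbf{u}}(n-1),\vec{\mathbf{w}}(n-1)\big)$ via matrices $\mathbf{J}(n),\mathbf{K}(n),\mathbf{L}(n)$ and cancels the control term inside the conditional expectation because $\mathbf{u}_0^{n-1}\subset I_C(n)$, which is exactly your observation that $\mathbf{x}(n)-\overline{\mathbf{x}}(n)=\sum_{k=0}^{n-1}\mathbf{A}^{n-1-k}\mathbf{B}\mathbf{u}(k)$ is $I_C(n)$-measurable. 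Your added verification that $\mathbf{F}(n)=\overline{\mathbf{F}}(n)$ is not needed for this lemma (it is used later in the appendix), but it does no harm.
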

\begin{proof}
	Note that the linearity of the state dynamics for $\mathbf{x}(n)$ and $\overline{\mathbf{x}}(n)$ implies  the existence of matrices $\mathbf{J}(n)$, $\mathbf{K}(n)$ and $\mathbf{L}(n)$ such that
\begin{align}
	&\mathbf{x}(n)=\mathbf{J}(n)\mathbf{x}(0)+\mathbf{K}(n)\vec{\mathbf{u}}(n-1)+\mathbf{L}(n)\vec{\mathbf{w}}(n-1) \notag \\
	&\overline{\mathbf{x}}(n)=\mathbf{J}(n)\mathbf{x}(0)+\mathbf{L}(n)\vec{\mathbf{w}}(n-1)
\end{align}
where $\vec{\mathbf{u}}(n)=\big(\mathbf{u}^T(1), \dots, \mathbf{u}(n)\big)^T$ and $\vec{\mathbf{w}}(n)=\big(\mathbf{w}^T(1), \dots, \mathbf{w}(n)\big)^T$.  Then, we have
\begin{align}
	&\mathbf{x}(n)-\mathbb{E}\big[\mathbf{x}(n)|I_C(n)\big] \\
	=& \big(\mathbf{J}(n)\mathbf{x}(0)+\mathbf{K}(n)\vec{\mathbf{u}}(n-1)+\mathbf{L}(n)\vec{\mathbf{w}}(n-1)\big)\notag \\
	&-\big(\mathbf{J}(n)\mathbb{E}\big[\mathbf{x}(0)|I_C(n)\big]+\mathbf{K}(n)\vec{\mathbf{u}}(n-1)\notag \\
	&+\mathbf{L}(n)\mathbb{E}\big[\vec{\mathbf{w}}(n-1)|I_C(n)\big]\big)=\overline{\mathbf{x}}(n)-\mathbb{E}\big[\overline{\mathbf{x}}(n)|I_C(n)\big]\notag 
\end{align}
\end{proof}}
\vspace{0.1cm}
\hspace{-0.3cm} \emph{\txtblue{B. State Estimate of an Autonomous System}}
\vspace{0.1cm}

\txtblue{Since $\mathbf{F}(n)$ ($\overline{\mathbf{F}}(n)$) is a function of $I_S(n)$ ($\overline{I}_S(n)$) and both NCSs adopt the same MIMO AF precoding policy, we have $\mathbf{F}(n)=\overline{\mathbf{F}}(n)$. Furthermore, we have $\mathbf{H}(n)=\overline{\mathbf{H}}(n)$. Therefore, we have $\mathbf{E}(n)=\overline{\mathbf{E}}(n)$. From $\mathbf{y}(n)$ and $\overline{\mathbf{y}}(n)$ in (\ref{ncs11}) and (\ref{ncs22}), we know that 
\begin{align}
	\overline{\mathbf{y}}(n) = \mathbf{y}(n) - \mathbf{P}_k\left(\mathbf{E}_0^n\right)\vec{\mathbf{u}}(n)
\end{align}
for some matrix $\mathbf{P}_k$ that depends on $\mathbf{E}_0^n$. The above equation implies that there is a bijective relationship between $\overline{\mathbf{y}}(n)$ and $ \mathbf{y}(n)$. Therefore, given $\mathbf{E}_0^n$, the information provided by $I_C(n)$ regarding $\overline{\mathbf{x}}(n)$ is summarized in $\overline{\mathbf{y}}(n)$ (see \cite{asdacite}, Lemma 5.2.1 of \cite{mdpcref2}). Therefore, we have 
\begin{align}
	\mathbb{E}\big[\overline{\mathbf{x}}(n)|I_C(n)\big]= \mathbb{E}\big[\overline{\mathbf{x}}(n)|\overline{I}_C(n)\big] \end{align}
Furthermore, using $\overline{I}_C(n)=\sigma\left(\overline{I}_S(n)\right)=\sigma\left(I_S(n)\right)$ and the above equation, we have $\mathbf{x}(n)-\mathbb{E}\big[\mathbf{x}(n)|I_C(n)\big] =\overline{\mathbf{x}}(n)-\mathbb{E}\big[\overline{\mathbf{x}}(n)|I_S(n)\big]$.  Therefore, $\boldsymbol{\Delta}(n)=\mathbf{x}(n)-\mathbb{E}\big[\mathbf{x}(n)|I_C(n)\big]$ only depends on $I_S(n)$, which directly proves  the no dual effect property in (\ref{nodualeffecrdef}).}

\section*{Appendix B: Dynamics of the State Estimator  and the State Estimation Error} 

%
%

 We adopt the  augmented complex Kalman filter (ACKF) algorithm in \cite{ackf},  which is the   minimum MSE estimator  for complex-valued  measurement (which is $\mathbf{y}$ in our problem). Specifically,  $\hat{\mathbf{x}}(n)$ follows the following  Kalman filter equation:
\begin{align}
	\hat{\mathbf{x}}(n) =& \mathbf{A} \hat{\mathbf{x}}(n-1) + \mathbf{B} \mathbf{u}(n-1) + \mathbf{K}^a(n) \label{kalman1} \\
	&\big(\mathbf{y}^a (n)-  \mathbf{E}^a(n) \big(\mathbf{A} \hat{\mathbf{x}}(n-1) + \mathbf{B} \mathbf{u}(n-1)-\widetilde{\mathbf{x}}(n)  \big)\big)	\notag
\end{align}
with initial value $\hat{\mathbf{x}}(0)=\mathbf{x}_0$, where $\mathbf{y}^a(n) =\left( \begin{smallmatrix} 
  \mathbf{y}(n) \\
  \mathbf{y}^\ddagger (n)
\end{smallmatrix} \right)$ is an augmented  $2 N_r \times 1$ vector, $\mathbf{E}^a(n)=\left( \begin{smallmatrix} 
  \mathbf{E}(n) \\
  \mathbf{E}^\ddagger(n)
\end{smallmatrix} \right)$ is an augmented $2 N_r \times L$ matrix. Furthermore, 
\begin{align}
	 & \mathbf{K}^a(n) = \boldsymbol{\Sigma} (n) (\mathbf{E}^a (n))^\dagger \big(\mathbf{E}^a (n) \boldsymbol{\Sigma} (n)   (\mathbf{E}^a (n))^\dagger +\mathbf{I}\big)^{-1}		\label{kalman21}	\\
	 & \boldsymbol{\Sigma}(n+1) = A\big(\boldsymbol{\Sigma} (n) - \boldsymbol{\Sigma} (n)(\mathbf{E}^a (n))^\dagger \big(\mathbf{E}^a (n) \boldsymbol{\Sigma} (n)   (\mathbf{E}^a (n))^\dagger \notag \\
	& \hspace{3cm}+\mathbf{I}\big)^{-1}\mathbf{E}^a (n) \boldsymbol{\Sigma} (n)  \big) \mathbf{A}^T + \mathbf{W} \label{kalman3}
\end{align}
Based on the dynamics of $\mathbf{x}(n)$ in (\ref{plantain}) and the dynamics of $\hat{\mathbf{x}}(n)$ in (\ref{kalman1}),  the dynamics of $\boldsymbol{\Delta}(n)$ can be obtained as in (\ref{kalman2}).  The sufficient conditions for optimality in Theorem  \ref{mdpvarify} directly follows   Prop. 4.6.1 of \cite{mdpcref2} and  Lemma 1 of  \cite{mdpsurvey}.

\section*{Appendix C:  Proof of Lemma \ref{perturbPDE}}

For convenience, denote   the operators of the R.H.S. of the Bellman equation in (\ref{OrgBel}) and the PDE in (\ref{bellman2}) as 
\bs\begin{align}
	&T_{\boldsymbol{\chi}}(\theta, V, \mathbf{F})=  \frac{1}{\tau} \mathbb{E}\bigg[\big((\boldsymbol{\Delta}')^T\mathbf{S}\boldsymbol{\Delta}'  + \lambda  \mathrm{Tr}\big( \mathbf{F}^\dagger \mathbf{F}\big)  \big)\tau  \notag \\
	&  +\sum_{  \boldsymbol{\Delta}', \boldsymbol{\Sigma}'}\Pr\left[  \boldsymbol{\Delta}', \boldsymbol{\Sigma}'\big|\boldsymbol{\chi}, \mathbf{F}  \right] V^\ast \left(  \boldsymbol{\Delta}', \boldsymbol{\Sigma}'\right)-  V^\ast \left(\boldsymbol{\Delta}, \boldsymbol{\Sigma}\right)\bigg|  \boldsymbol{\chi}\bigg] - \theta  \notag	\\
	& T_{\boldsymbol{\chi}}^\dagger(\theta, V, \mathbf{F})=  \boldsymbol{\Delta}^T \mathbf{S} \boldsymbol{\Delta}   + \lambda  \mathrm{Tr}\big( \mathbf{F}^\dagger \mathbf{F}\big) \notag \\
	&  - 2\text{Re}\left\{\nabla_{\boldsymbol{\Delta}}^T V(\boldsymbol{\Delta}, \boldsymbol{\Sigma})   \boldsymbol{\Sigma} \mathbf{F}^\dagger \mathbf{H}^\dagger \mathbf{H} \mathbf{F} \boldsymbol{\Delta}/\tau \right\}   + \nabla_{\boldsymbol{\Delta}}^T V(\boldsymbol{\Delta}, \boldsymbol{\Sigma})  \widetilde{\mathbf{A}} \boldsymbol{\Delta} \notag \\
	&+ \frac{1}{2}\text{Tr}\left(\nabla_{\boldsymbol{\Delta}}^2 V(\boldsymbol{\Delta}, \boldsymbol{\Sigma}) \widetilde{\mathbf{W}} \right) + \text{Tr}\left( \frac{\partial V(\boldsymbol{\Delta}, \boldsymbol{\Sigma})}{\partial \boldsymbol{\Sigma}}  \widetilde{\mathbf{W}} \right) -\theta \notag
\end{align}\bsc

\vspace{0.3cm}
\hspace{-0.3cm} \emph{A. Relationship between $T_{\boldsymbol{\chi}}(\theta, V, \mathbf{F})$ and $T_{\boldsymbol{\chi}}^\dagger(\theta, V, \mathbf{F})$}
\vspace{0.1cm}

\begin{Lemma}	\label{applema}
	For any $\boldsymbol{\chi}$, $ T_{\boldsymbol{\chi}}(\theta, V, \mathbf{F})=T_{\boldsymbol{\chi}}^\dagger(\theta, V, \mathbf{F})+\mathcal{O}(\tau)$.
\end{Lemma}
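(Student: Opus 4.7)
The plan is to prove the lemma by Taylor expanding the priority function $V(\boldsymbol{\Delta}',\boldsymbol{\Sigma}')$ around the current state $(\boldsymbol{\Delta},\boldsymbol{\Sigma})$ inside the expectation that defines $T_{\boldsymbol{\chi}}(\theta,V,\mathbf{F})$, and then invoking the discrete/continuous-time correspondences $\mathbf{A}=\exp(\widetilde{\mathbf{A}}\tau)=\mathbf{I}+\widetilde{\mathbf{A}}\tau+\mathcal{O}(\tau^2)$ and $\mathbf{W}=\widetilde{\mathbf{W}}\tau+\mathcal{O}(\tau^2)$ that come from sampling the continuous-time plant. The assumption $V\in\mathcal{C}^2$ guarantees a Taylor expansion up to second order with a controlled remainder, so the whole argument boils down to carefully matching moments.

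First I would substitute the state dynamics (\ref{kalman2})--(\ref{kalman31}) so that
\begin{equation}
\boldsymbol{\Delta}'-\boldsymbol{\Delta}=\big((\mathbf{I}-\mathbf{K}^a\mathbf{E}^a)\mathbf{A}-\mathbf{I}\big)\boldsymbol{\Delta}+(\mathbf{I}-\mathbf{K}^a\mathbf{E}^a)\mathbf{w}-\mathbf{K}^a\mathbf{z}^a,
\end{equation}
and $\boldsymbol{\Sigma}'-\boldsymbol{\Sigma}$ is in the analogous explicit form. Then expand
\begin{equation}
V(\boldsymbol{\Delta}',\boldsymbol{\Sigma}')-V(\boldsymbol{\Delta},\boldsymbol{\Sigma})=\nabla_{\boldsymbol{\Delta}}^T V\,(\boldsymbol{\Delta}'-\boldsymbol{\Delta})+\tfrac{1}{2}(\boldsymbol{\Delta}'-\boldsymbol{\Delta})^T\nabla_{\boldsymbol{\Delta}}^2 V\,(\boldsymbol{\Delta}'-\boldsymbol{\Delta})+\mathrm{Tr}\Big(\tfrac{\partial V}{\partial\boldsymbol{\Sigma}}(\boldsymbol{\Sigma}'-\boldsymbol{\Sigma})\Big)+R,
\end{equation}
where $R$ collects the higher-order remainders, and take the conditional expectation given $(\boldsymbol{\chi},\mathbf{F})$ using $\mathbb{E}[\mathbf{w}]=\mathbf{0}$, $\mathbb{E}[\mathbf{z}^a]=\mathbf{0}$, independence of the two noises, $\mathbb{E}[\mathbf{w}\mathbf{w}^T]=\mathbf{W}$, and the prescribed covariance of $\mathbf{z}^a$.

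Next I would match each term to a term in $T_{\boldsymbol{\chi}}^\dagger$. The first-order drift $\nabla_{\boldsymbol{\Delta}}^T V\,\mathbb{E}[\boldsymbol{\Delta}'-\boldsymbol{\Delta}]$ splits into (i) the plant-drift piece $\nabla_{\boldsymbol{\Delta}}^T V\,(\mathbf{A}-\mathbf{I})\boldsymbol{\Delta}=\nabla_{\boldsymbol{\Delta}}^T V\,\widetilde{\mathbf{A}}\boldsymbol{\Delta}\tau+\mathcal{O}(\tau^2)$, and (ii) the Kalman-contraction piece $-\nabla_{\boldsymbol{\Delta}}^T V\,\mathbf{K}^a\mathbf{E}^a\mathbf{A}\boldsymbol{\Delta}$. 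For (ii), using $\mathbf{K}^a=\boldsymbol{\Sigma}(\mathbf{E}^a)^\dagger(\mathbf{E}^a\boldsymbol{\Sigma}(\mathbf{E}^a)^\dagger+\mathbf{I})^{-1}$ and the identity $(\mathbf{E}^a)^\dagger\mathbf{E}^a=2\,\mathrm{Re}(\mathbf{F}^\dagger\mathbf{H}^\dagger\mathbf{H}\mathbf{F})$ that comes from $\mathbf{E}^a=[\mathbf{E};\mathbf{E}^\ddagger]$ acting on the real vector $\boldsymbol{\Delta}$, dividing the whole operator by $\tau$ reproduces exactly the $1/\tau$-scaled communication term $-2\,\mathrm{Re}\{\nabla_{\boldsymbol{\Delta}}^T V\,\boldsymbol{\Sigma}\mathbf{F}^\dagger\mathbf{H}^\dagger\mathbf{H}\mathbf{F}\boldsymbol{\Delta}/\tau\}$ up to $\mathcal{O}(\tau)$. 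The second-order Hessian term collapses to $\tfrac{1}{2}\,\mathrm{Tr}(\nabla_{\boldsymbol{\Delta}}^2 V\,\mathbf{W})=\tfrac{1}{2}\,\mathrm{Tr}(\nabla_{\boldsymbol{\Delta}}^2 V\,\widetilde{\mathbf{W}})\tau+\mathcal{O}(\tau^2)$ after averaging out the channel-noise and cross terms, matching the diffusion term of $T_{\boldsymbol{\chi}}^\dagger$ after the $1/\tau$ rescaling; and the $\boldsymbol{\Sigma}$-derivative term extracts $\mathrm{Tr}(\tfrac{\partial V}{\partial\boldsymbol{\Sigma}}\widetilde{\mathbf{W}})$ from the additive $\mathbf{W}$ component of $\boldsymbol{\Sigma}'-\boldsymbol{\Sigma}$. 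Finally, the quadratic remainder $R$, together with the $\mathcal{O}(\tau^2)$ tails of the $\mathbf{A}$- and $\mathbf{W}$-expansions and all residual cross terms, are checked to be $\mathcal{O}(\tau)$ once divided by $\tau$, which gives the claimed equality.

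The main obstacle I expect is the Kalman-jump piece. Unlike the plant drift/diffusion, $\mathbf{K}^a\mathbf{E}^a$ is itself $\mathcal{O}(1)$ in $\tau$, so the induced change in $\boldsymbol{\Delta}$ is not small and the usual infinitesimal-generator argument does not apply verbatim; one has to keep $\mathbf{K}^a\mathbf{E}^a$ intact, produce the $1/\tau$ factor only through the outer division, and show that expanding $\mathbf{K}^a\mathbf{E}^a\mathbf{A}$ as $\mathbf{K}^a\mathbf{E}^a+\mathbf{K}^a\mathbf{E}^a\widetilde{\mathbf{A}}\tau+\mathcal{O}(\tau^2)$ leaves only the leading piece after the $\mathcal{O}(\tau)$ tolerance is taken. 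A parallel subtlety is that (\ref{kalman31}) contains more than just $\mathbf{W}$ as its additive drift, so one must verify that the remaining contributions to $\boldsymbol{\Sigma}'-\boldsymbol{\Sigma}$ and the mixed second-order cross terms between plant noise, channel noise, and the Kalman gain are absorbed into the $\mathcal{O}(\tau)$ remainder rather than producing spurious leading-order terms.
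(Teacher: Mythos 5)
Your overall route---Taylor-expanding $V\in\mathcal{C}^2$ inside the expectation, invoking $\mathbf{A}=\mathbf{I}+\widetilde{\mathbf{A}}\tau+\mathcal{O}(\tau^2)\mathbf{I}$ and $\mathbf{W}=\widetilde{\mathbf{W}}\tau+\mathcal{O}(\tau^2)\mathbf{I}$, and matching the first/second moments of $\boldsymbol{\Delta}'-\boldsymbol{\Delta}$ and $\boldsymbol{\Sigma}'-\boldsymbol{\Sigma}$ to the drift, diffusion and $\boldsymbol{\Sigma}$-derivative terms of $T_{\boldsymbol{\chi}}^\dagger$---is exactly the paper's proof. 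The genuine problem is in how you resolve what you call the main obstacle: you assert that $\mathbf{K}^a\mathbf{E}^a$ is $\mathcal{O}(1)$ in $\tau$ and plan to keep it intact, producing the $1/\tau$ only through the outer division. The paper's argument rests on the opposite estimate: because the one-step prediction error covariance is driven by $\mathbf{W}=\widetilde{\mathbf{W}}\tau+\mathcal{O}(\tau^2)\mathbf{I}$, it uses $\boldsymbol{\Sigma}=\mathcal{O}(\tau)\mathbf{I}$ and hence $\mathbf{K}^a=\boldsymbol{\Sigma}(\mathbf{E}^a)^\dagger\big(\mathbf{E}^a\boldsymbol{\Sigma}(\mathbf{E}^a)^\dagger+\mathbf{I}\big)^{-1}=\boldsymbol{\Sigma}(\mathbf{E}^a)^\dagger+\mathcal{O}(\tau^2)\mathbf{I}=\mathcal{O}(\tau)\mathbf{I}$, so the per-slot Kalman correction is itself infinitesimal. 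That single estimate is what renders every delicate term harmless after the $1/\tau$ rescaling: the Neumann-series error in $\mathbf{K}^a\mathbf{E}^a\approx 2\,\mathrm{Re}\{\boldsymbol{\Sigma}\mathbf{F}^\dagger\mathbf{H}^\dagger\mathbf{H}\mathbf{F}\}$ is $\mathcal{O}(\|\boldsymbol{\Sigma}\|^2)=\mathcal{O}(\tau^2)$, the channel-noise contribution $\mathbf{K}^a\,\mathrm{Cov}(\mathbf{z}^a)\,(\mathbf{K}^a)^\dagger$ to the second moment is $\mathcal{O}(\tau^2)$, the quadratic piece $(\mathbf{K}^a\mathbf{E}^a\boldsymbol{\Delta})(\mathbf{K}^a\mathbf{E}^a\boldsymbol{\Delta})^T$ is $\mathcal{O}(\tau^2)$, and the cross term $\mathbf{K}^a\mathbf{E}^a\widetilde{\mathbf{A}}\boldsymbol{\Delta}\tau$ is $\mathcal{O}(\tau^2)$.

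If $\mathbf{K}^a\mathbf{E}^a$ really were $\mathcal{O}(1)$, your matching would fail rather than merely require care: the Hessian term would pick up $\tfrac{1}{2\tau}\mathrm{Tr}\big(\nabla_{\boldsymbol{\Delta}}^2V\,\mathbf{K}^a\mathrm{Cov}(\mathbf{z}^a)(\mathbf{K}^a)^\dagger\big)$ and $\tfrac{1}{2\tau}\mathrm{Tr}\big(\nabla_{\boldsymbol{\Delta}}^2V\,\mathbf{K}^a\mathbf{E}^a\boldsymbol{\Delta}\boldsymbol{\Delta}^T(\mathbf{K}^a\mathbf{E}^a)^T\big)$, which diverge as $\tau\to0$ and have no counterpart in $T_{\boldsymbol{\chi}}^\dagger$; the plant-noise diffusion coefficient would be $(\mathbf{I}-\mathbf{K}^a\mathbf{E}^a)\widetilde{\mathbf{W}}(\mathbf{I}-\mathbf{K}^a\mathbf{E}^a)^T$ rather than $\widetilde{\mathbf{W}}$; and your own claim that the communication term is reproduced ``up to $\mathcal{O}(\tau)$'' after dividing by $\tau$ silently uses $\boldsymbol{\Sigma}=\mathcal{O}(\tau)$, i.e.\ the very scaling you deny. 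A second, smaller omission: $T_{\boldsymbol{\chi}}$ contains the per-stage cost $\mathbb{E}\big[(\boldsymbol{\Delta}')^T\mathbf{S}\boldsymbol{\Delta}'\,\big|\,\boldsymbol{\chi}\big]$ evaluated at the \emph{next} state, while $T_{\boldsymbol{\chi}}^\dagger$ has $\boldsymbol{\Delta}^T\mathbf{S}\boldsymbol{\Delta}$; the paper devotes a separate step to showing these differ by $\mathcal{O}(\tau)$ (again via $\mathbf{K}^a=\mathcal{O}(\tau)\mathbf{I}$ and $\mathbb{E}[\mathbf{w}\mathbf{w}^T]=\mathcal{O}(\tau)\mathbf{I}$), and your proposal never addresses this term. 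With the order estimates for $\boldsymbol{\Sigma}$, $\mathbf{K}^a$ corrected and the per-stage-cost comparison added, your argument becomes the paper's.
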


\begin{proof}	[Proof of Lemma \ref{applema}]

\txtblue{\emph{a. Calculation of the per-stage cost:}} We first calculate the per-stage cost in (\ref{OrgBel}):
\bs\begin{align}	
	&\mathbb{E}\left[(\boldsymbol{\Delta}')^T\mathbf{S}\boldsymbol{\Delta}'\tau \big| \boldsymbol{\chi} \right]	\label{45cal} \\
	\overset{(a)}{=}& \mathbb{E}\Big[\left(\big(\mathbf{I} - \mathbf{K}^a \mathbf{E}^a \big)\big((\mathbf{I}+\widetilde{\mathbf{A}}\tau+\mathcal{O}(\tau^2)\mathbf{I})\boldsymbol{\Delta}+ \mathbf{w}\big) - \mathbf{K}^a\mathbf{z}^a\right)^T\mathbf{S} \notag \\
	& \cdot \left(\big(\mathbf{I} - \mathbf{K}^a \mathbf{E}^a \big)\big((\mathbf{I}+\widetilde{\mathbf{A}}\tau+\mathcal{O}(\tau^2)\mathbf{I}) \boldsymbol{\Delta}+ \mathbf{w}\big)- \mathbf{K}^a\mathbf{z}^a\right)\tau \Big| \boldsymbol{\chi} \Big]  \notag \\
	  \overset{(b)}{=}&\mathbb{E}\Big[ \boldsymbol{\Delta}^T\mathbf{S}\boldsymbol{\Delta}\tau+ \txtblue{\big[\mathbf{w}^T\mathbf{S}\mathbf{w} + \boldsymbol{\Delta}^T (\mathbf{S}\mathbf{K}^a \mathbf{E}^a+(\mathbf{K}^a \mathbf{E}^a)^T\mathbf{S})\boldsymbol{\Delta}}\notag \\
	  &\txtblue{-\boldsymbol{\Delta}^T(\mathbf{S}\widetilde{\mathbf{A}}+\widetilde{\mathbf{A}}^T\mathbf{S})\boldsymbol{\Delta}\tau\big]\tau+\mathcal{O}(\tau^2)} \Big| \boldsymbol{\chi} \Big]  \overset{(c)}{=}  \boldsymbol{\Delta}^T\mathbf{S}\boldsymbol{\Delta}\tau+\mathcal{O}(\tau^2) \notag
\end{align}\bsc where (a) is because $\mathbf{A}=\mathbf{I}+\widetilde{\mathbf{A}}\tau+\mathcal{O}(\tau^2)\mathbf{I}$ according to  the dynamics in (\ref{plantain}), (b) and (c) are because $\mathbb{E}[\mathbf{w}\mathbf{w}^T]=\mathbf{W}= \widetilde{\mathbf{W}}\tau+\mathcal{O}(\tau^2)\mathbf{I}$, $\mathbf{K}^a=\mathcal{O}(\tau)\mathbf{I}$, $\boldsymbol{\Sigma}=\mathcal{O}(\tau)\mathbf{I}$ according to the expression of $\mathbf{K}^a$ in  (\ref{kalman2}) and the dynamics of $\boldsymbol{\Sigma}$ in (\ref{kalman31}).

\txtblue{\emph{b. Calculation of the expectation involving the transition kernel:}  Substituting the approximate priority function $V\in \mathcal{C}^2$  into the R.H.S. of  (\ref{OrgBel}), we calculate the expectation involving the transition kernel as follows\footnote{\txtblue{Note that although the optimal priority function $V^\ast(\boldsymbol{\Delta}, \boldsymbol{\Sigma})$ may not be $\mathcal{C}^2$, the proof just requires the approximate priority function $V(\boldsymbol{\Delta}, \boldsymbol{\Sigma})$ to be $\mathcal{C}^2$. In other words, we are seeking a $\mathcal{C}^2$ approximation of $V^\ast(\boldsymbol{\Delta}, \boldsymbol{\Sigma})$ with asymptotically vanishing errors for small $\tau$.}}:}
\bs\begin{align}
	& \mathbb{E}\bigg[\sum_{\boldsymbol{\Delta}', \boldsymbol{\Sigma}'}\Pr\left[\boldsymbol{\Delta}', \boldsymbol{\Sigma}'\big|\boldsymbol{\chi}, \mathbf{F}  \right] V^\ast \left(\boldsymbol{\Delta}', \boldsymbol{\Sigma}'\right)\bigg|\boldsymbol{\chi}\bigg]  \notag \\
	 = &\mathbb{E}\bigg[V^\ast \left(\boldsymbol{\Delta}, \boldsymbol{\Sigma}\right)+  \nabla_{\boldsymbol{\Delta}}^T V(\boldsymbol{\Delta}, \boldsymbol{\Sigma}) \left(\boldsymbol{\Delta}'-\boldsymbol{\Delta}\right)   \notag \\
	 &+ \frac{1}{2}\text{Tr}\left(\nabla_{\boldsymbol{\Delta}}^2 V(\boldsymbol{\Delta}, \boldsymbol{\Sigma})\left(\boldsymbol{\Delta}'-\boldsymbol{\Delta}\right)\left(\boldsymbol{\Delta}'-\boldsymbol{\Delta}\right)^T \right)\notag \\
	&+   \text{Tr}\left( \frac{\partial V(\boldsymbol{\Delta}, \boldsymbol{\Sigma})}{\partial \boldsymbol{\Sigma}}  \left(\boldsymbol{\Sigma}'-\boldsymbol{\Sigma}\right)\right) +\txtblue{\mathcal{O}(\|\boldsymbol{\Delta}'-\boldsymbol{\Delta}\|^3) } \notag \\
	&\txtblue{+\mathcal{O}(\|\boldsymbol{\Sigma}'-\boldsymbol{\Sigma}\|^2)+\mathcal{O}(\|\boldsymbol{\Delta}'-\boldsymbol{\Delta}\|\|\boldsymbol{\Sigma}'-\boldsymbol{\Sigma}\|)}\bigg|\boldsymbol{\chi}\bigg]  \label{47equal}
\end{align}\bsc We then calculate each term in  (\ref{47equal}) as follows: using (a) of (\ref{45cal}), we have 
\begin{align}
	 & \mathbb{E}\left[\boldsymbol{\Delta}'-\boldsymbol{\Delta}\big|\boldsymbol{\chi}\right]  = \widetilde{\mathbf{A}}\boldsymbol{\Delta} \tau - \mathbf{K}^a\mathbf{E}^a \boldsymbol{\Delta}\txtblue{-\mathbf{K}^a \mathbf{E}^a  \widetilde{\mathbf{A}}\boldsymbol{\Delta} \tau+\mathcal{O}(\tau^2)\mathbf{1}}  \notag \\
	 \overset{(d)}{=} & \widetilde{\mathbf{A}}\boldsymbol{\Delta} \tau -2 \text{Re}\Big\{ \boldsymbol{\Sigma} \mathbf{F}^\dagger \mathbf{H}^\dagger \mathbf{H} \mathbf{F} \boldsymbol{\Delta}\Big\} +\mathcal{O}(\tau^2)\mathbf{1}	\label{52dsfas1}
\end{align}
where $(d)$ is because $\mathbf{K}^a \mathbf{E}^a  \widetilde{\mathbf{A}}\boldsymbol{\Delta} \tau=\mathcal{O}(\tau^2)\mathbf{1}$ and $\mathbf{K}^a=\boldsymbol{\Sigma} (\mathbf{E}^a)^\dagger +\mathcal{O}(\tau^2)\mathbf{I}$ according to (\ref{kalman2}). Then, 
\bs\begin{align}
	&  \mathbb{E}\left[(\boldsymbol{\Delta}'-\boldsymbol{\Delta})(\boldsymbol{\Delta}'-\boldsymbol{\Delta})^T\big|\boldsymbol{\chi}\right]  \notag \\
	 =&\mathbb{E}\left[\mathbf{w}\mathbf{w}^T- \txtblue{\mathbf{K}^a \mathbf{E}^a\boldsymbol{\Delta}\boldsymbol{\Delta}^T  \widetilde{\mathbf{A}}\tau -\widetilde{\mathbf{A}}\boldsymbol{\Delta}\boldsymbol{\Delta}^T( \mathbf{K}^a \mathbf{E}^a)^T\tau  +\mathcal{O}(\tau^2)\mathbf{I}}\Big|\boldsymbol{\chi}\right] \notag \\
	 =& \widetilde{\mathbf{W}}\tau + \mathcal{O}\left(\tau^2\right)\mathbf{I}
\end{align}\bsc Then, using the calculations in  (\ref{45cal}) again, we have
\begin{align}
	 & \mathbb{E}\left[\boldsymbol{\Sigma}'-\boldsymbol{\Sigma}\big|\boldsymbol{\chi}\right]\notag \\
	  =& (\mathbf{I}+\widetilde{\mathbf{A}}\tau+\mathcal{O}(\tau^2)\mathbf{I})\big(\boldsymbol{\Sigma}- \boldsymbol{\Sigma}(\mathbf{E}^a)^\dagger  \big(\mathbf{E}^a\boldsymbol{\Sigma}  (\mathbf{E}^a)^\dagger+\mathbf{I}\big)^{-1}\mathbf{E}^a \boldsymbol{\Sigma}  \big)\notag \\
	  & \cdot (\mathbf{I}+\widetilde{\mathbf{A}}\tau+\mathcal{O}(\tau^2)\mathbf{I})^T + \mathbf{W} -\boldsymbol{\Sigma}  \notag \\
	 = &\txtblue{ (\widetilde{\mathbf{A}}^T \boldsymbol{\Sigma}  + \boldsymbol{\Sigma} \widetilde{\mathbf{A}} ) \tau + \mathcal{O}(\tau^2)\mathbf{I}}+ \mathbf{W}= \widetilde{\mathbf{W}}\tau+  \mathcal{O}\left(\tau^2\right)\mathbf{I} \label{sub1}
\end{align}
\txtblue{Using the calculations in (\ref{52dsfas1})--(\ref{sub1}), we can calculate that $\mathcal{O}(\|\boldsymbol{\Delta}'-\boldsymbol{\Delta}\|^3)$ is at least $\mathcal{O}(\tau^2)$, $\mathcal{O}(\|\boldsymbol{\Sigma}'-\boldsymbol{\Sigma}\|^2)=\mathcal{O}(\tau^2)$, $\mathcal{O}(\|\boldsymbol{\Delta}'-\boldsymbol{\Delta}\|\|\boldsymbol{\Sigma}'-\boldsymbol{\Sigma}\|)=\mathcal{O}(\tau^2)$.} Substituting the above calculations results  into $T_{\boldsymbol{\chi}}(\theta, V, \mathbf{F})$, we obtain $ T_{\boldsymbol{\chi}}(\theta, V, \mathbf{F})=T_{\boldsymbol{\chi}}^\dagger(\theta, V, \mathbf{F})+\mathcal{O}(\tau)$.\end{proof}

\vspace{0.3cm}
\hspace{-0.3cm} \emph{B. Growth Rate of $T_{\boldsymbol{\chi}}(\theta, V, \mathbf{F})$}
\vspace{0.1cm}

Denote 
\begin{align}
	T_{\boldsymbol{\chi}}(\theta, V)=\min_{\mathbf{F}} T_{\boldsymbol{\chi}}(\theta, V, \mathbf{F}),  T_{\boldsymbol{\chi}}^\dagger(\theta, V)=\min_{\mathbf{F}} T_{\boldsymbol{\chi}}^\dagger(\theta, V, \mathbf{F})	\notag	
\end{align}
Suppose $(\theta^\ast,V^\ast)$ satisfies the Bellman equation in (\ref{OrgBel}) and $(\theta,V)$ satisfies the approximate Bellman equation in (\ref{bellman2}). We have for any $\boldsymbol{\Delta}, \boldsymbol{\Sigma}$,
\begin{align}
	\mathbb{E}\big[T_{\boldsymbol{\chi}}(\theta^\ast, V^\ast)\big| \boldsymbol{\Delta}, \boldsymbol{\Sigma}\big]=0, \quad \mathbb{E}\big[ T_{\boldsymbol{\chi}}^\dagger(\theta, V)\big| \boldsymbol{\Delta}, \boldsymbol{\Sigma}\big]=0	\label{zerofunc}
\end{align}
Then, we establish the following lemma:
\begin{Lemma}	\label{applemma}
$\mathbb{E}\big[T_{\boldsymbol{\chi}}  (\theta, V)\big| \boldsymbol{\Delta}, \boldsymbol{\Sigma}\big]=\mathcal{O}(\tau)$,  $\forall \boldsymbol{\Delta}, \boldsymbol{\Sigma}$.
\end{Lemma}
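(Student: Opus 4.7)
The plan is to upgrade the previous Lemma (\ref{applema}), which compares the two Bellman operators at a fixed feasible precoder $\mathbf{F}$, into a comparison of their minimized versions. My overall strategy rests on two ingredients: (i) the pointwise identity $T_{\boldsymbol{\chi}}(\theta, V, \mathbf{F}) = T_{\boldsymbol{\chi}}^\dagger(\theta, V, \mathbf{F}) + \mathcal{O}(\tau)$ established in Lemma \ref{applema}, and (ii) the elementary variational fact that if two functions of $\mathbf{F}$ differ by at most $\epsilon$ uniformly over the feasible set, then their minima differ by at most $\epsilon$. Combining these with the second identity in (\ref{zerofunc}), namely $\mathbb{E}[T_{\boldsymbol{\chi}}^\dagger(\theta, V) | \boldsymbol{\Delta}, \boldsymbol{\Sigma}] = 0$, should immediately deliver $\mathbb{E}[T_{\boldsymbol{\chi}}(\theta, V) | \boldsymbol{\Delta}, \boldsymbol{\Sigma}] = \mathcal{O}(\tau)$.

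The main step is to verify that the $\mathcal{O}(\tau)$ residual in Lemma \ref{applema} is in fact \emph{uniform} over the feasible set $\Omega(\boldsymbol{\chi}) = \{\mathbf{F} : \mathrm{Tr}(\mathbf{F}^\dagger\mathbf{F}) \leq \bar{F}\}$ for each fixed $(\boldsymbol{\Delta}, \boldsymbol{\Sigma})$. I would inspect the Taylor residuals appearing in the proof of Lemma \ref{applema}: each remainder is a polynomial expression in the entries of $\mathbf{F}$, $\mathbf{K}^a$, $\mathbf{E}^a = \mathbf{H}\mathbf{F}$ stacked with $(\mathbf{H}\mathbf{F})^{\ddagger}$, $\boldsymbol{\Sigma}$, $\widetilde{\mathbf{A}}$, and $\widetilde{\mathbf{W}}$, with numerical coefficients independent of $\tau$. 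Because $\Omega(\boldsymbol{\chi})$ is compact (the AF-gain constraint is bounded) and because $\boldsymbol{\Sigma}, \mathbf{K}^a$ are themselves $\mathcal{O}(\tau)$-bounded in Frobenius norm, the remainder can be majorized by $C(\boldsymbol{\Delta}, \boldsymbol{\Sigma}, \mathbf{H})\,\tau$ for some constant that is finite for each fixed $(\boldsymbol{\Delta}, \boldsymbol{\Sigma})$ after expectation over $\mathbf{H}$. Taking expectation over the (independent) $\mathbf{H}$ keeps the bound finite since the Gaussian CSI has moments of all orders.

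Once uniformity is in place, the core estimate is
\begin{align}
\bigl| T_{\boldsymbol{\chi}}(\theta, V) - T_{\boldsymbol{\chi}}^\dagger(\theta, V) \bigr|
&= \Bigl| \min_{\mathbf{F} \in \Omega(\boldsymbol{\chi})} T_{\boldsymbol{\chi}}(\theta, V, \mathbf{F}) - \min_{\mathbf{F} \in \Omega(\boldsymbol{\chi})} T_{\boldsymbol{\chi}}^\dagger(\theta, V, \mathbf{F}) \Bigr| \notag \\
&\leq \sup_{\mathbf{F} \in \Omega(\boldsymbol{\chi})} \bigl| T_{\boldsymbol{\chi}}(\theta, V, \mathbf{F}) - T_{\boldsymbol{\chi}}^\dagger(\theta, V, \mathbf{F}) \bigr| = \mathcal{O}(\tau). \notag
\end{align}
Conditioning on $(\boldsymbol{\Delta}, \boldsymbol{\Sigma})$ and invoking $\mathbb{E}[T_{\boldsymbol{\chi}}^\dagger(\theta, V) | \boldsymbol{\Delta}, \boldsymbol{\Sigma}] = 0$ from (\ref{zerofunc}) closes the argument.

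The step I expect to be the most delicate is the uniformity claim for the remainder. The naive Taylor expansion in the proof of Lemma \ref{applema} absorbs $\mathbf{F}$-dependent factors inside the $\mathcal{O}(\tau)$ symbol, but to invoke the variational inequality above I genuinely need an envelope that is free of $\mathbf{F}$ over $\Omega(\boldsymbol{\chi})$; compactness of the feasible set together with the polynomial structure of the residuals in $\mathbf{F}$ and the $\mathcal{O}(\tau)$ smallness of $\boldsymbol{\Sigma}$ and $\mathbf{K}^a$ should suffice, but this bookkeeping is where care is required. Everything else is a direct consequence of Lemma \ref{applema} and the two fixed-point identities in (\ref{zerofunc}).
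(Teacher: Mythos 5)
Your proposal is correct and follows essentially the same route as the paper: the paper also sandwiches $T_{\boldsymbol{\chi}}(\theta,V)$ between $\min_{\mathbf{F}} T_{\boldsymbol{\chi}}^\dagger(\theta,V,\mathbf{F})+\mathcal{O}(\tau)$ (lower bound) and $T_{\boldsymbol{\chi}}^\dagger(\theta,V,\mathbf{F}^\dagger)+\mathcal{O}(\tau)$ evaluated at the minimizer $\mathbf{F}^\dagger$ of $T_{\boldsymbol{\chi}}^\dagger$ (upper bound), which is exactly your variational inequality $|\min f-\min g|\leq\sup|f-g|$, and then invokes $\mathbb{E}\big[T_{\boldsymbol{\chi}}^\dagger(\theta,V)\big|\boldsymbol{\Delta},\boldsymbol{\Sigma}\big]=0$ from (\ref{zerofunc}). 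Your explicit attention to uniformity of the $\mathcal{O}(\tau)$ residual over the compact feasible set $\mathrm{Tr}(\mathbf{F}^\dagger\mathbf{F})\leq\bar{F}$ is a point the paper leaves implicit, so it only strengthens the same argument.
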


\begin{proof}	[Proof of Lemma \ref{applemma}]
For any $\boldsymbol{\chi}$, we have $T_{\boldsymbol{\chi}} (\theta, V)=\min_{\mathbf{F}}\left[ T_{\boldsymbol{\chi}}^\dagger(\theta, V, \mathbf{F})+\mathcal{O}(\tau)\right] \geq  \min_{\mathbf{F}} T_{\boldsymbol{\chi}}^\dagger(\theta,  V, \mathbf{F})  + \mathcal{O}(\tau)$. On the other hand, $T_{\boldsymbol{\chi}} (\theta, V) \leq T_{\boldsymbol{\chi}}^\dagger(\theta, V, \mathbf{F}^\dagger) + \mathcal{O}(\tau)$, where $\mathbf{F}^\dagger= \arg \min_{F} T_{\boldsymbol{\chi}}^\dagger(\theta, V, \mathbf{F}) $. Since $\mathbb{E}\big[\min_{\mathbf{F}} T_{\boldsymbol{\chi}}^\dagger(\theta, V, \mathbf{F})\big| \boldsymbol{\Delta}, \boldsymbol{\Sigma}\big]=0$, we have $\mathbb{E}\big[T_{\boldsymbol{\chi}}  (\theta, V)\big| \boldsymbol{\Delta}, \boldsymbol{\Sigma}\big]=\mathcal{O}(\tau)$.\end{proof}

\vspace{0.3cm}
\hspace{-0.3cm} \emph{C. Difference between $V^\ast\left(\boldsymbol{\Delta},\boldsymbol{\Sigma} \right)$ and $V\left(\boldsymbol{\Delta}, \boldsymbol{\Sigma}\right)$}
\vspace{0.1cm}

\begin{Lemma}		\label{tenlemma}
	Suppose $\mathbb{E}\big[ T_{\boldsymbol{\chi}}(\theta^\ast, V^\ast)\big| \boldsymbol{\Delta}, \boldsymbol{\Sigma}\big] = 0$ for all $\boldsymbol{\Delta}, \boldsymbol{\Sigma}$ together with the transversality condition in (\ref{transodts})  has a unique solution $(\theta^*, V^\ast)$. If $\mathbb{E}\big[ T_{\boldsymbol{\chi}}^\dagger(\theta, V)\big| \boldsymbol{\Delta}, \boldsymbol{\Sigma}\big] =0$ and $V(\boldsymbol{\Delta}, \boldsymbol{\Sigma})=\mathcal{O}(\left\|\boldsymbol{\Delta}\right\|^2)$,  then $|V^\ast\left(\boldsymbol{\Delta},\boldsymbol{\Sigma} \right)-V\left(\boldsymbol{\Delta},\boldsymbol{\Sigma} \right)|=\mathcal{O}(\tau)$ for all  $\boldsymbol{\Delta}, \boldsymbol{\Sigma}$.	
\end{Lemma}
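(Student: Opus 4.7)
The plan is to combine Lemma \ref{applemma} with an iterated-expectation argument to transfer the $\mathcal{O}(\tau)$ residual in the approximate Bellman operator into an $\mathcal{O}(\tau)$ bound on $|V^\ast - V|$, invoking uniqueness and the quadratic growth of $V$ to close the argument.

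First I would apply Lemma \ref{applemma} to the pair $(\theta, V)$ that satisfies the PDE-based equation $\mathbb{E}[T_{\boldsymbol{\chi}}^\dagger(\theta, V)|\boldsymbol{\Delta}, \boldsymbol{\Sigma}]=0$, obtaining $\mathbb{E}[T_{\boldsymbol{\chi}}(\theta, V)|\boldsymbol{\Delta}, \boldsymbol{\Sigma}]=\mathcal{O}(\tau)$. Multiplying by $\tau$ and unfolding $T_{\boldsymbol{\chi}}$ gives the equivalent statement
\begin{equation*}
V(\boldsymbol{\Delta}, \boldsymbol{\Sigma}) + \theta\tau = \min_{\mathbf{F}}\mathbb{E}\big[c(\boldsymbol{\chi}, \mathbf{F})\tau + \mathbb{E}[V(\boldsymbol{\Delta}', \boldsymbol{\Sigma}')|\boldsymbol{\chi}, \mathbf{F}]\,\big|\,\boldsymbol{\Delta}, \boldsymbol{\Sigma}\big] + \delta(\boldsymbol{\Delta}, \boldsymbol{\Sigma}),
\end{equation*}
where $c(\boldsymbol{\chi}, \mathbf{F}) = \boldsymbol{\Delta}^T\mathbf{S}\boldsymbol{\Delta} + \lambda\mathrm{Tr}(\mathbf{F}^\dagger\mathbf{F})$ and $\delta = \mathcal{O}(\tau^2)$, so $V$ is an \emph{exact} solution of the Bellman equation for the perturbed per-stage cost $c\tau - \delta$.

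Next, let $\widetilde{\Omega}^\ast$ denote the stationary minimizer of the right-hand side. I would iterate the above relation for $N$ steps along $\widetilde{\Omega}^\ast$, and iterate the exact Bellman equation (\ref{OrgBel}) along the same $\widetilde{\Omega}^\ast$ (producing an inequality for $V^\ast$, since $\widetilde{\Omega}^\ast$ is suboptimal for the exact problem). Subtracting yields
\begin{equation*}
V(\boldsymbol{\Delta},\boldsymbol{\Sigma}) - V^\ast(\boldsymbol{\Delta},\boldsymbol{\Sigma}) \geq \mathbb{E}^{\widetilde{\Omega}^\ast}\big[V(\boldsymbol{\chi}(N)) - V^\ast(\boldsymbol{\chi}(N))\big] + N(\theta^\ast-\theta)\tau + \sum_{n=0}^{N-1}\mathbb{E}^{\widetilde{\Omega}^\ast}[\delta(\boldsymbol{\chi}(n))].
\end{equation*}
Dividing by $N$, sending $N\to\infty$, and using $V, V^\ast=\mathcal{O}(\|\boldsymbol{\Delta}\|^2)$ together with the stability of $\{\boldsymbol{\chi}(n)\}$ under $\widetilde{\Omega}^\ast$ to kill the boundary term, I first extract the scalar identity $\theta = \widetilde{\theta}^\ast + \mathcal{O}(\tau)$; combined with Theorem \ref{perfgap} this yields $\theta = \theta^\ast + \mathcal{O}(\tau)$. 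Substituting back with $N$ chosen of order $1/\tau$ (so that both $N(\theta^\ast-\theta)\tau$ and the accumulated residual $N\cdot\mathcal{O}(\tau^2)$ are $\mathcal{O}(\tau)$) produces $V - V^\ast \geq -\mathcal{O}(\tau)$. A symmetric argument using the exact optimizer $\Omega^\ast$ in place of $\widetilde{\Omega}^\ast$ yields the matching upper bound, establishing $|V^\ast - V|=\mathcal{O}(\tau)$ pointwise.

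The main obstacle is the competition between the two $\mathcal{O}(N\tau^2)$-scale quantities --- the accumulated residual $N\cdot\mathcal{O}(\tau^2)$ and the correction $N(\theta^\ast-\theta)\tau$ --- which must combine to an $\mathcal{O}(\tau)$ net contribution for the bound to close. Resolving this requires (i) a uniform-in-$\boldsymbol{\chi}$ bound on the residual $\delta$, obtainable from the Taylor-expansion proof of Lemma \ref{applema} under the quadratic-growth hypothesis on $V$; and (ii) uniform second-moment bounds on $\boldsymbol{\chi}(n)$ under $\widetilde{\Omega}^\ast$, supplied by the admissibility in Corollary \ref{remarkwwewxx1}. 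The uniqueness hypothesis in the statement is what rules out spurious additive harmonic components in $V-V^\ast$ with respect to the controlled Markov kernel, forcing the gap to vanish at the rate of the residual rather than persist as a nontrivial invariant function.
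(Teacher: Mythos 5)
There is a genuine gap at the step where you pass from the rolled-out inequality to the pointwise bound. After choosing $N\asymp 1/\tau$, your inequality still contains the term $\mathbb{E}^{\widetilde{\Omega}^\ast}\big[V(\boldsymbol{\chi}(N))-V^\ast(\boldsymbol{\chi}(N))\big]$, which is exactly the quantity you are trying to bound, evaluated at the (random) state at time $N$. The quadratic growth of $V,V^\ast$ together with second-moment stability of $\boldsymbol{\chi}(n)$ only makes this term $\mathcal{O}(1)$; dividing by $N$ removes it in your $\theta$-extraction step but not in the pointwise step, where $N$ is finite. To conclude $V-V^\ast\geq-\mathcal{O}(\tau)$ you would need the boundary term itself to be $\geq-\mathcal{O}(\tau)$ (circular), or an explicit contraction/ergodicity mechanism (e.g.\ a weighted sup-norm or span-seminorm contraction of the kernel, as the paper uses in Appendix D, or a coupling argument) that makes the difference of expectations converge and then lets the uniqueness hypothesis pin down the residual additive constant. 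You gesture at this in your closing remark about harmonic components, but no mechanism is supplied, so the argument as written does not close; note also that even with ergodicity you would a priori only get that $V-V^\ast$ is constant up to $\mathcal{O}(\tau)$, and the uniqueness assumption (which your proof never actually invokes operationally) is what must kill that constant.

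A second, structural problem: you invoke Theorem \ref{perfgap} to upgrade $\theta=\widetilde{\theta}^\ast+\mathcal{O}(\tau)$ to $\theta=\theta^\ast+\mathcal{O}(\tau)$, but in this paper the proof of Theorem \ref{perfgap} relies on $\|V^\ast-V\|=\mathcal{O}(\tau)$, i.e.\ on Lemma \ref{perturbPDE}, which is itself established through the present lemma. Using Theorem \ref{perfgap} here is therefore circular; you would instead have to derive $\theta-\theta^\ast=\mathcal{O}(\tau)$ directly by a two-sided rollout (exact Bellman equation \eqref{OrgBel} along its own optimizer versus the perturbed equation as an inequality), which is doable but must be spelled out. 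For comparison, the paper's own proof is much shorter and takes a different route: it first checks that the quadratic growth of $V$ plus admissibility implies the transversality condition \eqref{transodts} for $V$, then uses Lemma \ref{applemma} to see that $(\theta,V)$ satisfies the exact optimality equation up to $\mathcal{O}(\tau)$, and finally argues by contradiction with the assumed uniqueness of $(\theta^\ast,V^\ast)$ as $\tau\rightarrow 0$ that no order-one gap can persist, so the difference is $\mathcal{O}(\tau)$; the uniqueness hypothesis is the crux there, whereas in your proposal it plays no effective role.
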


\begin{proof}	[Proof of Lemma \ref{tenlemma}]
Since $V(\boldsymbol{\Delta}, \boldsymbol{\Sigma})=\mathcal{O}(\left\|\boldsymbol{\Delta}\right\|^2)$, we have $\lim_{n \rightarrow \infty}\mathbb{E}^{\Omega} \left[V( \boldsymbol{\Delta}(n), \boldsymbol{\Sigma}(n))\right]<\infty$ for any admissible policy $\Omega$ (according to Definition \ref{admisscontrolpol}, we have  $\mathbb{E}^{\Omega}\left[\left\|\boldsymbol{\Delta}\right\|^2 \right] < \infty$.). Then, we have $\lim_{N\rightarrow \infty} \frac{1}{N}\mathbb{E}^{\Omega}\left[ V\left( \boldsymbol{\Delta}(N),  \boldsymbol{\Sigma}(N)\right) |\boldsymbol{\chi}(0)\right]  = 0$ and the transversality condition in (\ref{transodts}) is satisfied for $V(\boldsymbol{\Delta}, \boldsymbol{\Sigma})$.

	Suppose for some $  \boldsymbol{\Delta}', \boldsymbol{\Sigma}'$, we have $V\left( \boldsymbol{\Delta}',\boldsymbol{\Sigma}' \right)=V^\ast\left( \boldsymbol{\Delta}',\boldsymbol{\Sigma}' \right)+\alpha$ for some $\alpha \neq 0$ as $\tau \rightarrow 0$. Now let $\tau \rightarrow 0$. From Lemma \ref{applemma}, we have $(\theta, V)$ satisfies $\mathbb{E}\big[T_{\boldsymbol{\chi}}(\theta, V)\big| \boldsymbol{\Delta}, \boldsymbol{\Sigma}\big]= 0$ for all $\boldsymbol{\Delta}, \boldsymbol{\Sigma}$ and satisfies  the transversality condition in (\ref{transodts}). However, $V\left( \boldsymbol{\Delta}',\boldsymbol{\Sigma}' \right) \neq V^\ast\left( \boldsymbol{\Delta}',\boldsymbol{\Sigma}' \right)$ because of the assumption that $V\left( \boldsymbol{\Delta}',\boldsymbol{\Sigma}' \right)=V^\ast\left( \boldsymbol{\Delta}',\boldsymbol{\Sigma}' \right)+\alpha$. This contradicts  the condition that $(\theta^*, V^\ast)$ is a unique solution of $\mathbb{E}\big[T_{\boldsymbol{\chi}}(\theta^\ast, V^\ast)\big| \boldsymbol{\Delta}, \boldsymbol{\Sigma}\big] =0$ for all $\boldsymbol{\Delta}, \boldsymbol{\Sigma}$  and the transversality condition in (\ref{transodts}). Hence, we must have $|V\left(\boldsymbol{\Delta},\boldsymbol{\Sigma} \right)-V^\ast\left(\boldsymbol{\Delta},\boldsymbol{\Sigma} \right)|=\mathcal{O}(\tau)$ for all  $\boldsymbol{\Delta}, \boldsymbol{\Sigma}$.
\end{proof}

\section*{Appendix D:  Proof of Theorem \ref{perfgap}}

We  calculate the performance under policy $\widetilde{\Omega}^\ast$ as follows:
\bs\begin{align}
	&\widetilde{\theta}^\ast\tau= \mathbb{E}^{\widetilde{\Omega}^\ast}\Big[\mathbb{E}\left[\big((\boldsymbol{\Delta}')^T\mathbf{S}\boldsymbol{\Delta}' + \lambda  \text{Tr}\big(\mathbf{F}^\dagger \mathbf{F} \big)  \big)\tau\big| \boldsymbol{\Delta}, \boldsymbol{\Sigma}\right]\Big]\notag \\
	&\overset{(a)}=\mathbb{E}^{\widetilde{\Omega}^\ast}\bigg[\mathbb{E}\bigg[\big((\boldsymbol{\Delta}')^T\mathbf{S}\boldsymbol{\Delta}' + \lambda  \text{Tr}\big(\mathbf{F}^\dagger \mathbf{F} \big)  \big)\tau \notag \\
	&+\sum_{  \boldsymbol{\Delta}', \boldsymbol{\Sigma}'} {\Pr}\big[   \boldsymbol{\Delta}', \boldsymbol{\Sigma}'| \boldsymbol{\chi},  \widetilde{\Omega}^\ast\left(\boldsymbol{\chi}  \right)\big]V \left(  \boldsymbol{\Delta}', \boldsymbol{\Sigma}'\right)  - V \left(\boldsymbol{\Delta}, \boldsymbol{\Sigma}  \right)   \Big| \boldsymbol{\Delta}, \boldsymbol{\Sigma}\bigg]\bigg] 	\notag \\
	&\overset{(b)}=\mathbb{E}^{\widetilde{\Omega}^\ast}\left[T_{\boldsymbol{\chi}}^\dagger(\theta, V, \mathbf{F})+\theta\tau+\mathcal{O}(\tau^2) \bigg| \boldsymbol{\Delta}, \boldsymbol{\Sigma}\right] 	\label{finalexpr}
\end{align}\bsc where ${\Pr}\big[   \boldsymbol{\Delta}', \boldsymbol{\Sigma}'|\boldsymbol{\chi},  \widetilde{\Omega}^\ast\left(\boldsymbol{\chi}\right)\big]$ is the discrete-time transition kernel under policy $\widetilde{\Omega}^\ast$. $(a)$ is due to 1) $\mathbb{E}^{\widetilde{\Omega}^\ast}\big[V(\boldsymbol{\Delta}, \boldsymbol{\Sigma} ) \big]<\infty$  (according to the conditions in  Theorem \ref{perfgap}, we have $V(\boldsymbol{\Delta}, \boldsymbol{\Sigma})=\mathcal{O}\big(\left\|\boldsymbol{\Delta}\right\|^2\big)$ and $\mathbb{E}^{\widetilde{\Omega}^\ast}\big[\|\boldsymbol{\Delta}\|^2\big]$ is bounded under admissible $\widetilde{\Omega}^\ast$) and  2) $ \mathbb{E}^{\widetilde{\Omega}^\ast}\big[\sum_{  \boldsymbol{\Delta}', \boldsymbol{\Sigma}'} \mathbb{E} [ {\Pr}\big[   \boldsymbol{\Delta}', \boldsymbol{\Sigma}'\big| \boldsymbol{\chi},    \widetilde{\Omega}^\ast\left(\boldsymbol{\chi}  \right)\big]\big| \boldsymbol{\Delta}, \boldsymbol{\Sigma}]  V \left(  \boldsymbol{\Delta}', \boldsymbol{\Sigma}'\right)   \big]  =\mathbb{E}^{\widetilde{\Omega}^\ast}\big[\mathbb{E}^{\widetilde{\Omega}^\ast}\big[V(  \boldsymbol{\Delta}', \boldsymbol{\Sigma}') \big|\boldsymbol{\Delta}, \boldsymbol{\Sigma}  \big]\big]=\mathbb{E}^{\widetilde{\Omega}^\ast}\big[V(\boldsymbol{\Delta}, \boldsymbol{\Sigma} ) \big]$, and $(b)$ is due to Lemma \ref{applema}.

Following the notation of the \emph{Bellman operators}  in  Appendix D, we define two mappings: $T_{\boldsymbol{\chi}}^\dagger( V, \mathbf{F}) = T_{\boldsymbol{\chi}}^\dagger(\theta, V, \mathbf{F}) + \theta $, $T_{\boldsymbol{\chi}}(V, \mathbf{F})=T_{\boldsymbol{\chi}}(\theta, V, \mathbf{F})+\theta$.  Let $\Omega^\ast$ be the optimal policy solving the discrete-time Bellman equation in (\ref{OrgBel}). Then we have
\begin{align}
	\mathbb{E}\big[T_{\boldsymbol{\chi}}({V^\ast}, \Omega^\ast(\boldsymbol{\chi}))\big| \boldsymbol{\Delta}, \boldsymbol{\Sigma}\big]= \theta^\ast, \quad \forall \boldsymbol{\Delta}, \boldsymbol{\Sigma}	\label{asdadadasd}
\end{align}
Furthermore, we have
\begin{align}	\label{minach}
	T_{\boldsymbol{\chi}}^\dagger( V, \widetilde{\Omega}^\ast(\boldsymbol{\chi} ))= \min_{\Omega(\boldsymbol{\chi} )} T_{\boldsymbol{\chi}}^\dagger( V, \Omega(\boldsymbol{\chi})), \quad \forall\boldsymbol{\Delta}, \boldsymbol{\Sigma}
\end{align}
Dividing $\tau$ on both sizes of (\ref{finalexpr}), we obtain
\begin{align}
	&\widetilde{\theta}^\ast =\mathbb{E}^{\widetilde{\Omega}^\ast}\big[\mathbb{E}\big[T_{\boldsymbol{\chi}}^\dagger(V, \widetilde{\Omega}^\ast(\boldsymbol{\chi} ))+ \mathcal{O}(\tau)\big|\boldsymbol{\Delta}, \boldsymbol{\Sigma}\big]\big] 	\label{finaleeeee} \\
	\overset{(c)} \leq & \mathbb{E}^{\widetilde{\Omega}^\ast}\big[\mathbb{E}\big[T_{\boldsymbol{\chi}}^\dagger(V, {\Omega^\ast}(\boldsymbol{\chi} )) + \mathcal{O}(\tau)\big| \boldsymbol{\Delta}, \boldsymbol{\Sigma} \big]\big]\notag \\
	& \overset{(d)}{=}\mathbb{E}^{\widetilde{\Omega}^\ast}\big[\mathbb{E}\big[T_{\boldsymbol{\chi}}(V, \Omega^\ast(\boldsymbol{\chi} )) + \mathcal{O}(\tau) \big|\boldsymbol{\Delta}, \boldsymbol{\Sigma}\big]\big] 	\notag \\
	\overset{(e)}=& \mathbb{E}^{\widetilde{\Omega}^\ast}\big[\mathbb{E}\big[T_{\boldsymbol{\chi}}(V, \Omega^\ast(\boldsymbol{\chi} )) - T_{\boldsymbol{\chi}}({V^\ast}, \Omega^\ast(\boldsymbol{\chi} )) + \theta^\ast+ \mathcal{O}(\tau) \big|\boldsymbol{\Delta}, \boldsymbol{\Sigma} \big]\big] 	\notag
\end{align}
where (c) is due to (\ref{minach}), (d) is due to Lemma \ref{applema},  and  (e) is due to (\ref{asdadadasd}).  \txtblue{Then, from (\ref{finaleeeee}),   we have
\begin{align}
	&\widetilde{\theta}^\ast -  \theta^\ast  \\
	&\leq \mathbb{E}^{\widetilde{\Omega}^\ast}\Big[\mathbb{E}\big[T_{\boldsymbol{\chi}}(V, {\Omega}^\ast(\boldsymbol{\chi} ))-T_{\boldsymbol{\chi}}(V^\ast, {\Omega}^\ast(\boldsymbol{\chi} ))\big|\boldsymbol{\Delta}, \boldsymbol{\Sigma}\big]\Big]+ \mathcal{O}(\tau)   \notag \\
	&\overset{(f)}{\leq} \gamma \mathbb{E}^{\widetilde{\Omega}^\ast}\Big[\mathbb{E}\big[\omega(\boldsymbol{\chi})\|\mathbf{V}^\ast-\mathbf{V}\|_{\infty}^{\overline{\omega}}\big|\boldsymbol{\Delta}, \boldsymbol{\Sigma}\big]\Big]+ \mathcal{O}(\tau)  \notag \\
	& \overset{(g)}{=}\gamma \mathbb{E}^{\widetilde{\Omega}^\ast}\Big[\mathbb{E}\big[\omega(\boldsymbol{\chi})\big(\mathcal{O}(\tau)  \big)\big|\boldsymbol{\Delta}, \boldsymbol{\Sigma}\big]\Big]+ \mathcal{O}(\tau)   \overset{(h)}{=} \mathcal{O}(\tau)  \notag
\end{align}
where $(f)$ holds because
\begin{align}	\label{comtracmapp}
	\|T_{\boldsymbol{\chi}}(V, {\Omega}^\ast(\boldsymbol{\chi} ))-T_{\boldsymbol{\chi}}(V^\ast, {\Omega}^\ast(\boldsymbol{\chi} ))\|_{\infty}^{\overline{\omega}}\leq \gamma\|\mathbf{V}^\ast-\mathbf{V}\|_{\infty}^{\overline{\omega}}
\end{align}
with $\mathbf{V}^\ast=\{V^\ast(\boldsymbol{\Delta}, \boldsymbol{\Sigma}):\forall \boldsymbol{\Delta}, \boldsymbol{\Sigma}\}$ and $\mathbf{V}=\{V(\boldsymbol{\Delta}, \boldsymbol{\Sigma}):\forall \boldsymbol{\Delta}, \boldsymbol{\Sigma}\}$, for $0<\gamma<1$ according to Lemma 3 of \cite{hinidhsd} and $\|\cdot\|_{\infty}^{\overline{\omega}}$ is a weighted sup-norm with weights $\overline{\omega}=\{0<\omega(\boldsymbol{\chi})<1:\forall \boldsymbol{\chi}\}$ chosen according to the following rule (Lemma 3 of \cite{hinidhsd}): The state space w.r.t. $\boldsymbol{\chi}$ is partitioned into non-empty subsets $\mathcal{S}_1,\dots, \mathcal{S}_r$, in which for any $\boldsymbol{\chi}\in \mathcal{S}_n$ with $n=1,\dots, r$,  there exists some $\boldsymbol{\chi}'\in \mathcal{S}_1\cup \mathcal{S}_{n-1}$ such that $\Pr[\boldsymbol{\chi}'|\boldsymbol{\chi},{\Omega}^\ast(\boldsymbol{\chi}) ]>0$. Then, we let $\rho=\min \{\Pr[\boldsymbol{\chi}'|\boldsymbol{\chi},{\Omega}^\ast(\boldsymbol{\chi}) ]:\forall \boldsymbol{\chi},\boldsymbol{\chi}' \}$ and choose $\omega(\boldsymbol{\chi})=1-\rho^{2n}$ if $\boldsymbol{\chi}\in \mathcal{S}_n$ for $n=1,\dots, r$. Therefore, based on the contraction mapping property in (\ref{comtracmapp}) and the definition of the weighted sup-norm, we have
\begin{align}
	& \frac{T_{\boldsymbol{\chi}}(V, {\Omega}^\ast(\boldsymbol{\chi} ))-T_{\boldsymbol{\chi}}(V^\ast, {\Omega}^\ast(\boldsymbol{\chi} ))}{\omega(\boldsymbol{\chi})}  \\
	\leq & \sup_{\boldsymbol{\chi} }\left\{\frac{T_{\boldsymbol{\chi}}(V, {\Omega}^\ast(\boldsymbol{\chi} ))-T_{\boldsymbol{\chi}}(V^\ast, {\Omega}^\ast(\boldsymbol{\chi} ))}{\omega(\boldsymbol{\chi})}\right\}\notag \\
	=& \|T_{\boldsymbol{\chi}}(V, {\Omega}^\ast(\boldsymbol{\chi} ))-T_{\boldsymbol{\chi}}(V^\ast, {\Omega}^\ast(\boldsymbol{\chi} ))\|_{\infty}^{\overline{\omega}}\leq \gamma\|\mathbf{V}^\ast-\mathbf{V}\|_{\infty}^{\overline{\omega}}\notag \\
	\Rightarrow &T_{\boldsymbol{\chi}}(V, {\Omega}^\ast(\boldsymbol{\chi} ))-T_{\boldsymbol{\chi}}(V^\ast, {\Omega}^\ast(\boldsymbol{\chi} ))\leq \gamma \omega(\boldsymbol{\chi})\|\mathbf{V}^\ast-\mathbf{V}\|_{\infty}^{\overline{\omega}} \notag
\end{align}
This proves (f), and (g) is because $\|\mathbf{V}^\ast-\mathbf{V}\|_{\infty}^{\overline{\omega}}=\sup_{\boldsymbol{\chi}}\left\{\frac{|V^\ast(\boldsymbol{\Delta}, \boldsymbol{\Sigma})-V(\boldsymbol{\Delta}, \boldsymbol{\Sigma})|}{\omega(\boldsymbol{\chi})}\right\}=\mathcal{O}(\tau)  $ according to Lemma \ref{perturbPDE}, and (h) is because $0<\omega(\boldsymbol{\chi})<1$ for all $\boldsymbol{\chi}$.}

\section*{Appendix E: Proof of Theorem \ref{thmpower}}

Let an eigenvalue decomposition of  the channel matrix be $\mathbf{H}^\dagger \mathbf{H}= \mathbf{U} \boldsymbol{\Pi}  \mathbf{U}^\dagger$, where  $ \mathbf{U} \in \mathbb{C}^{N_t \times N_t}$ is a unitary matrix, $\boldsymbol{\Pi} \in \mathbb{R}^{N_t \times N_t}$ is diagonal with elements being the  squared singular values in a descending   order, i.e., $\sigma^\ast>\sigma_2^2>\cdots>\sigma_d^2$ where $d=\min(N_t,N_r)$. Then,  using the transformation of $\mathbf{G}= \mathbf{U}^\dagger \mathbf{F}$, the problem in the  PDE (\ref{bellman2}) becomes:
\begin{align}	
	 \ &\min_{\mathbf{G}}\left[ \lambda \mathrm{Tr}( \mathbf{G}^\dagger \mathbf{G})  - 2\text{Re}\left\{ \text{Tr}\left(\boldsymbol{\Xi} \mathbf{G}^\dagger \boldsymbol{\Pi} \mathbf{G}  \right) \right\}\right]	\notag \\
	 & \ \text{s.t.} \  \ \mathrm{Tr}( \mathbf{G}^\dagger \mathbf{G} ) \leq \bar{F}  \hspace{4cm} 
\end{align}
where we use $\mathrm{Tr}\big( \mathbf{F}^\dagger \mathbf{F}  \big) =\mathrm{Tr}( \mathbf{G}^\dagger \mathbf{G})$ under $\mathbf{G}= \mathbf{U}^\dagger \mathbf{F}$. We further write $\mathbf{G}=\sqrt{g}\widetilde{\mathbf{G}}$ such that $g=\|\mathbf{G}\|_F$ and $\mathrm{Tr}( \widetilde{\mathbf{G}}^\dagger \widetilde{\mathbf{G}})=1$. Hence, we write the above problem in the following form:
\begin{align}	
	 \mathcal{P}_1:  \widetilde{U}(c) \ = \ &\min_{\widetilde{\mathbf{G}}, c}\left[ \lambda  - 2\text{Re}\left\{ \text{Tr}\left(\boldsymbol{\Xi} \widetilde{\mathbf{G}}^\dagger \boldsymbol{\Pi} \widetilde{\mathbf{G}}  \right) \right\}\right]c \notag \\
	 & \ \text{s.t.} \  \ \mathrm{Tr}( \mathbf{G}^\dagger \mathbf{G} )=c  \notag
\end{align} \begin{align}	\label{probtrans}
	 \mathcal{P}_2: \min_{c} \  \widetilde{U}(c)	\notag \\
	  \ \text{s.t.} \  \ 0\leq c \leq \bar{F}
\end{align}

We first solve $\mathcal{P}_1$ for given $c$. Since $\lambda g$ is a constant, the objective of $\mathcal{P}_1$ related to $\widetilde{\mathbf{G}}$ becomes  
\begin{align}
	&2 \text{Re}\left\{ \text{Tr}\left(\boldsymbol{\Xi} \widetilde{\mathbf{G}}^\dagger  \boldsymbol{\Pi}  \widetilde{\mathbf{G}} \right)  \right\}\notag \\
	=&\text{Re}\left\{ \text{Tr}\left(\left(({\boldsymbol{\Xi}+\boldsymbol{\Xi}^T})+({\boldsymbol{\Xi}-\boldsymbol{\Xi}^T})\right) \widetilde{\mathbf{G}}^\dagger  \boldsymbol{\Pi}  \widetilde{\mathbf{G}}  \right)\right\} \notag \\
	\overset{(a)}{=}& \text{Re}\left\{ \text{Tr}\left(({\boldsymbol{\Xi}+\boldsymbol{\Xi}^T}) \widetilde{\mathbf{G}}^\dagger  \boldsymbol{\Pi}  \widetilde{\mathbf{G}}   \right)\right\}		\label{54equations}
\end{align}
where (a) is because $\text{Re}\left\{ \text{Tr}\left(({\boldsymbol{\Xi}-\boldsymbol{\Xi}^T}) \widetilde{\mathbf{G}}^\dagger  \boldsymbol{\Pi}  \widetilde{\mathbf{G}}   \right)\right\} = \text{Re}\left\{ \sum_{i=1}^d   \text{Tr}\left(  \sigma_i  \widetilde{\mathbf{g}}_i  ({\boldsymbol{\Xi}-\boldsymbol{\Xi}^T})\widetilde{\mathbf{g}}_i^\dagger\right)\right\} =0$  where ${\boldsymbol{\Xi}-\boldsymbol{\Xi}^T}$ is skew-symmetric and $\widetilde{\mathbf{g}}_i$ is the $i$-th row of $\widetilde{\mathbf{G}}$. Furthermore, the  $\mathrm{Tr}( \widetilde{\mathbf{G}}^\dagger \widetilde{\mathbf{G}})=1$ is equivalent to $\sum_{i=1}^{N_t}  \|\widetilde{\mathbf{g}}_i\|^2=1$. Also, the matrix $\boldsymbol{\Xi}+\boldsymbol{\Xi}^T$ is  symmetric and we have $\boldsymbol{\Xi}+\boldsymbol{\Xi}^T= \sum_{i}^{\text{rank}(\boldsymbol{\Xi})}\nu_i\mathbf{q}_i \mathbf{q}_i^T$, where $\nu_i$ is the eigenvalue and $\mathbf{q}_i$ is  the associated $L \times 1$ orthonormal column eigenvectors.  Therefore, (\ref{54equations}) becomes $\text{Re}\left\{ \text{Tr}\left(\left(\sum_{i}^{\text{rank}(\boldsymbol{\Xi})}\nu_i\mathbf{q}_i \mathbf{q}_i^T\right) \widetilde{\mathbf{G}}^\dagger  \boldsymbol{\Pi}  \widetilde{\mathbf{G}}   \right)\right\}=\text{Re}\left\{\sum_{i}^{\text{rank}(\boldsymbol{\Xi})}\nu_i\sum_{l=1}^d \sigma_l^2 \big|\widetilde{\mathbf{g}}_l\mathbf{q}_i\big|^2\right\}	$. The optimal $\widetilde{\mathbf{G}}^\ast$ of the above problem under $\sum_{i=1}^{N_t}  \|\widetilde{\mathbf{g}}_i\|^2=1$ is  $\widetilde{\mathbf{g}}_1^\ast=\mathbf{q}_1^T$ and $\widetilde{\mathbf{g}}_i^\ast=0$ for $i\neq 1$.  Substituting $\widetilde{\mathbf{G}}^\ast$, $\mathcal{P}_2$ in (\ref{probtrans}) becomes
\begin{align}	
	&\min_{c} \ \left[ \lambda  -\sigma^\ast \nu^\ast  \right]g	\notag \\
	&\ \text{s.t.}	 \  \  0\leq g\leq \bar{F}
\end{align}
where $\nu^\ast=\nu_1$. The optimal solution of the above problem is $g^\ast = 0$ if $ \lambda >\sigma^\ast \nu^\ast   $, and $g^\ast = \bar{F}$ if $ \lambda <\sigma^\ast \nu^\ast   $.    Combining the solution of $\mathcal{P}_1$ and $\mathcal{P}_2$, the optimal precoding is summarized as follows:  if $ \lambda >\sigma^\ast \nu^\ast  $, $\mathbf{F}^\ast=0$. If $ \lambda <\sigma^\ast \nu^\ast   $,  $\mathbf{F}^\ast=\mathbf{U} \mathbf{G}^\ast=\sqrt{\bar{F}  }\mathbf{U} \boldsymbol{\Upsilon}$, where the first row is the only non-zero row of $\Upsilon$ which is given by   $\mathbf{q}_1^T$.

\section*{Appendix F: Proof of Lemma \ref{solpde}}

Substituting $\mathbf{F}^\ast$ in  Theorem \ref{thmpower} into the PDE in (\ref{bellman2}), we obtain 
\bs\begin{align}	
		\theta  = \boldsymbol{\Delta}^T \mathbf{S} \boldsymbol{\Delta} - \mathbb{E}\left[ \left[ \sigma^\ast \nu^\ast-\lambda    \right]^+\big| \boldsymbol{\Delta}, \boldsymbol{\Sigma}   \right] \bar{F}     +\nabla_{\boldsymbol{\Delta}}^T V  \widetilde{\mathbf{A}} \boldsymbol{\Delta} \notag \\
		+ \frac{1}{2}\text{Tr}\left( \nabla_{\boldsymbol{\Delta}}^2 V \widetilde{\mathbf{W}} \right)+ \text{Tr}\left( \frac{\partial V}{\partial \boldsymbol{\Sigma}}  \widetilde{\mathbf{W}} \right) 	\label{bellman21}
\end{align}\bsc
	
The difficult of solving the above PDE lies in the nonlinear expectation part. In the following, we will solve the PDE for small $\nu^\ast$ and large $\nu^\ast$ cases, and we further show that they corresponds to  small $\|\boldsymbol{\Delta} \|$ and large $\|\boldsymbol{\Delta} \|$, respectively.
	
\vspace{0.3cm}
\hspace{-0.3cm}\emph{A. Solution of (\ref{bellman2}) for small $\nu^\ast$} 
\vspace{0.1cm}

In this part, we solve the PDE  in (\ref{bellman21})  for small $\nu^\ast$. We will show later that small $\|\boldsymbol{\Delta}\|$  leads to this case. Specifically, for small $\nu^\ast$,  the expectation in (\ref{bellman21}) becomes
\begin{align}
	\mathbb{E}\left[ \left[ \sigma^\ast \nu^\ast-\lambda    \right]^+\big| \boldsymbol{\Delta}, \boldsymbol{\Sigma}   \right]=\mathcal{O}(\nu^\ast)
\end{align}
Substituting the above equation into the PDE in (\ref{bellman21}), we obtain
\begin{align}
	\theta  = \boldsymbol{\Delta}^T \mathbf{S} \boldsymbol{\Delta} - \mathcal{O}(\nu^\ast)     +\nabla_{\boldsymbol{\Delta}}^T V  \widetilde{\mathbf{A}} \boldsymbol{\Delta} + \frac{1}{2}\text{Tr}\left( \nabla_{\boldsymbol{\Delta}}^2 V \widetilde{\mathbf{W}} \right)\notag \\
	+ \text{Tr}\left( \frac{\partial V}{\partial \boldsymbol{\Sigma}}  \widetilde{\mathbf{W}} \right)
\end{align}
The solution of the above PDE has the structure $V(\boldsymbol{\Delta}, \boldsymbol{\Sigma})=\widetilde{V}_1(\boldsymbol{\Delta}, \boldsymbol{\Sigma})+J_1(\boldsymbol{\Delta}, \boldsymbol{\Sigma})$ ($J_1$ can be treated as a residual error term for $V$), where $\widetilde{V}_1$ and $J_1$ satisfy
\bs\begin{align}
	&\theta  = \boldsymbol{\Delta}^T \mathbf{S} \boldsymbol{\Delta}   +\nabla_{\boldsymbol{\Delta}}^T \widetilde{V}_1  \widetilde{\mathbf{A}} \boldsymbol{\Delta} + \frac{1}{2}\text{Tr}\left( \nabla_{\boldsymbol{\Delta}}^2 \widetilde{V}_1 \widetilde{\mathbf{W}} \right)+ \text{Tr}\left( \frac{\partial \widetilde{V}_1}{\partial \boldsymbol{\Sigma}}  \widetilde{\mathbf{W}} \right)	\label{xxx1}	\\
	&\mathcal{O}(\nu^\ast)=      \nabla_{\boldsymbol{\Delta}}^T J_1  \widetilde{\mathbf{A}} \boldsymbol{\Delta} + \frac{1}{2}\text{Tr}\left( \nabla_{\boldsymbol{\Delta}}^2 J_1 \widetilde{\mathbf{W}} \right)+ \text{Tr}\left( \frac{\partial J_1}{\partial \boldsymbol{\Sigma}}  \widetilde{\mathbf{W}} \right)\label{xxx2}
\end{align}\bsc We  obtain $\widetilde{V}_1$ and $J_1$ by solving the above two equations in the following:

\emph{1) Obtaining $\widetilde{V}_1$:}  This PDE in (\ref{xxx1}) is separable with solution of the form $\widetilde{V}_1 =\boldsymbol{\Delta}^T\Phi_1(\boldsymbol{\Sigma})\boldsymbol{\Delta}+ \psi_1(\boldsymbol{\Sigma})$ for some $\Phi_1(\boldsymbol{\Sigma})\in \mathbb{R}^{L\times L}$ and $\psi_1(\boldsymbol{\Sigma})\in \mathbb{R}$. Substituting this form into (\ref{xxx1}), we obtain
\bs\begin{align}
	&\boldsymbol{\Delta}^T \left[\mathbf{S}+ \big(\Phi_1(\boldsymbol{\Sigma})+\Phi_1^T(\boldsymbol{\Sigma})\big) \widetilde{\mathbf{A}}  + \text{Tr}\left( \frac{\partial \Phi_1(\boldsymbol{\Sigma})}{\partial \boldsymbol{\Sigma}}  \widetilde{\mathbf{W}}  \right)\right] \boldsymbol{\Delta}  \label{59veryimp} \\
	&  + \left[  \frac{1}{2}\text{Tr}\left( \big(\Phi_1(\boldsymbol{\Sigma})+\Phi_1^T(\boldsymbol{\Sigma})\big)\widetilde{\mathbf{W}} \right) + \text{Tr}\left( \frac{\partial \psi_1(\boldsymbol{\Sigma})}{\partial \boldsymbol{\Sigma}}  \widetilde{\mathbf{W}}  \right)-\theta \right]=0	\notag
\end{align}\bsc In order for the above equation to hold for any $\boldsymbol{\Delta}$ and $\boldsymbol{\Sigma}$, we require the coefficient of $\boldsymbol{\Delta}^T\boldsymbol{\Delta}$ to be zero:
\begin{align}
	\mathbf{S}+ \big(\Phi_1(\boldsymbol{\Sigma})+\Phi_1^T(\boldsymbol{\Sigma})\big) \widetilde{\mathbf{A}}  + \text{Tr}\left( \frac{\partial \Phi_1(\boldsymbol{\Sigma})}{\partial \boldsymbol{\Sigma}}  \widetilde{\mathbf{W}}  \right) ={0}	\label{deltaequation}
\end{align}
Let the eigenvalue decomposition of $\widetilde{\mathbf{A}}$  be $\widetilde{\mathbf{A}}=\mathbf{M}^{-1}\boldsymbol{\Gamma} \mathbf{M}$, where $\mathbf{M}$ is an $L \times L$  matrix and $\boldsymbol{\Gamma} =\text{diag}\left(\mu_1, \mu_2, \dots, \mu_L \right)$ and $\{\mu_l\}$ are the eigenvalues of $\widetilde{\mathbf{A}}$. Using the change of variable $\mathbf{Z}=\mathbf{M}\boldsymbol{\Delta}$, denoting $\Phi_{1}^\mathbf{M}(\boldsymbol{\Sigma})=(\mathbf{M}^{-1})^\dagger \Phi_{1}(\boldsymbol{\Sigma}) \mathbf{M}^{-1}=[\phi_{1,  kl}^\mathbf{M}(\boldsymbol{\Sigma})]$, from (\ref{deltaequation}),  we have 
\bs\begin{align}	
	S^\mathbf{M}+ \big(\Phi_{1}^\mathbf{M}(\boldsymbol{\Sigma})+(\Phi_{1}^\mathbf{M}(\boldsymbol{\Sigma}))^T\big)\boldsymbol{\Gamma}    + \text{Tr}\left( \frac{\partial \Phi_{1}^\mathbf{M}(\boldsymbol{\Sigma})}{\partial \boldsymbol{\Sigma}}  \widetilde{\mathbf{W}} \right) = 0 \label{basedonequ} 
\end{align}\bsc  where $\mathbf{S}^\mathbf{M}\triangleq (\mathbf{M}^{-1})^\dagger \mathbf{S} \mathbf{M}^{-1}=\left[s_{kl}^\mathbf{M} \right]$ and denote  $ \Phi_{1, M}(\boldsymbol{\Sigma})$. We then solve (\ref{basedonequ}). For the diagonal elements in (\ref{basedonequ}), we have
\begin{align}	\label{8964equ}
	s_{kk}^\mathbf{M}+ 2 \mu_k \phi_{1, kk}^\mathbf{M}(\boldsymbol{\Sigma})+\sum_{k=1}^L \frac{\partial \phi_{1, kk}^\mathbf{M}(\boldsymbol{\Sigma})}{\partial \Sigma_{kk}}\widetilde{w}_{kk}=0
\end{align}	
where $\mathbf{W}=\text{diag}(w_{11},\dots, w_{LL})$.  For $\phi_{1, kl}^\mathbf{M}$ and $\phi_{1, lk}^\mathbf{M}$ in $\Phi_{1,U}(\boldsymbol{\Sigma})$, they satisfy the following coupled ODEs based on (\ref{basedonequ}) for $k<l$:
\bs\begin{align}
	s_{kl}^\mathbf{M} + \big(\phi_{1,kl}^\mathbf{M}(\boldsymbol{\Sigma})+\phi_{1,lk}^\mathbf{M}(\boldsymbol{\Sigma})\big)\mu_l +\sum_{k=1}^L \frac{\partial \phi_{1, kl}^\mathbf{M} (\boldsymbol{\Sigma})}{\partial \Sigma_{kk}}\widetilde{w}_{kk} =0 \label{xxy}	\\
	s_{kl}^\mathbf{M} +\big(\phi_{1,kl}^\mathbf{M}(\boldsymbol{\Sigma})+\phi_{1,lk}^\mathbf{M}(\boldsymbol{\Sigma})\big)\mu_k+\sum_{k=1}^L \frac{\partial \phi_{1, lk} (\boldsymbol{\Sigma})}{\partial \Sigma_{kk}}\widetilde{w}_{kk}=0 \label{yyx}
\end{align}\bsc Even though (\ref{xxy}) and (\ref{yyx}) are coupled, we can first obtain $\phi_{1,kl}^\mathbf{M}(\boldsymbol{\Sigma})+\phi_{1,lk}^\mathbf{M}(\boldsymbol{\Sigma})$ by solving the  ODE by adding  (\ref{xxy}) and (\ref{yyx}) together. Then, we obtain either $\phi_{1,kl}^\mathbf{M}(\boldsymbol{\Sigma})$ or $\phi_{1,lk}^\mathbf{M}(\boldsymbol{\Sigma})$ by solving one of  them. We obtain a solution for the ODEs in (\ref{8964equ})--(\ref{yyx}) as follows for $k<l$:
\bs\begin{align}	
	&\phi_{1, kk}^\mathbf{M} (\boldsymbol{\Sigma}) = -\frac{s_{kk}^\mathbf{M}}{2 \mu_k }\label{62equ} 	\\
	&\phi_{1, kl}^\mathbf{M}(\boldsymbol{\Sigma})= \frac{s_{kl}^\mathbf{M}}{\mu_k+\mu_l}\left(\frac{\mu_l-\mu_k}{2\widetilde{w}_{ll}}\Sigma_{ll}+\frac{\mu_l-\mu_k}{2\widetilde{w}_{kk}}\Sigma_{kk}-1\right)	\label{56eq}\\
	&\phi_{1,lk}^\mathbf{M}(\boldsymbol{\Sigma})= \frac{s_{kl}^\mathbf{M}}{\mu_k+\mu_l}\left(\frac{\mu_k-\mu_l}{2\widetilde{w}_{ll}}\Sigma_{ll}+\frac{\mu_k-\mu_l}{2\widetilde{w}_{kk}}\Sigma_{kk}-1\right)	\label{57eq}
\end{align}\bsc Using (\ref{62equ})--(\ref{57eq}) and the relationship $\Phi_{1}(\boldsymbol{\Sigma})=(\mathbf{M}^{-1})^\dagger \Phi_1^\mathbf{M}(\boldsymbol{\Sigma}) \mathbf{M}^{-1}$, we can obtain $\Phi_1(\boldsymbol{\Sigma})$.  Therefore, $\nabla_{\boldsymbol{\Delta}} \widetilde{V}_1= \left(\Phi_1(\boldsymbol{\Sigma})+\Phi_1^T(\boldsymbol{\Sigma})\right)\boldsymbol{\Delta}$.

\emph{2) Obtaining $J_1$:}  We first prove the following lemma to obtain the property of $\nu^\ast$.
\begin{Lemma}	\label{lemmaappendxoosds}
	Let $\mathbf{Y}=\mathbf{x} \mathbf{y}^T+\mathbf{y} \mathbf{x}^T$, where $\mathbf{x}, \mathbf{y} \in \mathbb{R}^{L\times 1}$, then the largest eigenvalue of $\mathbf{Y}$ is $\mathbf{y}^T\mathbf{x}+\sqrt{\mathbf{x}^T\mathbf{x} \mathbf{y}^T\mathbf{y}}$ which is always positive,   and the associated eigenvector is $\frac{\mathbf{u}}{|\mathbf{u}|}$ where $\mathbf{u}=\mathbf{x}+\frac{|\mathbf{x}|}{|\mathbf{y}|}\mathbf{y}$.		
\end{Lemma}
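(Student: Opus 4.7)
The plan is to exploit the rank structure of $\mathbf{Y}=\mathbf{x}\mathbf{y}^T+\mathbf{y}\mathbf{x}^T$: since $\mathbf{Y}$ has rank at most $2$ and its range is contained in $\mathrm{span}\{\mathbf{x},\mathbf{y}\}$, every eigenvector corresponding to a nonzero eigenvalue must also lie in $\mathrm{span}\{\mathbf{x},\mathbf{y}\}$. This reduces the eigenproblem on $\mathbb{R}^L$ to an eigenproblem on a two-dimensional subspace, after which the answer drops out algebraically.

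First I would parametrize a candidate eigenvector as $\mathbf{v}=\alpha\mathbf{x}+\beta\mathbf{y}$ and compute
\begin{align}
\mathbf{Y}\mathbf{v}=(\alpha\mathbf{y}^T\mathbf{x}+\beta\mathbf{y}^T\mathbf{y})\mathbf{x}+(\alpha\mathbf{x}^T\mathbf{x}+\beta\mathbf{x}^T\mathbf{y})\mathbf{y}. \notag
\end{align}
Assuming first that $\mathbf{x}$ and $\mathbf{y}$ are linearly independent (I will handle the degenerate case separately), the condition $\mathbf{Y}\mathbf{v}=\lambda\mathbf{v}$ is equivalent to the $2\times 2$ linear system with coefficient matrix $\bigl(\begin{smallmatrix} c & b \\ a & c \end{smallmatrix}\bigr)$, where $a=\mathbf{x}^T\mathbf{x}$, $b=\mathbf{y}^T\mathbf{y}$, $c=\mathbf{x}^T\mathbf{y}$. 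Its characteristic equation is $(\lambda-c)^2=ab$, giving the two nonzero eigenvalues $\lambda_{\pm}=c\pm\sqrt{ab}$. The largest is $\lambda_{\max}=\mathbf{y}^T\mathbf{x}+\sqrt{\mathbf{x}^T\mathbf{x}\,\mathbf{y}^T\mathbf{y}}$; nonnegativity follows immediately from the Cauchy--Schwarz inequality $|c|\le\sqrt{ab}$. Solving the top equation for $\lambda=\lambda_{\max}$ yields $\beta/\alpha=\sqrt{a/b}=|\mathbf{x}|/|\mathbf{y}|$, so (up to scaling) $\mathbf{v}=\mathbf{x}+\tfrac{|\mathbf{x}|}{|\mathbf{y}|}\mathbf{y}=\mathbf{u}$, and the unit eigenvector is $\mathbf{u}/|\mathbf{u}|$.

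Finally I would dispose of the degenerate cases. If either $\mathbf{x}=\mathbf{0}$ or $\mathbf{y}=\mathbf{0}$, then $\mathbf{Y}=\mathbf{0}$ and the claim is vacuous with $\lambda_{\max}=0$; the formula should be interpreted as a limit. If $\mathbf{x}=t\mathbf{y}$ for some $t\in\mathbb{R}$, then $\mathbf{Y}=2t\,\mathbf{y}\mathbf{y}^T$, whose largest eigenvalue is $2t|\mathbf{y}|^2$ for $t\ge 0$ and $0$ for $t<0$; one checks directly that $\mathbf{y}^T\mathbf{x}+\sqrt{\mathbf{x}^T\mathbf{x}\,\mathbf{y}^T\mathbf{y}}=t|\mathbf{y}|^2+|t||\mathbf{y}|^2$ matches both cases, and $\mathbf{u}$ likewise reduces to the correct eigenvector direction (or to the zero vector, in which case any unit vector in $\mathrm{span}\{\mathbf{y}\}$ works).

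The only real obstacle is the bookkeeping in the degenerate case $\mathbf{x}=t\mathbf{y}$ with $t<0$, where $\mathbf{u}=\mathbf{0}$ and the stated eigenvector formula breaks down; the resolution is that in this situation $\lambda_{\max}=0$ and any unit vector in $\mathrm{span}\{\mathbf{y}\}$ serves as a valid eigenvector, so the statement of the lemma should be read with the tacit assumption that $\mathbf{u}\neq\mathbf{0}$, which is the generic case used later in the paper.
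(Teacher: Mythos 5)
Your proposal is correct and follows essentially the same route as the paper's own proof: both restrict the eigenproblem to $\mathrm{span}\{\mathbf{x},\mathbf{y}\}$ via the ansatz $\mathbf{v}=\alpha\mathbf{x}+\beta\mathbf{y}$ (the paper normalizes $\alpha=1$ and solves for the single parameter $z$), obtain $\lambda=\mathbf{y}^T\mathbf{x}\pm\sqrt{\mathbf{x}^T\mathbf{x}\,\mathbf{y}^T\mathbf{y}}$, and invoke Cauchy--Schwarz for the sign. Your version is in fact slightly more complete than the paper's, since you make the rank-two argument explicit (so the remaining eigenvalues are zero and the identified one is indeed the largest) and you treat the degenerate antiparallel case where $\mathbf{u}=\mathbf{0}$, which the paper glosses over.
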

\begin{proof}
	Consider a vector $\mathbf{v}=\mathbf{x}+z \mathbf{y}$, we $z$ is real. Then, we have
	\begin{align}
		\mathbf{y}\mathbf{v}& = (\mathbf{x} \mathbf{y}^T+\mathbf{y} \mathbf{x}^T)\mathbf{x} + z(\mathbf{x} \mathbf{y}^T+\mathbf{y} \mathbf{x}^T) \mathbf{y}	\notag \\
		  & = (\mathbf{y}^T\mathbf{x}+ z \mathbf{y}^T\mathbf{y})\mathbf{x}+( \mathbf{x}^T\mathbf{x} +z \mathbf{x}^T\mathbf{y} )\mathbf{y}
	\end{align}
	Let $\mathbf{y}^T\mathbf{x}+ z \mathbf{y}^T\mathbf{y}=\lambda$ and $\mathbf{x}^T\mathbf{x} +z \mathbf{x}^T\mathbf{y}=\lambda z$, then we have $\mathbf{Y}\mathbf{v}=\lambda \mathbf{v}$. Then, $\mathbf{x}^T\mathbf{x} +z \mathbf{x}^T\mathbf{y}=(\mathbf{y}^T\mathbf{x}+ z \mathbf{y}^T\mathbf{y}) z \Rightarrow \mathbf{x}^T\mathbf{x} = \mathbf{y}^T\mathbf{y} z^2 \Rightarrow z= \pm \sqrt{\frac{\mathbf{x}^T\mathbf{x}}{\mathbf{y}^T\mathbf{y}}}$.  Since we are interested in  the larger eigenvalue, letting $z=\sqrt{\frac{\mathbf{x}^T\mathbf{x}}{\mathbf{y}^T\mathbf{y}}}$, we have $\lambda=\mathbf{y}^T\mathbf{x}+\sqrt{\mathbf{x}^T\mathbf{x} \mathbf{y}^T\mathbf{y}}$ which is positive due to the Cauchy-Schwarz inequality. 	
\end{proof}
Letting $\mathbf{x}={\boldsymbol{\Delta}}/{\tau}$ and $y=\boldsymbol{\Sigma} \nabla_{\boldsymbol{\Delta}} V=\boldsymbol{\Sigma} \left( \left(\Phi_1(\boldsymbol{\Sigma}) +(\Phi_1(\boldsymbol{\Sigma}))^T\right) \boldsymbol{\Delta}+ \nabla_{\boldsymbol{\Delta}} J_1\right)$, then 
\begin{align}	
	\nu^\ast=& \boldsymbol{\Delta}^T  \left(\Phi_1(\boldsymbol{\Sigma}) +(\Phi_1(\boldsymbol{\Sigma}))^T\right)  \boldsymbol{\Sigma} \boldsymbol{\Delta}/\tau+ \nabla_{\boldsymbol{\Delta}}^T J_1 \boldsymbol{\Sigma} \boldsymbol{\Delta}/\tau  \notag \\
	&+\sqrt{\boldsymbol{\Delta}^T \boldsymbol{\Delta} \left(\boldsymbol{\Delta}^T \left(\Phi_1(\boldsymbol{\Sigma}) +(\Phi_1(\boldsymbol{\Sigma}))^T\right)+ \nabla_{\boldsymbol{\Delta}}^T J_1\right)\boldsymbol{\Sigma} }\notag \\
	&\sqrt{\boldsymbol{\Sigma} \left( \left(\Phi_1(\boldsymbol{\Sigma}) +(\Phi_1(\boldsymbol{\Sigma}))^T\right)\boldsymbol{\Delta}+\nabla_{\boldsymbol{\Delta}} J_1\right)}/\tau	\label{smallDelta2}
\end{align}
Substituting (\ref{smallDelta2}) into the PDE in (\ref{xxx2}) and balancing the order of $\|\boldsymbol{\Delta}\|$ on both size, we obtain
\begin{align}
	J_1 = \mathcal{O}(\|\boldsymbol{\Delta}\|^4)	\label{jexpress}
\end{align}

\emph{3) Overall solution and  small $\|\boldsymbol{\Delta}\|$  leads to small $\nu^\ast$:}  Combining part 1 and part 2, we obtain the overall solution as follows:
\begin{align}		\label{appeox12}
	\nabla_{\boldsymbol{\Delta}} {V}= \left(\Phi_1(\boldsymbol{\Sigma})+\Phi_1^T(\boldsymbol{\Sigma})\right)\boldsymbol{\Delta}+\mathcal{O}(\|\boldsymbol{\Delta}\|^3)\mathbf{1}
\end{align}
where $\Phi_1(\boldsymbol{\Sigma})=\mathbf{M}^\dagger \Phi_{1}^\mathbf{M} (\boldsymbol{\Sigma}) \mathbf{M} \in \mathbb{R}^{L\times L}$, $ \Phi_{1}^\mathbf{M}(\boldsymbol{\Sigma})=[\phi_{1, kl}^\mathbf{M}(\boldsymbol{\Sigma})]$ is given in  (\ref{62equ})--(\ref{57eq}).   Substituting (\ref{jexpress}) into (\ref{smallDelta2}), we have $\nu^\ast=\mathcal{O}(\|\boldsymbol{\Delta}\|^2)$  for as $\|\boldsymbol{\Delta}\|\rightarrow 0$. Therefore, small $\|\boldsymbol{\Delta}\|$ leads to small $\nu^\ast$.

\vspace{0.3cm}
\hspace{-0.3cm}\emph{B. Solution of (\ref{bellman2}) for large $\nu^\ast$}   
\vspace{0.1cm}

In this part, we solve the PDE  in (\ref{bellman21})  for large $\nu^\ast$. We will show later that large $\|\boldsymbol{\Delta}\|$  leads to this case. Specifically, for large $\nu^\ast$,  the expectation in (\ref{bellman21}) becomes
\bs\begin{align}
	&\mathbb{E}\left[ \left[ \sigma^\ast \nu^\ast-\lambda    \right]^+\big| \boldsymbol{\Delta}, \boldsymbol{\Sigma}   \right] \notag \\
	&=\mathbb{E}\left[ \left[ \sigma^\ast \nu^\ast-\lambda    \right]\big| \boldsymbol{\Delta}, \boldsymbol{\Sigma}   \right]-\int_0^{\lambda/\nu^\ast} (\sigma^\ast \nu^\ast-\lambda) f_{\sigma^\ast}(x) dx \notag \\
	& = \mathbb{E}\left[ \left[ \sigma^\ast \nu^\ast-\lambda    \right]\big| \boldsymbol{\Delta}, \boldsymbol{\Sigma}   \right]-\mathcal{O}\left(\frac{1}{\nu^\ast}\right)
\end{align}\bsc where $f_{\sigma^\ast}(x)$ is the PDF of $\sigma^\ast$ (given in equ. (6) of \cite{eigenvaluesds}). Substituting the above equation into the PDE in (\ref{bellman21}), we obtain
\begin{align}
	&\theta  = \boldsymbol{\Delta}^T \mathbf{S} \boldsymbol{\Delta} -  \mathbb{E}\left[ \left[ \sigma^\ast \nu^\ast-\lambda    \right]\big| \boldsymbol{\Delta}, \boldsymbol{\Sigma}   \right]\bar{F}+\mathcal{O}\left(\frac{1}{\nu^\ast}\right)  \notag \\
	 &+\nabla_{\boldsymbol{\Delta}}^T V  \widetilde{\mathbf{A}} \boldsymbol{\Delta} + \frac{1}{2}\text{Tr}\left( \nabla_{\boldsymbol{\Delta}}^2 V \widetilde{\mathbf{W}} \right)+ \text{Tr}\left( \frac{\partial V}{\partial \boldsymbol{\Sigma}}  \widetilde{\mathbf{W}} \right)
\end{align}
The solution of the above PDE has the structure $V(\boldsymbol{\Delta}, \boldsymbol{\Sigma})=\widetilde{V}_2(\boldsymbol{\Delta}, \boldsymbol{\Sigma})+J_2(\boldsymbol{\Delta}, \boldsymbol{\Sigma})$ ($J_2$ can be treated as a residual error term for $V$), where $\widetilde{V}_2$ and $J_2$ satisfy
\bs\begin{align}
	&\theta  = \boldsymbol{\Delta}^T \mathbf{S} \boldsymbol{\Delta}-\mathbb{E}\left[ \left[ \sigma^\ast \nu^\ast-\lambda    \right]\big| \boldsymbol{\Delta}, \boldsymbol{\Sigma}   \right]\bar{F}   \notag \\
	&+\nabla_{\boldsymbol{\Delta}}^T \widetilde{V}_2  \widetilde{\mathbf{A}} \boldsymbol{\Delta} + \frac{1}{2}\text{Tr}\left( \nabla_{\boldsymbol{\Delta}}^2 \widetilde{V}_2 \widetilde{\mathbf{W}} \right)+ \text{Tr}\left( \frac{\partial \widetilde{V}_2}{\partial \boldsymbol{\Sigma}}  \widetilde{\mathbf{W}} \right)	\label{xxx12}	\\
	&\mathcal{O}\left(\frac{1}{\nu^\ast}\right)  =      \nabla_{\boldsymbol{\Delta}}^T J_2  \widetilde{\mathbf{A}} \boldsymbol{\Delta} + \frac{1}{2}\text{Tr}\left( \nabla_{\boldsymbol{\Delta}}^2 J_2 \widetilde{\mathbf{W}} \right)+ \text{Tr}\left( \frac{\partial J_2}{\partial \boldsymbol{\Sigma}}  \widetilde{\mathbf{W}} \right)\label{xxx22}
\end{align}\bsc 
We then obtain $\widetilde{V}_2$ and $J_2$ by solving the above two equations in the following:

\emph{1) Obtaining $\widetilde{V}_2$:}   In this part, we solve the PDE  in (\ref{xxx12}). We calculate the expectation involved in (\ref{xxx12}) and obtain 
\begin{align}	
		\theta  = \boldsymbol{\Delta}^T \mathbf{S} \boldsymbol{\Delta} & +  \lambda \bar{F}  - \overline{\sigma} \bar{F} \nu^\ast     + \nabla_{\boldsymbol{\Delta}}^T V  \widetilde{\mathbf{A}} \boldsymbol{\Delta} \notag \\
		& + \frac{1}{2}\text{Tr}\left( \nabla_{\boldsymbol{\Delta}}^2 V \widetilde{\mathbf{W}} \right)+ \text{Tr}\left( \frac{\partial V}{\partial \boldsymbol{\Sigma}}  \widetilde{\mathbf{W}} \right)	\label{pde2equ}
\end{align}
where  $\overline{\sigma} \triangleq \mathbb{E}\left[\sigma^\ast\right]$ which depends on the distribution of the largest  singular values $\sigma^\ast$. Specifically, $\overline{\sigma}$ can be calculated as follows:
\begin{align}	\label{calculatesomm1}
	\overline{\sigma} = \int_0^\infty \Pr\left[\sigma^\ast>x\right]d x = \int_0^\infty \left(1-F_{\sigma^\ast}(x) \right)d x 
\end{align}
where $F_{\sigma^\ast}(x)$ is the CDF of $\sigma^\ast$ and is given by \cite{eigandist}: $F_{\sigma^\ast}(x) = \sum_{k=1}^b (-1)^{k-1}C_{k-1}^{r-1}\sum_{a\in \mathcal{A}_k}\text{det}(T_a(x))$,  where $\mathcal{A}_k$ represents the subset of $\{1, \dots, b\}$ with $k$ elements, $T_a(x)$ is a $k\times k$ matrix with $(i,j)$-th element of $\int_0^x\phi_{i(a)}(x)\phi_{j(a)}(x)x^{b-d}\exp(-x) dx$ (where $d\triangleq \min\{N_t, N_r\}$ and $i(a)$ is the $i$-th largest element in $a$),  $\phi_i(x)=\sqrt{\frac{(i-1)!}{(i-1+b-d)!}}Z_{i-1}^{b-d}(x)$ and $Z_{k}^c(x)=\frac{1}{k!}\exp(x)x^{-c}\frac{d^k}{d x^k}(\exp(-x)x^{c+k})$. Using (\ref{calculatesomm1}), $\overline{\sigma}$ can be calculated. We assume that $\nu^\ast     =\mathcal{O}(\|\boldsymbol{\Delta}\|^2)$   for large $\|\boldsymbol{\Delta}\|$. Therefore, we approximate $\nu^\ast     $ using $c \boldsymbol{\Delta}^T \boldsymbol{\Delta}$ for large $\|\boldsymbol{\Delta}\|$ and for some constant $c>0$. We will obtain $c$ in the  part 2 later on. Similarly, the PDE in (\ref{pde2equ}) is separable with solution of the form  $\widetilde{V}_2 =\boldsymbol{\Delta}^T\Phi_2(\boldsymbol{\Sigma})\boldsymbol{\Delta}+ \psi_2(\boldsymbol{\Sigma})$ for some $\Phi_2(\boldsymbol{\Sigma})\in \mathbb{R}^{L\times L}$ and $\psi_2(\boldsymbol{\Sigma})\in \mathbb{R}$. Substituting this form into (\ref{pde2equ}),  letting the coefficient of $\boldsymbol{\Delta}^T\boldsymbol{\Delta}$ to be zero in (\ref{pde2equ}), we obtain 
\bs\begin{align}
	\mathbf{S} -  c\overline{\sigma} \bar{F}  \mathbf{I} + \big(\Phi_2(\boldsymbol{\Sigma})+\Phi_2^T(\boldsymbol{\Sigma})\big) \widetilde{\mathbf{A}}  + \text{Tr}\left( \frac{\partial \Phi_2(\boldsymbol{\Sigma})}{\partial \boldsymbol{\Sigma}}  \widetilde{\mathbf{W}}  \right) ={0}	\label{deltaequation1}
\end{align}\bsc Similarly as solving (\ref{deltaequation}), we can solve for $\Phi_2(\boldsymbol{\Sigma})$. Denoting $\Phi_{2}^\mathbf{M}(\boldsymbol{\Sigma})=(\mathbf{M}^{-1})^\dagger \Phi_{2}(\boldsymbol{\Sigma}) \mathbf{M}^{-1}=[\phi_{2, kl}^\mathbf{M}(\boldsymbol{\Sigma})]$. The solution is given by  for $i<j$: 
\bs\begin{align}	
	&\phi_{2, kk}^\mathbf{M} (\boldsymbol{\Sigma}) = -\frac{s_{kk}^\mathbf{M}(c)}{2 \mu_k }\label{62equ1} 	\\
	&\phi_{2, kl}^\mathbf{M}(\boldsymbol{\Sigma})= \frac{s_{kl}^\mathbf{M}(c)}{\mu_k+\mu_l}\left(\frac{\mu_l-\mu_k}{2\widetilde{w}_{ll}}\Sigma_{ll}+\frac{\mu_l-\mu_k}{2\widetilde{w}_{kk}}\Sigma_{kk}-1\right)	\label{56eq1}\\
	&\phi_{2,ji}^\mathbf{M}(\boldsymbol{\Sigma})= \frac{s_{kl}^\mathbf{M}(c)}{\mu_k+\mu_l}\left(\frac{\mu_k-\mu_l}{2\widetilde{w}_{ll}}\Sigma_{ll}+\frac{\mu_k-\mu_l}{2\widetilde{w}_{kk}}\Sigma_{kk}-1\right)	\label{57eq1}
\end{align}\bsc where $\mathbf{S}^\mathbf{M}(c)  \triangleq (\mathbf{M}^{-1})^\dagger (\mathbf{S}-c\overline{\sigma} \bar{F}  \mathbf{I} ) \mathbf{M}^{-1}=\left[s_{kl}^\mathbf{M}(c)\right]$. Using (\ref{62equ1})--(\ref{57eq1}) and the relationship $\Phi_{2}(\boldsymbol{\Sigma})=(\mathbf{M}^{-1})^\dagger \Phi_2^\mathbf{M}(\boldsymbol{\Sigma}) \mathbf{M}^{-1}$, we can obtain $\Phi_2(\boldsymbol{\Sigma})$.  Therefore, $\nabla_{\boldsymbol{\Delta}} \widetilde{V}_2= \left(\Phi_2(\boldsymbol{\Sigma})+\Phi_2^T(\boldsymbol{\Sigma})\right)\boldsymbol{\Delta}$.

\emph{2) Obtaining $J_2$:}  Using Lemma \ref{lemmaappendxoosds}, Letting $x=\boldsymbol{\Delta}/\tau$ and $y=\boldsymbol{\Sigma} \left( \left(\Phi_2(\boldsymbol{\Sigma}) +(\Phi_2(\boldsymbol{\Sigma}))^T\right) \boldsymbol{\Delta}+ \nabla_{\boldsymbol{\Delta}} J_2\right)$, then 
\bs\begin{align}	
	\nu^\ast=& \boldsymbol{\Delta}^T  \left(\Phi_2(\boldsymbol{\Sigma}) +(\Phi_2(\boldsymbol{\Sigma}))^T\right)  \boldsymbol{\Sigma} \boldsymbol{\Delta}/\tau+ \nabla_{\boldsymbol{\Delta}}^T J_2 \boldsymbol{\Sigma} \boldsymbol{\Delta}/\tau  \notag \\
	&+\sqrt{\boldsymbol{\Delta}^T \boldsymbol{\Delta} \left(\boldsymbol{\Delta}^T \left(\Phi_2(\boldsymbol{\Sigma}) +(\Phi_2(\boldsymbol{\Sigma}))^T\right)+ \nabla_{\boldsymbol{\Delta}}^T J_2\right)\boldsymbol{\Sigma}}\notag \\ 			&\sqrt{\boldsymbol{\Sigma} \left( \left(\Phi_2(\boldsymbol{\Sigma}) +(\Phi_2(\boldsymbol{\Sigma}))^T\right)\boldsymbol{\Delta}+\nabla_{\boldsymbol{\Delta}} J_2\right)}/\tau	\label{smallDelta22}
\end{align}\bsc  Substituting (\ref{smallDelta2}) into the PDE in (\ref{xxx22}) and balancing the order of $\|\boldsymbol{\Delta}\|$ on both size, we obtain
\begin{align}
	J_2 = \mathcal{O}\left(\frac{1}{\|\boldsymbol{\Delta}\|^2}\right)	\label{jexpress2}
\end{align}

Based on  Lemma \ref{lemmaappendxoosds}, $\Phi_{2}(\boldsymbol{\Sigma})$ in part 1, and $J_2$ in (\ref{jexpress2}),  $c$ in  (\ref{62equ1})--(\ref{57eq1}) is determined by the following fixed-point equation $ f(\boldsymbol{\Delta},\boldsymbol{\Sigma}, c) = c\boldsymbol{\Delta}^T \boldsymbol{\Delta}$, where we define \bs$f(\boldsymbol{\Delta},\boldsymbol{\Sigma}, c)     = \boldsymbol{\Delta}^T  \left(\Phi_2(\boldsymbol{\Sigma}) +(\Phi_2(\boldsymbol{\Sigma}))^T\right)   \boldsymbol{\Sigma} \boldsymbol{\Delta}/\tau  + \sqrt{\boldsymbol{\Delta}^T \boldsymbol{\Delta} \boldsymbol{\Delta}^T \left(\Phi_2(\boldsymbol{\Sigma})  +(\Phi_2(\boldsymbol{\Sigma}))^T\right)\boldsymbol{\Sigma} \boldsymbol{\Sigma}  \left(\Phi_2(\boldsymbol{\Sigma}) +(\Phi_2(\boldsymbol{\Sigma}))^T\right)\boldsymbol{\Delta}}/\tau$\bsc.

\emph{3) Overall solution and  large $\|\boldsymbol{\Delta}\|$  leads to large $\nu^\ast$:} Combining part 1 and part 2, we obtain the overall solution as follows:
\begin{align}		\label{appeox22}
	\nabla_{\boldsymbol{\Delta}} V= \left(\Phi_2(\boldsymbol{\Sigma})+\Phi_2^T(\boldsymbol{\Sigma})\right)\boldsymbol{\Delta}+\mathcal{O}\left(\frac{1}{\|\boldsymbol{\Delta}\|^3}\right)\mathbf{1}
\end{align}
where $\Phi_2(\boldsymbol{\Sigma})=\mathbf{M}^\dagger \Phi_{2}^\mathbf{M} (\boldsymbol{\Sigma}) \mathbf{M} \in \mathbb{R}^{L\times L}$, $ \Phi_{2}^\mathbf{M}(\boldsymbol{\Sigma})=[\phi_{1, kl}^\mathbf{M}(\boldsymbol{\Sigma})]$ is given in  (\ref{62equ1})--(\ref{57eq1}).   Substituting (\ref{jexpress2}) into (\ref{smallDelta22}), we have $\nu^\ast=\mathcal{O}(\|\boldsymbol{\Delta}\|^2)$  as $\|\boldsymbol{\Delta}\|\rightarrow \infty$. Therefore, large $\|\boldsymbol{\Delta}\|$ leads to large $\nu^\ast$.

\section*{Appendix G: Proof of Theorem \ref{stab}}

Denote $\boldsymbol{\Lambda}(n)=\mathbb{E}\big[\big(\mathbf{x}(n)-\hat{\mathbf{x}}(n)\big]\big)\big(\mathbf{x}(n)-\hat{\mathbf{x}}(n)\big)^T \big|I_C(n) \big]$. According to the classical Kalman filter theory \cite{ackf}, we have
\begin{align}
	\boldsymbol{\Lambda}(n)=&\boldsymbol{\Sigma}(n) - \boldsymbol{\Sigma}(n)(\mathbf{E}^a (n))^\dagger  \notag \\
	&\big(\mathbf{E}^a (n) \boldsymbol{\Sigma}(n)   (\mathbf{E}^a (n))^\dagger +\mathbf{I}\big)^{-1}\mathbf{E}^a (n) \boldsymbol{\Sigma}(n)	 \label{77equation}
\end{align}
We first have a convergence result on $\mathbb{E}^{\widetilde{\Omega}^\ast}\big[\boldsymbol{\Sigma}(n)\big]$ as $n \rightarrow \infty$ as follows:
\begin{Lemma}	\label{asadfssss11}
	Let $\{\mathbf{P}(n)\}$ be defined as
	\begin{align}	\label{pnsds121}
		\mathbf{P}(n+1) = &\mathbb{E}^{\widetilde{\Omega}^\ast}\left[\mathbf{A} \Big(\mathbf{P}(n) - \mathbf{P}(n)(\mathbf{E}^a(n) )^\dagger \big(\mathbf{E}^a (n) \mathbf{P} (n) \right. \notag \\
		&\left.  (\mathbf{E}^a(n) )^\dagger +\mathbf{I}\big)^{-1}\mathbf{E}^a(n)  \mathbf{P}(n)  \Big) \mathbf{A}^T + \mathbf{W} \right]
	\end{align}
	where $\mathbf{P}(0)=0$. For any given $\bar{F}>0$, we have $\boldsymbol{\overline{\boldsymbol{\Sigma}}}(n)\triangleq \mathbb{E}^{\widetilde{\Omega}^\ast}\big[\boldsymbol{\Sigma}(n)\big]\leq \mathbf{P}(n)$ and  $\lim_{n \rightarrow \infty}\boldsymbol{\overline{\boldsymbol{\Sigma}}}(n)=\boldsymbol{\overline{\boldsymbol{\Sigma}}}$ and $\text{Tr}(\boldsymbol{\overline{\boldsymbol{\Sigma}}})<\infty$.  Furthermore, $\boldsymbol{\overline{\boldsymbol{\Sigma}}}\leq \mathbf{P}\triangleq \lim_{n \rightarrow \infty}\mathbf{P}(n)$, where $\mathbf{P}$ satisfies the fixed-point equation in  (\ref{fxptequ}).
\end{Lemma}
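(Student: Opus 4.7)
The plan is to leverage the standard concavity and monotonicity properties of the Kalman Riccati update in the predicted error covariance, pass them through the expectation that defines the deterministic sequence $\{\mathbf{P}(n)\}$, and then close the argument by a monotone-convergence step. First, applying the matrix inversion lemma to (\ref{kalman31}) rewrites the update as $\boldsymbol{\Sigma}(n+1) = g(\boldsymbol{\Sigma}(n), \mathbf{E}^a(n))$, where
\begin{align*}
g(\mathbf{X}, \mathbf{E}) \triangleq \mathbf{A}\big(\mathbf{X}^{-1} + \mathbf{E}^\dagger \mathbf{E}\big)^{-1} \mathbf{A}^T + \mathbf{W}.
\end{align*}
In this compact form it is a classical fact that, for each fixed $\mathbf{E}$, the map $\mathbf{X}\mapsto g(\mathbf{X},\mathbf{E})$ is (i) monotone nondecreasing in the L\"owner order and (ii) concave on the positive-definite cone. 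Since integration preserves both properties, the averaged operator $\tilde{g}_n(\mathbf{X}) \triangleq \mathbb{E}^{\widetilde{\Omega}^\ast}[g(\mathbf{X},\mathbf{E}^a(n))]$ defining the deterministic recursion $\mathbf{P}(n+1) = \tilde{g}_n(\mathbf{P}(n))$ from (\ref{pnsds121}) is also monotone and concave.

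Next, I would establish $\overline{\boldsymbol{\Sigma}}(n) \preceq \mathbf{P}(n)$ by induction on $n$. The base case is immediate from $\boldsymbol{\Sigma}(0) = \mathbf{P}(0) = \mathbf{0}$. For the inductive step, taking expectations in $\boldsymbol{\Sigma}(n+1) = g(\boldsymbol{\Sigma}(n), \mathbf{E}^a(n))$ conditionally on $\mathbf{E}^a(n)$ and applying Jensen's inequality via the concavity of $g(\cdot,\mathbf{E}^a(n))$ passes the inner expectation inside; the monotonicity of $g$ together with the inductive hypothesis then lets me replace the conditional mean of $\boldsymbol{\Sigma}(n)$ by the deterministic upper bound $\mathbf{P}(n)$, and an outer expectation over $\mathbf{E}^a(n)$ produces $\overline{\boldsymbol{\Sigma}}(n+1) \preceq \tilde{g}_n(\mathbf{P}(n)) = \mathbf{P}(n+1)$. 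A minor subtlety is that $\boldsymbol{\Sigma}(n)$ and $\mathbf{E}^a(n)$ are coupled through the policy $\widetilde{\Omega}^\ast$, but this can be handled by combining the concavity step with a stochastic-dominance refinement (propagating, at each step, the sharper pathwise ordering $\boldsymbol{\Sigma}(n) \preceq_{st} \mathbf{P}(n)$ rather than just the ordering of the means).

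Since $\mathbf{P}(0) = \mathbf{0} \preceq \mathbf{P}(1)$ and $\tilde{g}_n$ is monotone, a simple induction shows that $\{\mathbf{P}(n)\}$ is itself monotone nondecreasing in the L\"owner order. The \emph{main obstacle} is to rule out divergence, i.e., to show that $\{\mathbf{P}(n)\}$ is bounded above. For this I would compare the recursion against a reference ``always-transmit'' policy that applies a fixed deterministic precoder of gain $\bar{F}$ at every slot; the resulting averaged DARE admits a bounded fixed point under Assumption~\ref{assumpstabl} (controllability of $(\mathbf{A},\mathbf{B})$), the observability hypothesis underlying the CE controller, and the strict positivity $\bar{F} > 0$, which guarantees a strictly positive mean information gain per transmission. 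A dominance argument between the two recursions, using monotonicity of $\tilde{g}_n$ one more time, then furnishes a uniform upper bound on $\mathbf{P}(n)$.

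Once $\{\mathbf{P}(n)\}$ is monotone and bounded, it converges in the L\"owner order to some $\mathbf{P} \succeq \mathbf{0}$; passing to the limit in (\ref{pnsds121}) and using the continuity of $\tilde{g}_n$ on bounded sets identifies $\mathbf{P}$ as a solution of the fixed-point equation (\ref{fxptequ}). Combining this with the inductive bound of the second step and dominated convergence yields $\overline{\boldsymbol{\Sigma}}(n) \to \overline{\boldsymbol{\Sigma}} \preceq \mathbf{P}$ with $\mathrm{Tr}(\overline{\boldsymbol{\Sigma}}) \leq \mathrm{Tr}(\mathbf{P}) < \infty$, completing the proof of the lemma.
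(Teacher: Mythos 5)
Your overall strategy (monotonicity and concavity of the Riccati map, Jensen, monotone convergence) is not the paper's route, and as written it has genuine gaps. The first is the coupling you dismiss as a ``minor subtlety.'' Under $\widetilde{\Omega}^\ast$ the precoder $\mathbf{F}^\ast(n)$ --- and hence $\mathbf{E}^a(n)$ --- is a function of $(\boldsymbol{\Delta}(n-1),\boldsymbol{\Sigma}(n),\mathbf{H}(n))$: both the active/dormant decision (threshold $\lambda<\sigma^\ast\nu^\ast$) and the beamforming direction $\mathbf{q}_1$ depend on $\boldsymbol{\Sigma}(n)$ itself, so the conditional law of $\mathbf{E}^a(n)$ given $\boldsymbol{\Sigma}(n)$ is not exogenous and the Jensen step does not pass through. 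Your proposed fix --- propagating a pathwise ordering $\boldsymbol{\Sigma}(n)\preceq\mathbf{P}(n)$ against the deterministic sequence $\mathbf{P}(n)$ --- is false in general: on sample paths with a run of dormant slots (or weak channels) $\boldsymbol{\Sigma}(n)$ grows like the open-loop recursion $\mathbf{A}\boldsymbol{\Sigma}\mathbf{A}^T+\mathbf{W}$ and exceeds the averaged recursion $\mathbf{P}(n)$, which benefits from the mean information gain; only the ordering of means can hold. For the same reason the operator in (\ref{pnsds121}) is not a time-invariant averaged Riccati map (the law of $\mathbf{E}^a(n)$ changes with $n$ through the closed-loop state distribution), so even the claimed monotonicity of $\{\mathbf{P}(n)\}$ needs an argument.

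The second gap is boundedness. Your comparison with an ``always-transmit'' reference goes the wrong way: the Riccati map is decreasing in the information term $(\mathbf{E}^a)^\dagger\mathbf{E}^a$, so a policy that transmits more yields a \emph{smaller} recursion and can only bound $\mathbf{P}(n)$ from below, not above; moreover Assumption \ref{assumpstabl} (controllability of $(\mathbf{A},\mathbf{B})$) concerns the plant input and is irrelevant to the estimation Riccati. What is actually needed --- and what the paper does --- is to verify weak controllability and weak observability of the closed-loop random Riccati recursion in the sense of Bougerol and then invoke Lemma 3.2 and Theorem 3.3 of \cite{stablity1}, which deliver precisely $\lim_n\mathbb{E}^{\widetilde{\Omega}^\ast}[\boldsymbol{\Sigma}(n)]=\overline{\boldsymbol{\Sigma}}$ with $\mathrm{Tr}(\overline{\boldsymbol{\Sigma}})<\infty$, the bound $\mathbb{E}^{\widetilde{\Omega}^\ast}[\boldsymbol{\Sigma}(n)]\leq\mathbf{P}(n)$, and $\overline{\boldsymbol{\Sigma}}\leq\mathbf{P}$; the delicate point is that the per-slot observation is rank-one along a state-dependent direction $\mathbf{q}_1$, so boundedness for arbitrary $\bar{F}>0$ is not a soft consequence of a positive transmission probability. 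Finally, your last step asserts that the limit satisfies (\ref{fxptequ}) without the computation that actually produces it: the paper substitutes the structure of $\mathbf{F}^\ast$ from Theorem \ref{thmpower} and evaluates $\mathbb{E}^{\widetilde{\Omega}^\ast}\big[(\mathbf{E}^a)^\dagger(\mathbf{E}^a\mathbf{P}(\mathbf{E}^a)^\dagger+\mathbf{I})^{-1}\mathbf{E}^a\big]$ as the integral $G(\mathbf{P},\bar{F},\lambda)$ over the active region $\sigma^\ast>\lambda/\nu^\ast$; without this step the fixed-point equation in the lemma is not identified.
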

\begin{proof} [Proof of Lemma \ref{asadfssss11}]
	First, it can be verified that the dynamic system in (\ref{plantain}) and (\ref{recvsig}) under $\widetilde{\Omega}^\ast$ is weakly controllable and weakly observable (according to the definitions in Section 3 of \cite{stablity1}). Then, using Lemma 3.2. of \cite{stablity1}, for any  $\bar{F}>0$, we have $\lim_{n \rightarrow \infty}\mathbb{E}\big[\boldsymbol{\Sigma}(n)\big]=\boldsymbol{\overline{\boldsymbol{\Sigma}}}$ and $\text{Tr}(\boldsymbol{\overline{\boldsymbol{\Sigma}}})<\infty$. Using Theorem 3.3. of \cite{stablity1}, for any  $\bar{F}>0$, we have $\mathbb{E}^{\widetilde{\Omega}^\ast}\big[\boldsymbol{\Sigma}(n)\big]\leq \mathbf{P}(n)$ where $\mathbf{P}(n)$ satisfies (\ref{pnsds121}). Furthermore, using Theorem 3.3. of \cite{stablity1}, for any  $\bar{F}>0$, we have  $\boldsymbol{\overline{\boldsymbol{\Sigma}}} \leq \mathbf{P}$, where $\mathbf{P}$ satisfies the following fixed equation:
	\bs\begin{align}
		\mathbf{P}& = \mathbb{E}^{\widetilde{\Omega}^\ast}\left[\mathbf{A} \Big(\mathbf{P} - \mathbf{P}(\mathbf{E}^a )^\dagger \big(\mathbf{E}^a  \mathbf{P}   (\mathbf{E}^a )^\dagger +\mathbf{I}\big)^{-1}\mathbf{E}^a  \mathbf{P}  \Big) \mathbf{A}^T + \mathbf{W} \right] \notag \\
		&= \mathbf{A} \Big(\mathbf{P} - \mathbf{P}\mathbb{E}^{\widetilde{\Omega}^\ast}\left[(\mathbf{E}^a )^\dagger \big(\mathbf{E}^a  \mathbf{P}   (\mathbf{E}^a )^\dagger +\mathbf{I}\big)^{-1}\mathbf{E}^a   \right]\mathbf{P}  \Big) \mathbf{A}^T + \mathbf{W}	\label{pequations1}
	\end{align}\bsc	We calculate the above expectation as follows under $\widetilde{\Omega}^\ast$:
	\bs\begin{align}
		&\mathbb{E}^{\widetilde{\Omega}^\ast}\left[(\mathbf{E}^a )^\dagger \big(\mathbf{E}^a  \mathbf{P}   (\mathbf{E}^a )^\dagger +\mathbf{I}\big)^{-1}\mathbf{E}^a   \right]\notag \\
		=& \mathbb{E}^{\widetilde{\Omega}^\ast} \left[ \mathbb{E}^{\widetilde{\Omega}^\ast}\left[2 \text{Re}\left\{(\mathbf{H}\mathbf{F}^\ast )^\dagger \big(2 \mathbf{H}\mathbf{F}^\ast  \mathbf{P}   (\mathbf{H}\mathbf{F}^\ast)^\dagger +\mathbf{I}\big)^{-1}\mathbf{H}\mathbf{F}^\ast   \right\}\Big|\boldsymbol{\Delta}, \boldsymbol{\Sigma} \right]\right]\label{77eqersd} \\
		= & \mathbb{E}^{\widetilde{\Omega}^\ast} \left[ \mathbb{E}^{\widetilde{\Omega}^\ast}\left[\frac{2\bar{F}\sigma^\ast \mathbf{q}_1\mathbf{q}_1^T }{1+2\bar{F}\sigma^\ast \mathbf{q}_1^T\mathbf{P} \mathbf{q}_1}\Big|\boldsymbol{\Delta}, \boldsymbol{\Sigma} \right]\right] \label{77eqer} \\
		= & \mathbb{E}^{\widetilde{\Omega}^\ast} \left[\int_{\lambda/\nu^\ast }^\infty \left(\frac{2\bar{F}x \mathbf{q}_1\mathbf{q}_1^T }{1+2\bar{F}x \mathbf{q}_1^T\mathbf{P} \mathbf{q}_1}\right)f_{\sigma^\ast}(x)dx\right]	\label{77eqer222}
	\end{align}\bsc Denoting the above equation to be $G(\mathbf{P}, \bar{F}, \lambda)$ and substituting it into (\ref{pequations1}), we obtain the fixed-point equation for $\mathbf{P}$ as in (\ref{fxptequ}).\end{proof}

Denote $ \overline{\boldsymbol{\Lambda}}=\lim_{n \rightarrow \infty}\mathbb{E}^{\widetilde{\Omega}^\ast}\big[ \boldsymbol{\Lambda}(n)\big] $. From (\ref{77equation}) and Lemma \ref{asadfssss11}, we have
\bs\begin{align}
	&\mathbb{E}^{\widetilde{\Omega}^\ast}\left[\boldsymbol{\Lambda}(n)\right]	\notag \\
	=&\mathbb{E}^{\widetilde{\Omega}^\ast}\left[\boldsymbol{\Sigma}(n) - \boldsymbol{\Sigma}(n)\mathbb{E}^{\widetilde{\Omega}^\ast}\left[(\mathbf{E}^a (n))^\dagger\right.\right. \notag \\
	& \left.\left.  \big(\mathbf{E}^a (n) \boldsymbol{\Sigma}(n)   (\mathbf{E}^a (n))^\dagger +\mathbf{I}\big)^{-1}\mathbf{E}^a (n)\Big|\boldsymbol{\Delta}(n),  \boldsymbol{\Sigma}(n)\right] \boldsymbol{\Sigma}(n)	\right]	\notag \\
	\overset{(a)}{\leq}&\mathbf{P}(n) - \mathbf{P}(n)\mathbb{E}^{\widetilde{\Omega}^\ast}\left[(\mathbf{E}^a (n))^\dagger \big(\mathbf{E}^a (n) \mathbf{P}(n)   (\mathbf{E}^a (n))^\dagger +\mathbf{I}\big)^{-1}\right. \notag \\
	& \left. \mathbf{E}^a (n)\right] \mathbf{P}(n)	\notag \\
	\overset{(b)}{\leq}&\mathbf{P}(n) - \mathbf{P}(n) \mathbb{E}^{\widetilde{\Omega}^\ast} \left[\int_{\lambda/\nu^\ast }^\infty \left(\frac{2\bar{F}x \mathbf{q}_1(n)\mathbf{q}_1^T(n) }{1+2\bar{F}x \mathbf{q}_1^T(n)\mathbf{P}(n) \mathbf{q}_1(n)}\right)\right. \notag \\
	& \left. f_{\sigma^\ast}(x)dx\right]\mathbf{P}(n)	 \\
	&\overset{\text{taking limit}}{\Rightarrow}  \overline{\boldsymbol{\Lambda}}\leq\mathbf{P} - \mathbf{P}G(\mathbf{P}, \bar{F}, \lambda)\mathbf{P}	\label{93quationsd}
 \end{align}\bsc where $(a)$ is according to Lemma 3.1. and equ. (13) of \cite{stablity1},  $(b)$ follows the calculations in  (\ref{77eqer222}), and the last line follows the convergence of $\mathbf{P}(n)$ in Lemma \ref{asadfssss11} and the continuity of (\ref{93quationsd}) w.r.t. $\mathbf{P}(n)$. Taking trace operator on both sizes of (\ref{93quationsd}), we obtain the MSE upper bound in (\ref{msebound}).

\section*{\txtblue{Appendix H: Proof of Corollary \ref{remarkwwewxx1}}}
\txtblue{Since $\mathbf{x}(n)=\hat{\mathbf{x}}(n)+\boldsymbol{\Delta}(n)$ and we have shown the stability of $\boldsymbol{\Delta}(n)$ under $\widetilde{\Omega}^\ast$ in Theorem \ref{stab}, it is sufficient to show the stability of $\hat{\mathbf{x}}(n)$ under $\widetilde{\Omega}^\ast$ in order to show the stability of $\mathbf{x}(n)$. We then analyze the stability of $\hat{\mathbf{x}}(n)$ under $\widetilde{\Omega}^\ast$. Taking expectation on condition of $I_C(n+1)$ on both sides of  (\ref{plantain}) and substituting $\mathbf{u}^\ast(n)$ in (\ref{statsrr}),  
\begin{align}
	\hat{\mathbf{x}}(n+1) &= \mathbb{E}\big[\mathbf{A} (\boldsymbol{\Delta}(n)+\hat{\mathbf{x}}(n))+\mathbf{B}\mathbf{u}^\ast(n)+\mathbf{w}(n)\big|I_C(n+1)\big]\notag \\
	& = \big(\mathbf{A}+\mathbf{B}\boldsymbol{\Psi}\big)\hat{\mathbf{x}}(n)+\hat{\mathbf{w}}(n)	\label{xhatdyn}
\end{align}
where $\hat{\mathbf{w}}(n)=\mathbb{E}\big[\mathbf{A} \boldsymbol{\Delta}(n)+\mathbf{w}(n)\big|I_C(n+1)\big]$ and according to Section III.B of \cite{rate2}, we have $\hat{\mathbf{W}}\triangleq \lim_{n\rightarrow \infty}\mathbb{E}[\hat{\mathbf{w}}(n)\hat{\mathbf{w}}^T(n)]=\mathbf{A}\big( \lim_{t\rightarrow \infty}\mathbb{E}[\boldsymbol{\Delta}(n)\boldsymbol{\Delta}^T(n)]\big)\mathbf{A}^T+\mathbf{W}- \lim_{t\rightarrow \infty}\mathbb{E}[\boldsymbol{\Delta}(n)\boldsymbol{\Delta}^T(n)]$. Therefore, if $\lim_{n\rightarrow \infty}\mathbb{E}^{\widetilde{\Omega}^\ast}[\boldsymbol{\Delta}(n)\boldsymbol{\Delta}^T(n)]<\infty$, we have $\|\hat{\mathbf{W}}\|<\infty$. Furthermore, from (\ref{xhatdyn}), for large $n$, we have 
\begin{align}
	\mathbb{E}\left[\| \hat{\mathbf{x}}(n+1)\|^2 \right]<\|\mathbf{A}+\mathbf{B}\boldsymbol{\Psi}\|^2\mathbb{E}\left[\| \hat{\mathbf{x}}(n)\|^2 \right]+\|\hat{\mathbf{W}}\|
\end{align}
Since $\|\mathbf{A}+\mathbf{B}\boldsymbol{\Psi}\|<1$ under the optimal CE controller in (\ref{statsrr}) \cite{poweronl}, we have $\lim_{n \rightarrow \infty}\mathbb{E}^{\widetilde{\Omega}^\ast}\left[\| \hat{\mathbf{x}}(n)\|^2 \right]=\frac{\|\hat{\mathbf{W}}\|^2}{1-\|\mathbf{A}+\mathbf{B}\boldsymbol{\Psi}\|^2}<\infty$. Based on the above analysis, we conclude that $\lim_{n \rightarrow \infty}\mathbb{E}^{\widetilde{\Omega}^\ast}\left[\| {\mathbf{x}}(n)\|^2 \right]<\infty$ under  $\mathbf{u}^\ast(n)$ in (\ref{statsrr}).}

\section*{Appendix I: Proof of Corollary \ref{remarkwwewxx}}

\emph{1) MSE Upper Bound in (\ref{msebound}) vs  $\bar{F}$:} We obtain the Taylor expansion of (\ref{77eqersd}) for large $\bar{F}$ as follows:
\bs\begin{align}
	 &\mathbb{E}^{\widetilde{\Omega}^\ast} \left[ \mathbb{E}^{\widetilde{\Omega}^\ast}\left[2 \text{Re}\left\{(\mathbf{H}\mathbf{F}^\ast )^\dagger \big(2 \mathbf{H}\mathbf{F}^\ast  \mathbf{P}   (\mathbf{H}\mathbf{F}^\ast)^\dagger +\mathbf{I}\big)^{-1}\mathbf{H}\mathbf{F}^\ast   \right\}\Big|\boldsymbol{\Delta}, \boldsymbol{\Sigma} \right]\right]\notag \\
	 &=  \mathbf{P}^{-1}-\mathcal{O}\left(\frac{1}{\bar{F}}\right)  \mathbf{P}^{-1}\mathbb{E}^{\widetilde{\Omega}^\ast} \left[ \big(2 \mathbf{H}\mathbf{F}^\ast    (\mathbf{H}\mathbf{F}^\ast)^\dagger \big)^{-1}  \right]   \mathbf{P}^{-1}
\end{align}\bsc  Substituting this into the upper bound in (\ref{msebound}), we obtain 
\bs\begin{align}
		 &\text{Tr}\big(   \mathbf{P} - \mathbf{P}G(\mathbf{P}, \bar{F}, \lambda)\mathbf{P} \big)  \notag \\
		 =&\text{Tr}\left(\mathbf{P} -  \mathbf{P}\left( \mathbf{P}^{-1}-\mathcal{O}\left(\frac{1}{\bar{F}}\right)  \mathbf{P}^{-1}\mathbb{E}^{\widetilde{\Omega}^\ast} \left[ \big(2 \mathbf{H}\mathbf{F}^\ast    (\mathbf{H}\mathbf{F}^\ast)^\dagger \big)^{-1}  \right]   \right.\right. \notag \\
		 & \left.\left. \mathbf{P}^{-1}\right) \mathbf{P}\right)  =\mathcal{O}\left(\frac{1}{\bar{F}}\right) 
\end{align}\bsc

\emph{2) MSE Upper Bound in (\ref{msebound}) vs  $\lambda$:} We obtain the Taylor expansion of (\ref{77eqer222}) for large $\lambda$ as follows:
\bs\begin{align}
	&\mathbb{E}^{\widetilde{\Omega}^\ast} \left[\int_{\lambda/\nu^\ast }^\infty \left(\frac{2\bar{F}x \mathbf{q}_1\mathbf{q}_1^T }{1+2\bar{F}x \mathbf{q}_1^T \mathbf{P} \mathbf{q}_1}\right)f_{\sigma^\ast}(x)dx\right]	\notag \\
	=&\mathbb{E}^{\widetilde{\Omega}^\ast} \left[\int_{\lambda/\nu^\ast }^\infty \left(\frac{ \mathbf{q}_1\mathbf{q}_1^T }{ \mathbf{q}_1^T \mathbf{P} \mathbf{q}_1}\mathcal{O}(1)\right)f_{\sigma^\ast}(x)dx\right]\notag \\
	=&\mathbb{E}^{\widetilde{\Omega}^\ast} \left[\frac{ \mathbf{q}_1\mathbf{q}_1^T }{ \mathbf{q}_1^T \mathbf{P} \mathbf{q}_1}\mathcal{O}\left(\overline{C}_{\sigma^\ast}(\lambda/\nu^\ast)\right)\right]	\label{86equationss}
\end{align}\bsc where $\overline{C}_{\sigma^\ast}$ is the complementary cumulative distribution function of $\sigma^\ast$ and we have $\overline{C}_{\sigma^\ast}(\lambda/\nu^\ast)=\mathcal{O}\left(\frac{\lambda^d}{\exp(\lambda)}\right)$ (where $d\triangleq \min\{N_t, N_r\}$) as $\lambda$ increases according to (6) of \cite{eigenvaluesds}.  Furthermore, since $\|\mathbf{q}_1\|=1$, we have $\frac{ \mathbf{q}_1\mathbf{q}_1^T }{ \mathbf{q}_1^T \mathbf{P} \mathbf{q}_1}\leq   \mu_{max}( \mathbf{P}^{-1})\mathbf{I}$, where $\mu_{max}( \mathbf{P}^{-1})>0$ (because $ \mathbf{P}$ is positive definite and therefore $ \mathbf{P}^{-1}$ is positive definite). Substituting this  into (\ref{86equationss}), we have
\bs\begin{align}
	&\mathbb{E}^{\widetilde{\Omega}^\ast} \left[\int_{\lambda/\nu^\ast }^\infty \left(\frac{2\bar{F}x \mathbf{q}_1\mathbf{q}_1^T }{1+2\bar{F}x \mathbf{q}_1^T \mathbf{P} \mathbf{q}_1}\right)f_{\sigma^\ast}(x)dx\right]\notag \\
	\leq & c(\lambda)\mu_{max}( \mathbf{P}^{-1})\mathbf{I} \label{88eqfassdd}
\end{align}\bsc where we denote $c(\lambda)= \mathbb{E}^{\widetilde{\Omega}^\ast} \left[\mathcal{O}\left(\overline{C}_{\sigma^\ast}(\lambda/\nu^\ast)\right)\right]=\mathcal{O}\left(\frac{\lambda^d}{\exp(\lambda)}\right)$. Substituting (\ref{88eqfassdd}) into (\ref{pequations1}), we have
\begin{align}
	 \mathbf{P} &\geq \mathbf{A} \Big( \mathbf{P} - c(\lambda)  \mathbf{P} \mu_{max}( \mathbf{P}^{-1}) \mathbf{P}  \Big) \mathbf{A}^T + \mathbf{W}	 \notag \\
	 &\geq \mathbf{A} \Big( \mathbf{P} - c(\lambda)\kappa   \mathbf{P}  \Big) \mathbf{A}^T + \mathbf{W} \label{sad892211}
\end{align}
where $\kappa\triangleq \max_{x\in \mathbb{R}^{L \times 1}} \frac{x^T  \mathbf{P} \mu_{max}( \mathbf{P}^{-1}) \mathbf{P}x}{x^T  \mathbf{P} x}>0$.  From  (\ref{sad892211}), we have   $ \mathbf{P} \geq \mathbf{Z}$, where $\mathbf{Z}$ satisfies $\mathbf{Z} = \mathbf{A} \Big(\mathbf{Z} - c(\lambda)\kappa  \mathbf{Z}  \Big) \mathbf{A}^T + \mathbf{W} $ \cite{lypuncsd}. Suppose we are given an  $\lambda$ such that $\sqrt{1-c(\lambda)\kappa}\mathbf{A}$ is stable. Then, we have $\mathbf{Z}=\sum_{k=0}^\infty (1-c(\lambda)\kappa)^k\mathbf{A}^kW (\mathbf{A}^T)^k\geq \mu_{max}(\mathbf{W}) \sum_{k=0}^\infty \left(1-c(\lambda)\kappa\right)^k\mathbf{A}^k (\mathbf{A}^T)^k=\mathcal{O}\left(\frac{1}{c(\lambda)}\right)\mathbf{I}=\mathcal{O}\left(\frac{\exp(\lambda)}{\lambda^d}\right)$ as $\lambda$ increases. Substituting this into the MSE upper bound in (\ref{msebound}), we have 
\begin{align}
	 &\text{Tr}\big(   \mathbf{P} - \mathbf{P}G(\mathbf{P}, \bar{F}, \lambda)\mathbf{P} \big) \notag \\
	  & \geq   \text{Tr}\big(    \mathbf{P} - c(\lambda)\kappa   \mathbf{P}  \big) \geq (1-c(\lambda)\kappa )  \text{Tr}\big(    \mathbf{P}  \big) \notag \\
	 &\geq  (1-c(\lambda)\kappa )  \text{Tr}\big(    \mathbf{Z}  \big) =\mathcal{O}\left(\frac{\exp(\lambda)}{\lambda^d}\right)
\end{align}
where the first inequality   follows  the last inequality in  (\ref{sad892211}).

\vspace{-0.2cm}


\begin{thebibliography}{1}

\bibitem{coup1}
J. P. Hespanha, P. Naghshtabrizi, and Y. Xu, ``A survey of recent results in networked control systems," \emph{Proc. IEEE}, vol. 95, no. 1, pp. 138--162, Jan. 2007.

\bibitem{mimo1}
I. E. Telatar, ``Capacity of multi-antenna Gaussian channels," \emph{Eur. Trans. Telecommun.,}, vol. 10, no. 6, pp. 585--595, Nov. 1999.


\bibitem{selec1}
R. W. Heath and D. J. Love,  ``Multimode antenna selection for spatial multiplexing systems with linear receivers," \emph{IEEE Trans. Signal Process.}, vol. 53, no. 8, pp. 3042--3056, 2005.


\bibitem{optMISO3}
I. H. Kim and D. J. Love, ``On the capacity and design of limited feedback multiuser MIMO uplinks," \emph{IEEE Trans. Inf. Theory}, vol. 54, no. 10, pp. 4712--4724, 2008.

\bibitem{survey2}
G. N. Nair,  et al., ``Feedback control under data rate constraints: An overview,"  \emph{Proc. IEEE}, vol. 95, no. 1, pp. 108--137, 2007.
\txtblue{\bibitem{book1}
S. Y\" uksela and T. Ba\c{s}ar,  \emph{Stochastic Networked Control Systems: Stabilization and Optimization under Information Constraints}.  \ Boston, MA: Birkh\" auser, 2013.}




\bibitem{dmc2}
S. Tatikonda and S. K. Mitter, ``Control under communication constraints,"  \emph{IEEE Trans. Autom. Control}, vol. 49, no. 7, pp. 1056--1068, Jul. 2004.


\bibitem{packetloss2}
K. You, M. Fu, and L. Xie, ``Mean square stability for Kalman filtering with Markovian packet losses," \emph{Automatica}, vol. 47, no. 12, pp. 2647--2657, 2011.



\bibitem{awgn1}
J. S. Freudenberg, R. H. Middleton, and V. Solo, ``Stabilization and disturbance attenuation over a Gaussian communication channel," \emph{IEEE Trans. Automat. Contr.}, vol. 55, pp. 795--799, 2010.
\txtblue{\bibitem{ref1}
A. A.  Zaidi,  et al., ``Stabilization and control over Gaussian networks," \emph{Information and Control in Networks.}  \ Springer International Publishing, pp. 39--85, 2014.}

\bibitem{rate2}
S. Tatikonda, A. Sahai, and S. Mitter, ``Stochastic linear control over a communication channel," \emph{IEEE Trans. Autom. Control}, vol. 49, no. 9, pp. 1549--1561, 2004.



\bibitem{Qli2}
L. Qiu, G. Gu, and W. Chen, ``Stabilization of networked multi-input systems with channel resource allocation,"  \emph{IEEE Trans. Autom. Control}, vol. 58, no. 3, pp. 554--568, 2003.

\bibitem{mdpcref2}
D. P. Bertsekas, \emph{Dynamic Programming and Optimal Control, 3rd ed}. Massachusetts: Athena Scientific, 2007.


\bibitem{nodualeffect}
C. Ramesh, H. Sandberg, L. Bao, and K. H. Johansson, ``On the dual effect in state-based scheduling of networked control systems," in \emph{Proc. Amer. Control Conf.}, 2011.

\bibitem{onoff2}
A. Molin and H. Sandra,  ``On LQG joint optimal scheduling and control under communication constraints,"  in \emph{in Proc. 48th IEEE Conf. Decis. Control},, pp. 5832--5838, Dec. 2009.

\bibitem{mdp1}
K. Gatsis, A. Ribeiro, and G. J. Pappas, ``Optimal power management in wireless control systems," in \emph{Proc. Amer. Control Conf.}, pp. 1562--1569, 2013.
\txtblue{
\bibitem{ref2}
T. Ba\c{s}ar and R. Bansal, ``Optimum design of measurement channels and control policies for linear-quadratic stochastic systems," \emph{European Journal of Operational Research}, vol. 73, no. 2, pp. 226--236, 1994.}


\bibitem{hjbs}
N. V. Krylov, ``The rate of convergence of finite-difference approximations for Bellman equations with Lipschitz coefficients," \emph{Applied Mathematics \& Optimization}, vol. 52, no. 3, pp. 365--399, 2005.
\txtblue{
\bibitem{ref3}
T. Ba\c{s}ar, ``A trace minimization problem with applications in joint estimation and control under nonclassical information," 
\emph{Journal of Optimization Theory and Applications}, vol. 31, no. 3, pp. 343--359, 1980.
\bibitem{para1}
W. Chen and L. Qiu,  ``Stabilization of multirate networked control systems," \emph{2011 50th IEEE Conference on Decision and Control and European Control Conference}, pp. 5274--5280, 2011.
\bibitem{para2}
N. Xiao, X. Lihua, and L. Qiu, ``Feedback stabilization of discrete-time networked systems over fading channels," \emph{IEEE Trans. Autom. Control}, vol. 57, no. 9, pp. 2176--2189, 2012.
\bibitem{ci2}
A. Nayyar, A. Mahajan, and D. Teneketzis, ``Optimal control strategies in delayed sharing information structures," \emph{IEEE Trans. Autom. Control}, vol. 56, no. 7, pp.  1606--1620,  2011.
\bibitem{ci3}
Y. Serdar, ``Stochastic nestedness and the belief sharing information pattern," \emph{IEEE Trans. Autom. Control}, vol. 54, no. 12, pp.  2773--2786,  2009.
\bibitem{ci4}
C. Y. Chong, M. Athans, ``On the periodic coordination of linear stochastic systems," \emph{Automatica}, vol. 12, no. 4, pp. 321--335, 1976.}

\bibitem{dualeffect}
Y. Bar-Shalom and E. Tse, ``Dual effect, certainty equivalence, and separation in stochastic control," \emph{IEEE Trans. Autom. Control}, vol. 19, no. 5, pp. 494--49, 1974.


\bibitem{automobile}
L. Zhang, H. Gao, and O. Kaynak, ``Network-induced constraints in networked control systemsÑA survey," \emph{IEEE Trans. Ind. Informat.}, vol. 9, no. 1, pp. 403--416, 2013.

\bibitem{white}
T. Kailath,  ``An innovations approach to least-squares estimation--Part I: Linear filtering in additive white noise," \emph{IEEE Trans. Autom. Control}, vol. 13, no. 6, pp. 646--655, 1968.

\bibitem{sampleplant}
M. Micheli and M. I. Jordan, ``Random Sampling of a continuous- time stochastic dynamical system," in \emph{Proc. 15th Intl. Symposium on the Mathematical Theory of Networks and Systems (MTNS)}, Aug. 2002.

\bibitem{datastream}
D. J. Love and R. W. Heath, ``Limited feedback unitary precoding for spatial multiplexing systems," \emph{IEEE Trans. Inf. Theory}, vol. 51, no. 8, pp. 2967--2976, 2005.
\txtblue{\bibitem{ncsmimo2}
L. Qiu, G. Gu, and W. Chen, ``Stabilization of networked multi-input systems with channel resource allocation", \emph{IEEE Trans. Autom. Control}, vol. 58, no. 3, pp. 554--568, Mar. 2013.
\bibitem{fd1}
S. Dey, A. Leong, and J. Evans, ``Kalman filtering with faded measurements," \emph{Automatica}, vol. 45, no. 10, pp. 2223--2233, Oct. 2009.}
\txtblue{\bibitem{uplink}
J. Jose, et al., ``Pilot contamination and precoding in multi-cell TDD systems,"  \emph{IEEE Trans. Wireless Commun.}, vol. 10, no. 8, pp. 2640--2651,  2011.}

\bibitem{training}
S. Ohno and G. B. Giannakis, ``Capacity maximizing MMSE-optimal pilots for wireless OFDM over frequency-selective block Rayleigh-fading channels," \emph{IEEE Trans. Inf. Theory}, vol. 50, no. 9, pp. 2138--2145, 2004.

\bibitem{poweronl}
L. Shi, P. Cheng, and J. Chen, ``Sensor data scheduling for optimal state estimation with communication energy constraint," \emph{Automatica}, vol. 47, no. 8, pp. 1693--1698, 2011.


\bibitem{ackf}
S. L. Goh and D. P. Mandic, ``An augmented extended Kalman filter algorithm for complex-valued recurrent neural networks," \emph{Neural Computation}, vol. 19, no. 4, pp. 1039--1055, 2007.


\bibitem{mdpsurvey}
V. K. N. Lau and Y. Cui,  "Delay-optimal power and subcarrier allocation for OFDMA systems via stochastic approximation,"  \emph{IEEE Trans. Wireless Commun.}, vol. 9, no. 1, pp. 227--233, 2010.


\bibitem{dominmethod}
C. M. Bender and S. A. Steven,  \emph{Advanced Mathematical Methods for Scientists and Engineers I: Asymptotic Methods and Perturbation Theory, Vol. 1}. \emph{Springer}, 1999.






















\bibitem{baseline1}
N. Ravindran, and N. Jindal, ``Limited feedback-based block diagonalization for the MIMO broadcast channel," \emph{IEEE J. Sel. Areas Commun.}, vol.  26, no. 8, pp. 1473--1482, 2008.
\txtblue{\bibitem{adp}
H. Xu, S. Jagannathan, and F. L. Lewis, ``Stochastic optimal control of unknown linear networked control system in the presence of random delays and packet losses," \emph{Automatica}, vol.  48, no. 6, pp. 1017--1030, 2012.
\bibitem{adpp}
J. N. Tsitsiklis and B. V. Roy, ``Average cost temporal-difference learning," \emph{Automatica}, vol. 35, no. 11, pp. 1799--1808,  1999.
\bibitem{asdacite}
A. Molin and S. Hirche, ``On the optimality of certainty equivalence for event-triggered control systems," \emph{IEEE Trans. Autom. Control}, vol. 58, no. 2, pp. 470--474, 2013.}
\txtblue{\bibitem{hinidhsd}
P. Tseng, ``Solving $h$-horizon, stationary Markov decision-problems in time proportional to $\log(h)$," \emph{Oper. Res. Lett.}, vol. 9, no. 5, pp. 287--297, 1990.}

\bibitem{eigenvaluesds}
C. G. Khatri, ``Non-central distribution of i-th largest characteristic roots of three matrices concerning complex multivariate multivariate normal populations," \emph{J. Institute of Ann. Statistical Math.}, vol. 21 pp. 23--32, 1969.

\bibitem{eigandist}
R. Kwan, C. Leung, and P. Ho,``Distribution of ordered eigenvalues of Wishart matrices," \emph{IEE Electronic Letters,}, vol. 43, no. 5, pp. 277--279, Mar. 2007.

\bibitem{stablity1}
P. Bougerol,  "Almost sure stabilizability and Riccati's equation of linear systems with random parameters," \emph{SIAM J. Control Optim.}, vol. 33, no. 3, pp. 702-717, 1995.


\bibitem{lypuncsd}
J. Daafouz, , P. Riedinger, and C. Iung, ``Stability analysis and control synthesis for switched systems: a switched Lyapunov function approach," \emph{IEEE Trans. Autom. Control}, vol. 47, no. 11, pp. 1883--1887, 2002.

\end{thebibliography}
\end{document}